\newcommand{\N}{\mathds{N}}
\newcommand{\R}{\mathds{R}}
\newcommand{\F}{\mathds{F}}
\newcommand{\len}[1]{\mathop{\text{len}}\left(#1\right)}
\newcommand{\eps}{\varepsilon}
\newcommand{\set}[1]{\left\{#1\right\}}
\newcommand{\abs}[1]{\left|#1\right|}
\newcommand{\dens}{\text{dens}}
\newcommand{\poly}{\text{poly}}
\newcommand{\LN}{\mathcal{L}_N}
\newcommand{\Y}{\mathcal{Y}}
\newcommand{\X}{\mathcal{X}}
\newcommand{\cc}[1]{\textsc{#1}}
\newcommand{\dtime}{\cc{Dtime}}
\newcommand{\size}{\text{size}}
\renewcommand{\P}{\textsc{P}}
\newcommand{\NP}{\textsc{NP}}
\newcommand{\exptime}{\textsc{Exptime}}
\newcommand{\ceil}[1]{\left\lceil#1\right\rceil}
\newcommand{\floor}[1]{\left\lfloor#1\right\rfloor}
\newcommand{\To}{\Rightarrow}
\newcommand{\TM}{\mathcal{TM}}
\newcommand{\C}{\mathcal{C}_{\poly(\ell)}}
\newcommand{\asgeq}{\geq_{\text{asymp}}}
\newtheorem{thm}{Theorem}[section]{\bfseries}{\rmfamily}
\newtheorem{lem}[thm]{Lemma}{\bfseries}{\rmfamily}
{\bfseries}{\rmfamily}
\newtheorem{cor}[thm]{Corollary}{\bfseries}{\rmfamily}
\newtheorem{defn}[thm]{Definition}{\bfseries}{\rmfamily}
{\bfseries}{\rmfamily}
\newtheorem{rem}[thm]{Remark}{\itshape}{\rmfamily}
\newtheorem{assumption}[thm]{Assumption}{\bfseries}{\rmfamily}
{\bfseries}{\rmfamily}
\title{On the Existence of Weak One-Way Functions}
\author{Stefan Rass\thanks{LIT Secure and Correct Systems Lab, Johannes Kepler University Linz, and Institute for Applied Informatics and Cybersecurity, University of Klagenfurt, email: stefan.rass@jku.at}}
\begin{document}

\maketitle

\begin{abstract}
This note is an attempt to unconditionally prove the existence of weak
\acp{OWF}. Starting from a provably intractable decision problem $L_D$
(whose existence is nonconstructively assured from the well-known discrete
Time Hierarchy Theorem from complexity theory), we construct another
provably intractable decision problem $L\subseteq \set{0,1}^*$ that has its
words scattered across $\set{0,1}^\ell$ at a relative frequency $p(\ell)$,
for which upper and lower bounds can be worked out. The value $p(\ell)$ is
computed from the density of the language within $\set{0,1}^\ell$ divided
by the total word count $2^\ell$. It corresponds to the probability of
retrieving a yes-instance of a decision problem upon a uniformly random
draw from $\set{0,1}^\ell$. The trick to find a language with known bounds
on $p(\ell)$ relies on switching from $L_D$ to $L_0:=L_D\cap L'$, where
$L'$ is an easy-to-decide language with a known density across
$\set{0,1}^*$. In defining $L'$ properly (and upon a suitable G\"odel
numbering), the hardness of deciding $L_D\cap L'$ is inherited from $L_D$,
while its density is controlled by that of $L'$. The lower and upper
approximation of $p(\ell)$ then let us construct an explicit threshold
function (as in random graph theory) that can be used to efficiently and
intentionally sample yes- or no-instances of the decision problem
(language) $L_0$ (however, without any auxiliary information that could
ease the decision like a polynomial witness). In turn, this allows to
construct a weak \ac{OWF} that encodes a bit string $w\in\set{0,1}^*$ by
efficiently (in polynomial time) emitting a sequence of randomly
constructed intractable decision problems, whose answers correspond to the
preimage $w$.
\end{abstract}

\pagebreak

\tableofcontents

\section{Preliminaries and Notation}\label{sec:preliminaries}
Let $\Sigma=\set{0,1}$ be the alphabet over which our strings and encodings
will be defined using regular expression notation. A subset
$L\subseteq\Sigma^*$ is called a \emph{language}. Its complement set (w.r.t.
$\Sigma^*$) is denoted as $\overline L$. The number of bits constituting the
word $w\in\Sigma^*$ is denoted as $\len{w}$, and $w\in\Sigma^*$ can be
explicitly written as a string $w=b_1b_2\ldots b_{\len{w}}$ of bits
$b_i\in\set{0,1}$ (in regular expression notation). The symbol
$(w)_2=\sum_{i=0}^{\len{w}-1}2^i\cdot b_{\len{w}-i}$ is the integer obtained
by treating the word $w\in\set{0,1}^*$ as a binary number, with the
convention of the least significant bit is located at the right end of $w$.

The symbols $\abs{w}$ or $\abs{W}$ will exclusively refer to absolute values
if $w$ is a number (always typeset in lower-case) or cardinality if $W$ is a
set (always written in upper-case)\footnote{We use the symbol $\len{w}$ to
avoid confusion with the word length that is elsewhere in the literature
commonly denoted as $\abs{w}$ too.}. In the following, we assume the reader
to be familiar with \acp{TM} and circuit models of computation. Our
presentation will thus be confined to the minimum of necessary detail, based
on the old yet excellent account of \cite{Hopcroft1979}.

Circuits are here understood as a network of interconnected logical gates,
all of which have a constant maximal number of input signals (bounded
fan-in). For a circuit $C$, we write $\size(C)$ to mean the number of gates
in $C$ (circuit complexity). Formally, the circuit is represented as a
directed acyclic graph, whose nodes are annotated with the specific functions
that they compute (logical connectives, arithmetic operations, etc.). Both,
\acp{TM} and circuits will be designed as decision procedures for a language
$L$; the output is hence a single 1 or 0 bit interpreted as either ``yes'' or
``no'' for the decision problem $w\stackrel ?\in L$ upon the input word $w$.

A complexity class is a set of languages that are decidable within the same
time-limits. Concretely, for a \ac{TM} $M$, let $time_M(w)$ denote the number
of transitions that $M$ takes to halt on input $w$. A language $L$ is said to
be in the complexity class $\dtime(t)$, if a deterministic \ac{TM} exists
that outputs ``yes'' if $w\in L$ or ``no'' if $w\notin L$, on input $w$
within time $time_M(w)\leq t(\len w)$. The language $L(M)$ decided by a
\ac{TM} $M$ is defined as the set of all words $w\in\Sigma^*$ that $M$
accepts by outputting ``yes'' (or any equivalent representation thereof). A
function $f$ is called \emph{fully time-constructible}, if a \ac{TM} $M_f$
exists for which $time_{M_f}(w)=f(\len{w})$ for all words $w\in\Sigma^*$.

Finally, we assume $0\notin\N$ and let all logarithms have base 2.

\section{One-Way Functions}
Our preparatory exposition of \ac{OWF} is based on the account of
\cite[Chp.5]{Zimand2004}. Throughout this work, the symbol $\poly(\ell)$ will
denote different (and not further named) univariate polynomials evaluated at
$\ell$. Throughout this work, and not explicitly mentioned hereafter, we will assume a polynomial $p$ to always satisfy $\lim_{\ell\to\infty}p(\ell)=\infty$. As a reminder, we will write $p\asgeq 0$ to mean that the polynomial $p$ is ``asymptotically larger than'' $0$. We call a function $f:\Sigma^*\to\Sigma^*$ \emph{length regular}, if
$\len{w_1}=\len{w_2}$ implies $\len{f(w_1)}=\len{f(w_2)}$. The function
$f_\ell$ is defined by restricting $f$ to inputs of length $\ell$, i.e.,
$f_\ell:=f|_{\Sigma^{\ell}}$. If $f$ is length regular, then for any
$\ell\in\N$, there is an integer $\ell'\leq\poly(\ell)$ so that
$f_\ell:\Sigma^{\ell}\to\Sigma^{\ell'}$. If the converse relation
$\ell\leq\poly(\ell')$ is also satisfied, then we say that $f$ has
\emph{polynomially related input and output lengths}. This technical
assumption is occasionally also stated as the existence of an integer $k$ for
which $(\len{w})^{1/k}\leq \len{f(w)}\leq (\len{w})^k$. It is required to
preclude trivial and uninteresting cases of one-way functions that would
shrink their input down to exponentially shorter length, so that any
inversion algorithm would not have enough time to expand its input up to the
original size. Polynomially related input and output lengths avoid this
construction, which is neither useful in cryptography nor in complexity
theory \cite{Zimand2004}.

With this preparation, we can state the general definition of one-way
functions, for which we prove non-emptiness in a particular special case
(Definition \ref{def:weak-owf}):

\begin{defn}[one-way function; cf. \cite{Zimand2004}]\label{def:owf}
Let $\varepsilon:\N\to[0,1]$ and $S:\N\to\N$ be two functions that are
considered as parameters. A length regular function $f:\Sigma^*\to\Sigma^*$
with polynomially related input and output lengths is a
\emph{$(\eps,S)$-one-way function}, if both of the following conditions are
met:
\begin{enumerate}
  \item There is a deterministic polynomial-time algorithm $M$ such that,
      for all $w\in\Sigma^*$, $M(w)=f(w)$
  \item For all sufficiently large $\ell$ and for any circuit $C$ with
      $\size(C)\leq S(\ell)$,
      \begin{equation}\label{eqn:owf-inversion-error}
        \Pr_{w\in\Sigma^\ell}\left[C(f_\ell(w))\in f_\ell^{-1}(f_\ell(w))\right]<\eps(\ell)
      \end{equation}
\end{enumerate}
\end{defn}
Observe that Definition \ref{def:owf} does not require $f$ to be a bijection
(we will exploit this degree of freedom later).

In Definition \ref{def:owf}, we can w.l.o.g. replace the deterministic
algorithm to evaluate an \ac{OWF} by a probabilistic such algorithm, upon the
understanding of a probabilistic \ac{TM} as a particular type of
\emph{nondeterministic} \ac{TM} that admits at most two choices per
transition \cite{Sipser2013}. This creates a total of $\leq 2^k$ execution
branches over $k$ steps in time. Assuming a uniformly random bit
$b\in\set{0,1}$ to determine the next configuration (where the transition is
ambiguous), we can equivalently think of the probabilistic \ac{TM} using a
total of $k$ stochastically independent bits (denoted by $\omega$) to define
one particular execution branch $B$, with likelihood
$\Pr_{\omega}[B]=2^{-k}$. In this notation, $\omega\in\set{0,1}^k$ is an
auxiliary string that, for each ambiguous transition, pins down the next
configuration to be taken. So we can think as a probabilistic \ac{TM} to act
\emph{deterministically} on its input word $w$ \emph{and} an \emph{auxiliary
input} $\omega\in\set{0,1}^k$, whose bits are chosen uniformly and
stochastically independent. This view of probabilistic \ac{TM} as
deterministic \ac{TM} with auxiliary input will become important in later
stages of the proof.

For cryptographic purposes, we are specifically interested in strong one-way
functions, which are defined as follows:
\begin{defn}[strong one-way function; cf. \cite{Zimand2004}]\label{def:strong-owf}
A length-regular function $f:\Sigma^*\to\Sigma^*$ with polynomially related
input and output lengths is a \emph{strong one-way function} if for every
polynomial $p\asgeq 0$, $f$ is $(\frac 1{p(\ell)},p(\ell))$-one-way.
\end{defn}
Actually, a much weaker requirement can be imposed, as strong one-way
functions can efficiently be constructed from weak one-way functions (see
\cite[Thm.5.2.1]{Zimand2004} for a proof), defined as:
\begin{defn}[weak one-way function; cf. \cite{Zimand2004}]\label{def:weak-owf}
A length-regular function $f:\Sigma^*\to\Sigma^*$ with polynomially related
input and output lengths is a \emph{weak one-way function} if there is a
polynomial $q\asgeq 0$ such that for any polynomial $p$, $f$ is $(1-\frac
1{q(\ell)},p(\ell))$-one-way.
\end{defn}

Our main result is the following, here stated in its short version:
\begin{thm}\label{thm:weak-owf-exist}
Weak one-way functions exist (unconditionally).
\end{thm}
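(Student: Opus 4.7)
The plan is to follow the roadmap sketched in the abstract and turn it into a concrete construction. First I would invoke the deterministic Time Hierarchy Theorem to secure a language $L_D\in\dtime(f)\setminus\dtime(g)$ for a suitable time-constructible gap $g\ll f$; this gives a ``provably intractable'' $L_D$ against a fixed polynomial class, which will be the source of all hardness. Since $L_D$ itself comes with no control whatsoever over how its yes-instances are distributed across $\Sigma^\ell$, the next step is to pick an auxiliary language $L'$ whose density $|L'\cap\Sigma^\ell|/2^\ell$ is known and can be tuned (e.g.\ words whose G\"odel number lies in a prescribed, easily-testable arithmetic set), and to work with $L_0:=L_D\cap L'$. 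A short argument shows that $L_0$ inherits intractability from $L_D$ (otherwise a decider for $L_0$ combined with the polynomial-time decider for $L'$ would decide $L_D$ too quickly, contradicting the Hierarchy Theorem), while the density $p(\ell):=|L_0\cap\Sigma^\ell|/2^\ell$ is sandwiched between bounds determined by the density of $L'$.

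With $p(\ell)$ pinched between explicit bounds, I would next write down a threshold function $t(\ell)$ in the spirit of random graph theory: sampling $t(\ell)$ words of length $\ell$ uniformly at random produces, with overwhelming probability, at least one yes-instance and at least one no-instance of $L_0$. Combined with a polynomial-time sieve against $L'$, this gives two efficient samplers $\mathsf{Yes}_\ell,\mathsf{No}_\ell$ that, using only randomness and without access to any $L_D$-oracle, emit a word of length $\ell$ which \emph{is} (resp.\ \emph{is not}) in $L_0$, yet carries no polynomial witness for that fact. The one-way function $f$ is then defined by: on input $w=b_1\cdots b_n\in\Sigma^n$, read a polynomially-long block of auxiliary random bits $\omega$, parse $\omega$ as $n$ independent seeds, and output the concatenation $f(w,\omega):=x_1\#x_2\#\cdots\#x_n$ where $x_i:=\mathsf{Yes}_\ell(\omega_i)$ if $b_i=1$ and $x_i:=\mathsf{No}_\ell(\omega_i)$ if $b_i=0$, for an appropriate $\ell=\ell(n)$. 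By the view of probabilistic \acp{TM} as deterministic machines with auxiliary input emphasized earlier, this fits Definition~\ref{def:owf}. Length regularity and polynomially related input/output lengths follow by construction.

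For forward efficiency, $f$ is clearly polynomial-time since both $\mathsf{Yes}_\ell$ and $\mathsf{No}_\ell$ run in $\poly(\ell)$ time. For the inversion bound, suppose a circuit $C$ of size $S(\ell)=\poly(\ell)$ could, with probability at least $1-1/q(\ell)$ over $w$ (and the internal randomness baked into the output of $f$), recover \emph{any} preimage under $f_\ell$. Since every preimage of $f(w,\omega)$ must label each $x_i$ with the correct bit $b_i\in\{0,1\}$ indicating membership in $L_0$, such a circuit, on a single block $x_i$, would be deciding $L_0$ on inputs drawn from the marginal distribution of $\mathsf{Yes}_\ell/\mathsf{No}_\ell$. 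A standard hybrid/averaging argument over the $n$ coordinates then converts the circuit into a polynomial-size decider for $L_0$ that succeeds on a noticeable fraction of $\Sigma^\ell$; by the non-uniform version of the Hierarchy-Theorem argument, this contradicts intractability of $L_0$ for all but finitely many $\ell$, provided $1/q(\ell)$ is chosen large enough compared to the density gap of $L_0$.

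The main obstacle, and where the bookkeeping has to be done carefully, is the interplay between (i) the density bounds on $p(\ell)$, which must simultaneously be large enough that $\mathsf{Yes}_\ell$ terminates in polynomial time and small enough that $\mathsf{No}_\ell$ does too, and (ii) the translation from circuit-style non-uniform inversion to a uniform contradiction with the Time Hierarchy Theorem. In particular, the hardness of $L_D$ is inherently uniform, whereas Definition~\ref{def:owf} quantifies over circuits; bridging this gap (either by invoking a non-uniform Hierarchy Theorem or by arguing that the auxiliary randomness can be fixed to a ``good'' advice string) is the delicate step, and the constant $q$ in Definition~\ref{def:weak-owf} will be the quantitative fingerprint of exactly how wide the density window around $p(\ell)$ is made.
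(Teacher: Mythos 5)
There are two genuine gaps, both at the heart of the construction rather than in peripheral bookkeeping.

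First, your samplers $\mathsf{Yes}_\ell,\mathsf{No}_\ell$ as described cannot exist. You write that, after filtering with a polynomial-time sieve against $L'$, they ``emit a word of length $\ell$ which \emph{is} (resp.\ \emph{is not}) in $L_0$.'' But a sieve against $L'$ only certifies membership in $L'$, not in $L_0=L_D\cap L'$; identifying which of the sieved words actually lies in $L_0$ would require deciding $L_0$, which is exactly the superpolynomially hard problem. Moreover, the density bounds here are $d\cdot x^{1/\beta}\leq\dens_{L_0}(x)\leq\sqrt{x}$ with $\beta\geq 6$ versus $\dens_{L'}(x)\in\Theta(\sqrt{x})$, so a random $L'$-word is in $L_0$ with only a vanishing probability. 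The paper sidesteps this by never emitting a single word at all: each coordinate of the image is a whole \emph{set} $W$ of the same cardinality $m$ in both the yes- and no-branches, and the semantic bit is $[W\cap L_0\neq\emptyset]$. The Bollob\'as--Thomason threshold is applied with a fixed sample size $m$ against urns of two different sizes ($n$ vs.\ $n^{2\beta}$), so that $m$ overshoots the threshold in one case and undershoots it in the other; the element of $L_0$, when present, is never singled out. Your description collapses this to ``one word,'' which is precisely the thing the construction is designed to avoid having to produce.

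Second, you flag but do not resolve the uniform-versus-nonuniform mismatch, and the two resolutions you gesture at (``a non-uniform Hierarchy Theorem,'' or ``fixing the randomness to a good advice string'') are not what makes the paper work, nor is there an off-the-shelf theorem of either kind that applies. The paper's actual mechanism is the deliberately wasteful \ac{TM} encoding of Section \ref{sec:tm-encoding}: a universal machine reads only the top $\ceil{\log\ell}$ bits of a length-$\ell$ input as code, so every \ac{TM} (and hence every polynomial-size circuit $C$, via a universal circuit-evaluator $M_C$) has at least $2^{\ell-\log\ell-1}$ equivalent encodings inside $\Sigma^\ell$. Diagonalization then forces $C$ to err on every one of these encodings, giving the explicit $\Pr_{w\in\Sigma^\ell}[C\neq\chi_{L_D}]\geq 1/(2\ell)$, which is the source of the polynomial $q$ in Definition \ref{def:weak-owf}. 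The inversion bound is then reduced to this via the injective reductions $\psi$ and $\varphi$ and a partial-bijectivity argument on the \emph{first} coordinate only, conditioned on the high-probability event $E_\ell$ of a correct mapping; there is no hybrid argument over all $n$ coordinates. Without the wasteful-encoding idea, the ``noticeable fraction'' you want from your averaging step has no source, and the argument does not close.
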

The rest of the paper is devoted to proving this claim.

\section{Proof Outline and Preparation}\label{sec:outline}
Given a word $w=b_1b_2\ldots b_n\in\set{0,1}^*$, the idea is to map each
1-bit into a yes-instance and each 0-bit into a no-instance of some
intractable decision problem $L_D$. The existence of a suitable language
$L_D$ is assured by the deterministic Time Hierarchy Theorem (Theorem
\ref{thm:time-hierarchy}). If the intended sampling of random yes- and
no-instances can be done in polynomial time, preserving that the decision
problem takes more than polynomial effort (on average), then we would have a
one-way function, illustrated in Figure \ref{fig:owf-construction}.

%

\begin{figure}
\centering
\subfloat[Mapping of $w$ under the \ac{OWF} $f$]{
    \includegraphics[scale=0.7]{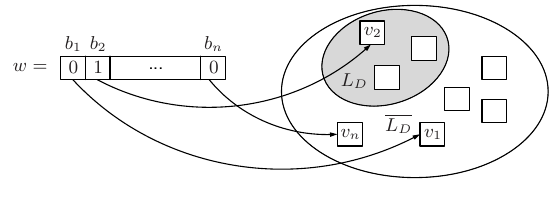}
    }\label{fig:owf-mapping}

    \subfloat[Inversion of the mapping]{
    \includegraphics[scale=0.7]{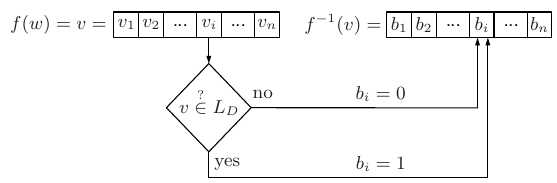}\label{fig:owf-inverse-mapping}}
\caption{\ac{OWF} construction idea}\label{fig:owf-construction}
\end{figure}

The tricky part is of course the sampling, since we cannot plainly draw
random elements and test membership in $L_D$, since this would take more than
polynomial time (by construction of $L_D$). To mitigate this, we change $L_D$
into a language $\LN$ of $N$-element sets of words, redefining the decision
problem as $W\in \LN\iff W\cap L_D\neq \emptyset$. That is, an element $W$ as
being a set of words, is in $\LN$ if and only if at least one of its members
is from $L_D$, but we do not demand any knowledge about which element that
is.

The so-constructed language $\LN$ has the following properties (proven as
Lemma \ref{lem:l-star-hardness}):
\begin{enumerate}
  \item It is at least as difficult to decide as $L_D$, since to decide
      $W\stackrel? \in \LN$, we either have to classify one entry in $W$ as
      being from $L_D$, or otherwise certify that all elements in $W$ are
      outside $L_D$ (equally difficult as deciding $L_D$, since
      deterministic complexity classes are closed under complement).
  \item The property $W\in\LN$ is monotone, in the sense that $W\in\LN$
      implies $V\in\LN$ for all $V\supseteq W$.
\end{enumerate}
The monotony admits the application of a fact that originally rooted in
random graph theory (Theorem \ref{thm:threshold-function}), which informally
says that ``every monotonous property has a threshold''. Intuitively (with a
formal definition of the threshold function being part of the full statement
of Theorem \ref{thm:threshold-function}), a threshold is a function $m$,
whose purpose is most easily explained by resorting to an urn experiment:
consider an urn of $N$ balls in total, $n$ among them being white and $N-n$
balls being black. The threshold depends on $N$ and $p=n/N$, and relates to
drawing from the urn without replacement as follows:
\begin{itemize}
  \item If we draw (asymptotically) less than $m(N,p)$ balls from the urn,
      then the chance to get a white ball asymptotically vanishes as
      $N\to\infty$.
  \item If we draw (asymptotically) more than $m(N,p)$ balls from the urn,
      then the probability to get at least one white ball goes to 1 as
      $N\to\infty$.
\end{itemize}
Now, let us apply this idea to our sampling problem above:
\begin{itemize}
  \item White balls represent yes-instances, i.e., word from $L_D$, and
      black balls represent no-instances, i.e., words from
      $\overline{L_D}$.
  \item The urn is a subset of $\Sigma^*$ of size $N$. To meaningfully
      define such sets with given size, we use a G\"odel numbering of words
      and define our urn to contain $N$ words corresponding to the G\"odel
      numbers $1,2,\ldots,N$. When $m$ denotes the threshold, we can get
      good chances to draw:
      \begin{itemize}
        \item a yes-instance $W$ (with at least one word from $L_D$ in
            it), by taking more than $m$ words,
        \item a no-instance $W$ (having $W\cap L_D=\emptyset$), by taking
            less than $m$ words.
      \end{itemize}
\end{itemize}
The important observation here is that the assurance of having a yes- or
no-instance is given without any explicit testing, yet at the cost of being
only probabilistic. As a technical detail, we need to assure that whether we
have a yes- or no-instance must not become visible by the size of $W$. This
is easily assured by exploiting some sort of relativity: since the threshold
depends on the size of the urn, we can under- or overshoot it by varying the
size of the urn, while leaving the number $\abs{W}$ of elements constant.
This creates equally sized instances $W$ in both cases, with their answer
only determined by the size of the urn; an information that does not show up
in the output of our \ac{OWF}.

Asymptotically, we are almost there, since we already have some useful
properties:
\begin{itemize}
  \item We can sample yes- and no-instances with probability 1
      (asymptotically),
  \item without having to decide $L_D$ or $\LN$ explicitly, and
  \item the sampling could (yet to be verified) run in polynomial time,
      provided that the threshold function behaves properly.
\end{itemize}
So, our next task is working out the threshold function, which depends on the
frequency of words from $L_D$ occurring along the (canonic) enumeration of
$\Sigma^*$ induced by the G\"odel numbering. Alas, the diagonalization
argument that gives us the (initial) language $L_D$ is non-constructive and
in particular gives no clue on how often words from $L_D$ appear in
$\Sigma^*$.

\begin{rem}
Here, in throughout the rest of this work, when we talk about the
``scattering'' of a language $L$, we mean the exact locations of its words on
the line $\N$ of integers. Likewise, the ``density'' of $L$ merely counts the
absolute frequency of words in $L$ up to a certain limit.
\end{rem}

Towards getting an approximate count of words in $L_D$ inside the set of $N$
words with G\"odel numbers $1,2,\ldots,N$, we use a trick: we intersect $L_D$
with a language of known density (formally defined in Section
\ref{sec:density}) that is reducible to $L_D$. Our language of choice
contains all square integers, and defines a new base language $L_0=L_D\cap
SQ$ for $\LN$. This language is at least as hard to decide as $L_D$ (Lemma
\ref{lem:hard-language}) and will replace $L_D$ in the above construction. It
has some important new features:
\begin{itemize}
  \item It gives upper bounds (Lemma \ref{lem:density-upper-bound}) on the
      number of words up to G\"odel number $N$ (trivially, since there
      cannot be more words in $L_D\cap SQ$ than in $SQ$, and the latter
      count is simple).
  \item It also gives lower bounds on the word count, based on a polynomial
      reduction of $SQ$ to $L_D$, illustrated in Figure
      \ref{fig:lower-bound-illustration}.
\end{itemize}
With this, we can complete the sampling procedure along the following steps
(expanded in Section \ref{sec:threshold-sampling}):
\begin{enumerate}
  \item Work out the threshold function explicitly (in fact, we will derive
      upper and lower bounds for it in expression \eqref{eqn:mu-bounds})
  \item Analyze the growth of the threshold function to assure that the
      number of words predicted for the sampling is meaningful (assured by
  \eqref{eqn:growth-of-m} from below, and by \eqref{eqn:mu-bounds} from
  above). Our use of the G\"odel number in connection with the threshold
  bounds lets us choose the urn size polynomial in the length of the input
  word, so that the overall sampling algorithm (sketched next) runs in
  polynomial time (Lemma \ref{lem:threshold-sampling}).
  \item Define the sampling algorithm based on the aforementioned urn
      experiment as follows:
      \begin{itemize}
        \item For a no-instance, make the urn ``large'', such that a
            selection of $\abs{W}=k$ elements will (with probability $\to
            1$) not contain any word from $L_0$ (``white ball'').
        \item For a yes-instance, make the urn ``small'' (relative to
            $k$), such that among $\abs{W}=k$ elements, we have a high
            probability $(\to 1)$ of getting a word from $L_0$.
      \end{itemize}
\end{enumerate}
We call this procedure \emph{threshold sampling}. It allows to realize the
mapping depicted in Figure \ref{fig:owf-construction}, and leaves the mere
task of verifying the properties of an \ac{OWF} according to Definition
\ref{def:weak-owf}. The evaluation of the function in polynomial time means
to repeatedly sample, each run taking polynomial time (in the number of
bits). This will directly become visible in the construction and from the
properties of the threshold.

Showing the intractability of inversion is trickier, but here we can make use
of the fact that Definition \ref{def:owf} does not require the function to be
bijective. In fact, the use of randomness in the sampling necessarily renders
our constructed mapping not bijective, but any inversion algorithm working on
the image $y=f(b_1b_2\ldots b_n)$ would necessarily also return a correct
first bit $b_1$. Taking a contraposition, this means that the chances for the
inversion to fail are at least those to screw up the computation of $b_1$
(the argument is expanded in full detail in Section
\ref{sec:weak-owf-exist}). But this is exactly how the diagonal language
$L_D$ was constructed for, in the worst case. So our last challenge is making
the worst-case appear with the desired frequency of $1-1/\poly(n)$, as
required for a weak \ac{OWF}. This is done by modifying the encoding of
Turing machines to use only a logarithmically small fraction of its input, so
as to consider a large number of inputs of the same length as equivalent (in
Section \ref{sec:p-vs-np}, we will relate this to the notion of local
checkability \cite{Arora.2007}). This (wasteful) encoding is consistent with
all relevant definitions (especially Definition \ref{def:owf}), but makes the
worst-case occur with a non-negligible frequency (as we require).

Having outlined all ingredients, let us now turn to the formal details,
starting with some preparation.

\subsection{G\"odel Numbering}\label{sec:goedel-and-density}
To meaningfully associate subsets $\set{1,2,\ldots,N}\subset\N$ with subsets
of $\Sigma^*$, let us briefly recall the concept of a G\"odel numbering. This
is a mapping $gn:\Sigma^*\to\N$ that is computable, injective, and such that
$gn(\Sigma^*)$ is decidable and $gn^{-1}(n)$ is computable for all $n\in\N$
\cite{Hermes1971}. The simple choice of $gn(w)=(w)_2$ is obviously not
injective (since $(0^nw)_2=(w)_2$ for all $n\in\N$ and all $w\in\Sigma^*$),
but this can be fixed conditional on $0\notin\N$ by setting
\begin{equation}\label{eqn:goedelnr}
gn(w) := (1w)_2.
\end{equation}
This is the G\"odel numbering that we will use throughout the rest of this
work, and it is not difficult to verify the desired properties as stated
above. Most importantly, \eqref{eqn:goedelnr} is a computable bijection
between $\N$ and $\Sigma^*$.

For the G\"odelization of \acp{TM}, let $\rho(M)\in\Sigma^*$ denote a
complete description of a \ac{TM} $M$ in string form (using some prefix-free
encoding to denote the alphabet, state transitions, etc.). The encoding that
we will use (and define in Section \ref{sec:tm-encoding}) will have the
following properties (as are commonly required; cf.
\cite{Arora2009,Hopcroft1979}):
\begin{enumerate}
  \item every string over $\set{0,1}^*$ represents \emph{some} \ac{TM}
      (easy to assure by executing an invalid code as a canonic \ac{TM}
      that instantly halts and rejects its input),
  \item every \ac{TM} is represented by infinitely many strings. This is
      easy by introducing the convention to ignore a prefix of the form
      $1^*0$ then the string representation is being executed.
\end{enumerate}
The G\"odelization of a \ac{TM} $M$, represented as $\rho(M)\in\Sigma^*$, is
then the integer $gn(\rho(M))$.

\subsection{Density Functions}\label{sec:density}
For a language $L$, we define its \emph{density function}, w.r.t. a G\"odel
numbering $gn$, as the mapping
\[
\dens_L: \N\to\N,\quad x\mapsto\abs{\set{w\in L: gn(w)\leq x}},
\]
i.e., $\dens_L(x)$ is the number of words whose G\"{o}del
number\footnote{Other definitions of the density \cite{Papadimitriou1994},
differ here by counting words up to a maximal length. This would be too
coarse for our purposes.} as defined by \eqref{eqn:goedelnr} is bounded by
$x$. The dependence of $\dens_L$ on the G\"odel numbering $gn$ can be omitted
hereafter, since there will be no second such numbering and hence no
ambiguity by this simplification of the notation.

Occasionally, it will be convenient to let $\dens_L$ send a word $v\in\Sigma^*$
to an integer $\N$, in which case we put $x:=gn(v)=(1v)_2$ in the definition of
$\dens_L$ upon an input word $v$. The density of the language $L$ will be our
technical vehicle to quantify (bound) the likelihood of drawing an element
from $L$ within a bounded set of integers $\set{1,2,\ldots,n}$ (see Lemma
\ref{eqn:density-bounded-by-argument} below), where the bound $n$ will be an
integer or a binary number coming as a string (whichever is the case will be
clear from the context).

\section{Proof of Theorem \ref{thm:weak-owf-exist}}\label{sec:proof}
The proof will cook up a weak \ac{OWF} from the ingredients outlined in
Section \ref{sec:outline}, in almost bottom up order.

\subsection{Properties of Density Functions}\label{sec:density-functions}
Our first subgoal is the ability to construct random yes- and no-instances of
a difficult decision problem. So, we first need to relate the density
function for a language $L$ to the likelihood of retrieving elements from it
upon uniformly random draws.

\begin{lem}\label{eqn:density-bounded-by-argument}
For every language $L$, the density function satisfies $\dens_L(x)\leq x$ for
all $x\in\N$.
\end{lem}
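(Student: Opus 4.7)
The plan is to exploit injectivity of the Gödel numbering together with the convention $0\notin\N$. By construction in \eqref{eqn:goedelnr}, the map $gn:\Sigma^*\to\N$ defined by $gn(w)=(1w)_2$ is an injection, and in fact a bijection onto $\N$; in particular every Gödel number satisfies $gn(w)\geq 1$.

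I would then simply observe the chain of containments
\[
\set{w\in L:gn(w)\leq x}\subseteq\set{w\in\Sigma^*:gn(w)\leq x},
\]
and, because $gn$ is injective with image contained in $\set{1,2,\ldots}$, the latter set is mapped bijectively by $gn$ into $\set{1,2,\ldots,x}$. Taking cardinalities and using monotonicity of $\abs{\cdot}$ under inclusion then yields $\dens_L(x)=\abs{\set{w\in L:gn(w)\leq x}}\leq x$.

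There is no real obstacle here: the statement is essentially a counting tautology once the injectivity of $gn$ and the convention $0\notin\N$ are spelled out. The only small subtlety worth mentioning is that without the convention $0\notin\N$, one would have to allow $gn(w)=0$ and the bound would become $x+1$; so the only thing to be careful about is to invoke the convention fixed in Section \ref{sec:preliminaries}.
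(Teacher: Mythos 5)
Your proposal is correct and rests on the same core observation as the paper's proof: injectivity of $gn$ together with the convention $0\notin\N$ forces the set $\set{w\in L : gn(w)\leq x}$ to inject into $\set{1,\ldots,x}$. The only cosmetic difference is that you argue directly via a chain of containments and a cardinality bound, while the paper phrases the same pigeonhole fact as a proof by contradiction (assuming $\dens_L(x_0)>x_0$ and deriving $gn(w_1)\geq x_0+1$ for the maximal word); your remark about why the convention $0\notin\N$ matters is a nice touch that the paper leaves implicit.
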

\begin{proof}
Assume the opposite, i.e., the existence of some $x_0$ for which
$\dens_L(x_0)>x_0$. In that case, there must be at least $x_0+1$ words
$w_1,w_2,\ldots,w_{x_0+1}$ in $L$ for which $gn(w_i)\leq x_0$ for all
$i=1,2,\ldots,x_0+1$. W.l.o.g., let $w_1$ be the word whose G\"odel number
$gn(w_1)$ is maximal. Since $gn$ is injective, all other $x_0$ words map to
distinct smaller integers, thus making $gn(w_1)\geq x_0+1$ at least. This
clearly contradicts our assumption that $gn(w_1)\leq x_0$.
\end{proof}
Lemma \ref{eqn:density-bounded-by-argument} permits the use of the density
function to define an urn experiment as follows: let the urn be
$U=\set{1,2,\ldots,n}\subset\N$, and let each element in it correspond to a
word $w\in\Sigma^*$ by virtue of $gn^{-1}$. Then the likelihood to draw an
element from $L$ addressed by a random index in $U$ is $\dens_L(n)/n$, by
counting the number of positive cases relative to all cases.

To illustrate the practical use of a density function, let us consider the
following example of a language that we will heavily use throughout this 
work. The language of \emph{integer squares} is defined as $SQ=\{y: \exists x\in\N$ such that $y=x^2\}$.
Each element $y\in SQ$ can be identified with a string (in regular
expression notation) $w_y\in 1(0\cup 1)^*\subset\Sigma^*$, for which $y=(w_y)_2$.
The G\"odel number of $w_y$ can be computed from $y$ by
$gn(w_y)=2^{\ceil{\log y}+c(y)}+y$, with the padding function
\[
    c(y) = \left\{
             \begin{array}{ll}
               0, & \hbox{if }\log y<\ceil{\log y}; \\
               1, & \hbox{if }\log y=\ceil{\log y}.
             \end{array}
           \right.
\]

Let us extend our definition of $gn$ to a mapping from $\N\to\N$, where
$gn(y)$ for $y\in\N$ is defined as $gn(y):=gn(w_y)$ with $y=(w_y)_2$. Using
the previous formula to compute $gn(y)$, note that the expression
\[
    \frac{gn(y)}y=\frac{2^{\ceil{\log y}+c(y)}+y}y = 1 + \frac{2^{\ceil{\log y}+c(y)}}y,
\]
ultimately becomes numerically trapped within the interval $[1,5]$ for
$y\to\infty$ (the lower bound is immediate; the upper bound follows from
$2^{\ceil{\log y}+c(y)}\leq 2^{1+(\log y)+ 1}=4y$). Thus,
\begin{equation}\label{eqn:square-gn-bounds}
  y\leq gn(y)\leq 5\cdot y\quad\text{for sufficiently large}~y.
\end{equation}
Moreover, it is easy to see that for $z,x\in\N$,
\begin{equation}\label{eqn:square-set-upper-bound}
\abs{\set{z^2: z^2\leq x}}=\floor{\sqrt x}.
\end{equation}

Using both facts, we discover that for any two $x,z\in\N$ that satisfy
$gn(z^2)\leq x$, also $z^2\leq gn(z^2)\leq x$ holds by
\eqref{eqn:square-gn-bounds}. Thus, $[gn(z^2)\leq x]\To[z^2\leq x]$ and hence
$\set{z^2: gn(z^2)\leq x}\subseteq\set{z^2:z^2\leq x}$. The cardinalities of
these sets satisfy the respective inequality, and
\eqref{eqn:square-set-upper-bound} gives
\begin{equation}\label{eqn:sq-upper-bound}
\dens_{SQ}(x)\leq\floor{\sqrt{x}}\leq\sqrt{x}.
\end{equation}

Conversely, $gn(z^2)\leq 5z^2$ asymptotically by \eqref{eqn:square-gn-bounds}
means that for sufficiently large $z$, $gn(z^2)\leq 5\cdot z^2\iff \frac 1
5\cdot gn(z^2)\leq z^2$. Thus, $[z^2\leq x]\To[\frac 1 5\cdot gn(z^2)\leq
x]$, and the last condition is equivalent to $gn(z^2)\leq 5\cdot x$.
Therefore, $\set{z^2:z^2\leq x}\subseteq\set{z^2:gn(z^2)\leq 5\cdot x}$, and
the cardinalities satisfy the respective inequality. It follows that
$\dens_{SQ}(5\cdot x)\in\Omega(\sqrt x)$, or after substituting and renaming
the variables, $\dens_{SQ}(x)\in\Omega(\sqrt x)$.

Summarizing our findings, we have proven:
\begin{lem}\label{lem:density-of-squares}
The language of squares $SQ=\set{y: y=x^2, x\in\N}$ has a density function
$\dens_{SQ}(x)\in\Theta(\sqrt x)$.
\end{lem}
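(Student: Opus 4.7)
The plan is to sandwich $\dens_{SQ}(x)$ between two constant multiples of $\sqrt{x}$ by comparing the G\"odel-number-bounded set $\{z^2 : gn(z^2)\leq x\}$ with the naturally counted set $\{z^2 : z^2\leq x\}$, whose cardinality is exactly $\lfloor\sqrt{x}\rfloor$ by \eqref{eqn:square-set-upper-bound}. The whole argument reduces to establishing two inclusions, one in each direction, and the link between them is the two-sided bound $y \leq gn(y)\leq 5\cdot y$ for sufficiently large $y$ stated in \eqref{eqn:square-gn-bounds}.

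First I would derive the upper bound. From the explicit formula $gn(y)=2^{\lceil\log y\rceil + c(y)}+y$ established for the square-encoded strings, the inequality $y\leq gn(y)$ is immediate because the exponential summand is positive. This gives the implication $[gn(z^2)\leq x]\Rightarrow [z^2\leq x]$, and hence the set inclusion $\{z^2 : gn(z^2)\leq x\}\subseteq\{z^2 : z^2\leq x\}$. Taking cardinalities and applying \eqref{eqn:square-set-upper-bound} yields $\dens_{SQ}(x)\leq \lfloor\sqrt{x}\rfloor\leq\sqrt{x}$, giving the $O(\sqrt{x})$ half.

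For the lower bound I would exploit the upper half of \eqref{eqn:square-gn-bounds}. Since $gn(z^2)\leq 5z^2$ asymptotically, we have $[z^2\leq x]\Rightarrow [gn(z^2)\leq 5x]$, whence $\{z^2 : z^2\leq x\}\subseteq \{z^2 : gn(z^2)\leq 5x\}$. Passing to cardinalities gives $\dens_{SQ}(5x)\geq \lfloor\sqrt{x}\rfloor\in\Omega(\sqrt{x})$, and the rescaling $x\mapsto x/5$ (absorbed into the constant of the $\Omega$-notation) yields $\dens_{SQ}(x)\in\Omega(\sqrt{x})$. Combining both bounds establishes $\dens_{SQ}(x)\in\Theta(\sqrt{x})$.

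The only technically delicate ingredient is the asymptotic constant $5$ in $gn(y)\leq 5\cdot y$; this follows from $2^{\lceil\log y\rceil + c(y)}\leq 2^{(\log y)+2}=4y$, which is the estimate I would want to double-check carefully, since the padding function $c(y)$ adds a bit precisely when $y$ is itself a power of two. Everything else is bookkeeping with cardinalities of finite sets and the injectivity of $gn$; no deeper result is needed beyond \eqref{eqn:square-gn-bounds} and \eqref{eqn:square-set-upper-bound}, both of which are already in hand.
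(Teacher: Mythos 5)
Your proposal reproduces the paper's argument almost verbatim: the same two ingredients \eqref{eqn:square-gn-bounds} and \eqref{eqn:square-set-upper-bound}, the same pair of set inclusions $\set{z^2 : gn(z^2)\leq x}\subseteq\set{z^2 : z^2\leq x}$ and $\set{z^2 : z^2\leq x}\subseteq\set{z^2 : gn(z^2)\leq 5x}$, and the same rescaling to pass from $\dens_{SQ}(5x)$ to $\dens_{SQ}(x)$. It is correct and takes essentially the same approach as the paper.
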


As announced in Section \ref{sec:outline}, we will later look at the density
of the intersection of two languages (namely $L_D\cap SQ$, where $L_D$ has
not been constructed explicitly yet). The definition of density functions
immediately delivers a useful inequality for such intersection sets: for
every two languages $L_1,L_2$, we have
\begin{equation}\label{eqn:density-of-intersections}
  \dens_{L_1\cap L_2}\leq \dens_{L_1},
\end{equation}
since there cannot be more words in $L_1\cap L_2$ than words in $L_1$ (or
$L_2$, respectively). This will enable us to bound the density of the (more
complex) intersection language in terms of the simpler (and known) density of
$SQ$. Details will be postponed until a little later.

\subsection{Encoding of Turing Machines}\label{sec:tm-encoding}
As a purely technical matter, we will adopt a specific encoding convention
for \acp{TM}. While the following facts are almost trivial, it is important
to establish them a-priori (and thus independently) of our upcoming
arguments, since the scattering and density of the languages that we
construct will depend on the chosen encoding scheme of \acp{TM}.
Specifically, we will encode a \ac{TM} $M$ into a string $\rho(M)$ as
outlined in Section \ref{sec:goedel-and-density}, with a few adaptations when
it comes to executing a code for a \ac{TM}:

\begin{itemize}
  \item When a \ac{TM} as specified by an input $w\in\Sigma^*$ is to be
      executed by a universal \ac{TM} $M_U$, then the code $\rho(M)$ that
      defines $M$'s actions is obtained by $M_U$ as follows:
      \begin{itemize}
        \item the input $w$ is treated as an integer $x=(w)_2$ in binary
            and all but the most significant $\ceil{\log(\len{w})}$ bits
            are ignored. Call the resulting word $w'$.
        \item from $w'$, we drop all preceding $1$-bits and the first
            0-bit, i.e., if $w'=1^k0v$, then $\rho(M)=v$ after discarding
            the prefix padding $1\ldots 10$.
      \end{itemize}
\end{itemize}

\begin{figure}[bh!]
  \centering
  \includegraphics[scale=0.8]{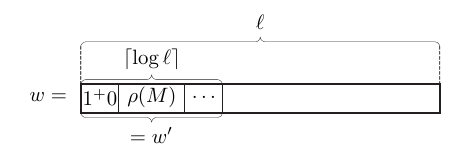}
  \caption{Encoding of Turing machines with padding}\label{fig:encoding}
\end{figure}

Although this encoding -- depicted in Figure \ref{fig:encoding} -- is
incredibly wasteful (as the code for a \ac{TM} is taken as padded with an
exponential lot of bits), it assures several properties that will become
useful at the beginning and near the end of this work:
\begin{enumerate}
\item The aforementioned mapping $w \mapsto w'$ shrinks the entirety of
    $2^{\ell}$ words in $\set{0,1}^\ell$ down to only
    $2^{\ceil{\log\ell}}\geq \ell$ distinct prefixes. Each of these admits
    a lot of $2^{\ell-\ceil{\log\ell}}$ suffixes that are irrelevant for
    the encoding of the \ac{TM}. Thus, an arbitrary word $w'$ encoding a
    \ac{TM} has at least
    \begin{equation}\label{eqn:equivalent-encodings}
    2^{\ell-\ceil{\log\ell}}\geq 2^{\ell-\log\ell-1}
    \end{equation}
    equivalents $w$ in the set $\set{0,1}^\ell$ that map to $w'$. Thus, if
    a \ac{TM} $M$ is encoded within $\ell$ bits, then
    \eqref{eqn:equivalent-encodings} counts how many equivalent codes for
    $M$ are found at least in $\set{0,1}^\ell$. This will be used in the
    concluding Section \ref{sec:weak-owf-exist}, when we establish failure
    of any inversion circuit in a polynomial number of cases (second part
    of Definition \ref{def:owf}).

\item The retraction of preceding $1$-bits creates the needed infinitude of
    equivalent encodings of \emph{every} possible \ac{TM} $M$, as we can
    embed any code $\rho(M)$ in a word of length $\ell$ for which
    $\log(\ell)>\len{\rho(M)}$. We will need this to prove the hierarchy
    theorem in Section \ref{sec:time-hierarchy-theorem}.
\end{enumerate}

\begin{rem}\label{rem:exponential-growth}
Note that exponential difference in the size of $\rho(M)$ and its
representation $w$ in fact does \emph{not} preclude the efficient execution
of $w$ as input code and data to the universal \ac{TM}, because it only
executes a logarithmically small fraction of its input code. Conversely, the
redundancy of our encoding only means that we have to reach out exponentially
far on $\N$ to see the first occurrence of a \ac{TM} with a code of given
size; this is, however, not forbidden by any of the relevant definitions.
\end{rem}

Let $\TM=\set{M_1,M_2,M_3,\ldots}$ be an enumeration of all \acp{TM} under
the encoding just described; that is, $\TM$ is the set of all $w\in\Sigma^*$
for which a \ac{TM} $M$ with encoding $\rho(M)$ exists that is embedded
inside $w$ as shown in Figure \ref{fig:encoding}. Observe that the first
$1$-bit (mandatory in our encoding) when being stripped from a word $w$ by
$gn^{-1}$ leaves the inner representation of $M$ intact (since the
$1^+0$-prefix is ignored for the ``execution'' of $w$ anyway). We write $M_w$
to mean the \ac{TM} encoded by $w$.

A simulation by the universal \ac{TM} $M_U$ thus takes the program $w$ and
input $x$ to act on the initial tape configuration $\#w\#x\underline{\#}$, or
in expanded form (cf. Figure \ref{fig:encoding}),
\begin{equation}\label{eqn:tape-config}
  \#\hspace{-2.7mm}\underbrace{1^+0}_{\substack{\text{padding}\\\text{(ignored)}}}\hspace{-2.5mm}\overbrace{\rho(M)}^{\text{code}}\underbrace{(0\cup 1)^*}_{\substack{\text{padding}\\\text{(ignored)}}}\#x\underline{\#}
\end{equation}
where $\#$ marks spaces on the tape, and the head position is marked by the
underlining.

\subsection{A Review of the Time Hierarchy Theorem}\label{sec:time-hierarchy-theorem}

Returning to the proof outline, our next goal is to find a proper difficult
language $L_D$ that we can use for the encoding of input bits into yes/no
instances of a decision problem. To this end, it is useful to take a close
look at the proof of the deterministic time hierarchy theorem known from
complexity theory. The theorem's hypothesis is summarized as follows:

\begin{assumption}\label{asm:time-hierarchy}
Let $T:\N\to\N$ be a fully time-constructible function, and let $t:\N\to\N$ with $t(n)\geq n$ be a
monotonously increasing function for which
\[
    \lim_{\ell\to\infty}\frac{t(\ell)\cdot\log t(\ell)}{T(\ell)}=0.
\]
\end{assumption}

Theorem \ref{thm:time-hierarchy} is obtained by diagonalization
\cite[Thm.12.9]{Hopcroft1979}: we construct a \ac{TM} $M$ that halts within
no more than $T(\len{w})$ steps upon input of a word $w$ of length $\ell$,
and differs in its output from every other \ac{TM} $M'$ that is
$t(\ell)$-time-limited.

On input of a word $w$ of length $\ell=\len w$, the sought \ac{TM} $M$ will
employ a universal \ac{TM} $M_U$ to simulate an execution of $M_w$ on input
$w$. The simulation of $t(\ell)$ steps of $M_w$ can be done by $M$ taking no
more than $c_{M_w}\cdot t(\ell)\log t(\ell)$ steps
\cite[Thm.12.6]{Hopcroft1979}, where $c_{M_w}$ is a constant that depends
only on the number of states, tapes, and tape-symbols that $M_w$ uses, but
not the length of the input to ($M_U$'s simulation of) $M_w$.

To assure that $M$ always halts within the limit $T(\ell)$, it simultaneously
executes a ``stopwatch'' \ac{TM} $M_T$ on the input $w$, which exists since
$T$ is fully time-constructible. Once $M_T$ has finished, $M$ terminates the
simulation of $M_w$ too, and outputs ``accept'' if and only if two conditions
are met:
\begin{enumerate}
  \item $M_w$ halted (by itself) during the simulation (i.e., it was not
      interrupted by the termination of $M_T$), and,
  \item $M_w$ rejected $w$.
\end{enumerate}
The ``diagonal-language'' $L_D$ is thus defined over the alphabet
$\Sigma=\set{0,1}$ as
\begin{equation}\label{eqn:diagonal-language}
    L_D := \set{w\in\Sigma^*: M_w\text{ halts and rejects }\rho(M_w)\text{ within}\leq T(\len{\rho(M_w)})\text{ steps}}.
\end{equation}
\begin{rem}
	Textbook proofs of the time hierarchy theorem, e.g., \cite{Hopcroft1979}, adopt a slightly simpler version of $L_D$, usually a word $w$ entirely be interpreted as some code for a \ac{TM} $M_w$, and having this \ac{TM} process $w$ within time $T(\len{w})$. In foresight of our intention to ``pad'' words into becoming perfect squares (to lie in a (modified version of) $SQ$), this padding would change $w$ into some different word $w'$, but leave the ``functional prefix'' $\rho(M)$ (see Fig. \ref{fig:encoding}) inside $w$ unchanged. Hence, $M_w$ would not simulate its own code, but a modified version thereof. To recover the arguments for the textbook proof of the hierarchy theorem, we restrict the decision to processing only that part of $w$ that contains the \ac{TM} encoding, i.e., $w'$ in Fig. \ref{fig:encoding}. Since we still retain the infinitude of equivalent encodings by the prefix $1^+0$ in Fig. \ref{fig:encoding}, the proof arguments from the textbook \cite{Hopcroft1979} remain intact.
\end{rem}

The hierarchy theorem is then found by observing that $L_D$ cannot be
accepted by any $t$-time-limited \ac{TM} $M$: If $M$ were $t$-time-limited
with encoding $\rho(M)=w'$, then the list $\TM$ contains another (equivalent)
encoding $w$ of length $\ell=\len{w}$ so that $M=M_{w'}$ and $M_w$ compute
identical functions, and for sufficiently large $\ell$,
\begin{equation}\label{eqn:time-bound}
    c_{M_w}\cdot t(\ell)\cdot\log t(\ell)\leq T(\ell),
\end{equation}
so that $M_w$ can carry to completion within the time limit $T(\ell)$. Now,
$w\in L(M_w)$ if and only if $w\notin L_D$, so that $L_D\neq L(M_w)$. Since $M$ was
$t$-time-limited and arbitrary, and $M_w$ decides the same language as $M$,
we have $L_D\neq L(M)$ for all $M$ that are $t$-time-limited, and therefore
$\dtime(t)\subsetneq\dtime(T)$ if also $t\leq T$.

At this point, we just re-proved the following well-known result:
\begin{thm}[deterministic time hierarchy
theorem]\label{thm:time-hierarchy} Let $t,T$ be as in Assumption
\ref{asm:time-hierarchy} and $t\leq T$, then $\dtime(t)\subsetneq\dtime(T)$.
\end{thm}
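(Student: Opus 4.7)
The plan is to follow the classical diagonalization argument, since equation \eqref{eqn:diagonal-language} already pins down the candidate separating language $L_D$. First I would establish $L_D\in\dtime(T)$ by exhibiting a concrete decider $M^*$. On input $w$ of length $\ell$, $M^*$ runs in parallel two subroutines: a stopwatch \ac{TM} $M_T$, which exists by full time-constructibility of $T$ and caps the total computation at $T(\ell)$ steps, and a universal \ac{TM} $M_U$ that simulates $M_w$ on input $w$. By \cite[Thm.12.6]{Hopcroft1979}, simulating $\tau$ steps of $M_w$ costs $M_U$ at most $c_{M_w}\cdot\tau\log\tau$ steps, where $c_{M_w}$ is a constant depending only on the structural parameters (states, tapes, tape-alphabet) of $M_w$. $M^*$ then outputs ``accept'' iff $M_w$ halted within budget and rejected $w$, which matches the definition of $L_D$ verbatim.

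Next I would show $L_D\notin\dtime(t)$ by contradiction. Suppose some $t$-time-limited \ac{TM} $M$ satisfies $L(M)=L_D$. By the encoding convention of Section \ref{sec:tm-encoding}, $M$ admits \emph{infinitely many} equivalent codes $w$ sitting inside the padded form \eqref{eqn:tape-config}, so I can choose $w$ of arbitrarily large length $\ell=\len{w}$ while keeping the inner description $\rho(M)$ unchanged. The limit hypothesis of Assumption \ref{asm:time-hierarchy} guarantees that for all sufficiently large $\ell$ the bound \eqref{eqn:time-bound}, i.e., $c_{M_w}\cdot t(\ell)\log t(\ell)\leq T(\ell)$, holds; fix such an $\ell$ and the corresponding code $w$. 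Running $M^*$ from the previous paragraph on $w$ now carries the simulation of $M_w$ to completion inside $M_T$'s stopwatch, so $M^*$ accepts $w$ iff $M_w$ rejects $w$. Since $M$ and $M_w$ compute the same function, this yields $w\in L_D \iff M(w)=\text{reject} \iff w\notin L(M)=L_D$, a contradiction.

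Putting the two directions together gives the strict inclusion $\dtime(t)\subsetneq\dtime(T)$. The main obstacle I anticipate is the bookkeeping in the second step: one must argue \emph{both} that the constant $c_{M_w}$ is pinned down once $M$ is fixed (and in particular does not grow with the pad length), \emph{and} that $\ell$ can simultaneously be pushed to infinity to absorb $c_{M_w}$ via the limit in Assumption \ref{asm:time-hierarchy}. This is exactly what the ``strip $1^+0$ prefix'' convention of Section \ref{sec:tm-encoding} buys us: arbitrary padding of $w$ leaves the executed description $\rho(M)$, hence the structural constant $c_{M_w}$, invariant, so $c_{M_w}\cdot t(\ell)\log t(\ell)=o(T(\ell))$ can be invoked legitimately to close the argument.
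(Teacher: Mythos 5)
Your proposal reproduces the paper's own diagonalization proof faithfully: the same decider $M^*$ with the $M_T$ stopwatch and $M_U$ simulation giving $L_D\in\dtime(T)$, and the same use of the $1^+0$-padding convention to pick an arbitrarily long equivalent code $w$ for a hypothetical $t$-time-limited $M$, absorbing the fixed constant $c_{M_w}$ via Assumption \ref{asm:time-hierarchy} to force $c_{M_w}\cdot t(\ell)\log t(\ell)\leq T(\ell)$ and derive the contradiction $w\in L_D\iff w\notin L_D$. This matches the paper's argument in structure and in detail, so no comparison beyond that is needed.
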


\subsection{A Hard Language with a Known Density
Bound}\label{sec:hard-languages} The existence of a language $L_D$ that is
hard to decide allows the construction of another language whose scattering
over $\Sigma^*$ can be quantified explicitly. We will intersect $L_D$ with
another language with known density estimates, and show that the hardness of
the implied decision problem is retained. Our language of choice will be
already known set of integer squares that we will (equivalently) redefine for
that purpose to be $SQ:=\{w\in\Sigma^*: \exists x\in\N$ such that $gn(w)=x^2\}$. 
This
language has a density $\dens_{SQ}(x)\in\Theta(\sqrt x)$ by Lemma
\ref{lem:density-of-squares}. 

We claim that the language
\[
L_0 := L_D\cap SQ
\]
is at least as difficult to decide as $L_D$. Assume the opposite
$L_0\in\dtime(t)$ towards a contradiction, and let a word $w\in\Sigma^{\ell}$
be given. Without loss of generality, let us assume that the lower order bits in $w$ are all zero, since the relevant ``functional'' part is the header $1^+\rho(M)$ (cf. \eqref{eqn:tape-config}).

We look for the smallest $w'\geq w$ that approximates $w$ from above and
represents a square number in binary, which is
$(w')_2=\ceil{\sqrt{(w)_2}}^2\geq (w)_2$. Observe that two adjacent integer
squares $x^2$ and $(x+1)^2$ are separated by no more than $(x+1)^2 - x^2 =
2x+1$. Therefore, putting $x=\ceil{\sqrt{(w)_2}}$, we find that the
difference $\Delta$ between $w$ and its upper square approximation $w'$
satisfies $(w')_2 - (w)_2 = \Delta \leq 2\ceil{\sqrt{(w)_2}}+1$. Taking
logarithms to get the bitlength, we find that $\Delta$ takes no more than
$\ceil{\log(2\ceil{\sqrt{(w)_2}}+1)}\leq 3+\frac 1 2\ceil{\log
(w)_2}= 3+\frac{\len w}2$ bits, assuming that $w$ has no leading zeroes (which our G\"odel numbering precludes).

By adding $\Delta$ to $(w)_2$ to get the sought square $(w')_2 =
(w)_2+\Delta$, note that the shorter bitlength of $\Delta$ relative to the
bitlength of $w$ makes $w$ and $w'$ different in the lower half + 4 bits
(including the carry from the addition of $\Delta$). Equivalently, $w$ and
$w'$ have a Hamming distance $\leq \frac 1 2\len{w}+4$. 

Since $\ell-\log(\ell)>4+\frac 1 2\ell$ for sufficiently large $\ell$,
we conclude that $w$ and its ``square approximation'' $w'$ will eventually
have an identical lot of $\ceil{\log\ell}$ most significant bits (cf. Figure
\ref{fig:encoding}). That is, the header of the word that is relevant for
$L_D$ is not touched when $w$ is converted into a square $w'$. This means
that $w\in L_D\iff w'\in L_D$, so that the decision remains unchanged upon
the switch from $w$ to $w'$. Since $w'\in SQ$ holds by construction, we could
decide $w\in L_D$ by deciding whether $w'\in L_D\iff w'\in L_D\cap SQ$, so
that $L_D\in\dtime(t)$ by our initial hypothesis on $L_0$ and the additional assumption that $t(n)\geq n^3$. This contradiction
puts $L_0\notin\dtime(t)$, as claimed. To retain $L_D\cap SQ\in\dtime(T)$, we
must choose $T$ so large that the decision $w\in SQ$ is possible within the
time limit incurred by $T$, so we add $T(n)\geq n^3$ to our hypothesis
besides Assumption \ref{asm:time-hierarchy} (note that we do not need an
optimal complexity bound here).

Using \eqref{eqn:density-of-intersections} with $L_1=L_D$ and $L_2=SQ$, we
see that for sufficiently large $x$,
\[
\dens_{L_0}(x)=\dens_{L_D\cap SQ}(x)\leq\dens_{SQ}(x)\leq \sqrt x,
\]
by \eqref{eqn:sq-upper-bound}. This proves half of what we need, so let us
capture this intermediate finding in a rememberable form:
\begin{lem}\label{lem:density-upper-bound}
Let $t,T$ be as in Assumption \ref{asm:time-hierarchy} and assume $T(n)\geq t(n)\geq 
n^3$. Then, there exists a language $L_0\in\dtime(T)\setminus\dtime(t)$ for
which
\[
    \dens_{L_0}(x)\leq \sqrt x.
\]
\end{lem}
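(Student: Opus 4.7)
The plan is to set $L_0 := L_D \cap SQ$, where $L_D$ is the diagonal language furnished by Theorem \ref{thm:time-hierarchy} under Assumption \ref{asm:time-hierarchy} and $SQ$ is the set of integer squares whose density was worked out in Lemma \ref{lem:density-of-squares}. I then need to verify three items: (a) $L_0 \in \dtime(T)$, (b) $L_0 \notin \dtime(t)$, and (c) $\dens_{L_0}(x) \leq \sqrt{x}$ for all sufficiently large $x$.

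Items (a) and (c) should be almost immediate. For (a), a $T$-time decider for $L_0$ runs the $T$-time decider for $L_D$ on input $w$ and additionally verifies $w \in SQ$ by computing $\floor{\sqrt{(w)_2}}^2$ and comparing against $(w)_2$; this check fits into $O(n^3)$ time on words of length $n$, which is absorbed by the hypothesis $T(n) \geq n^3$. For (c), the set inclusion $L_0 \subseteq SQ$ together with \eqref{eqn:density-of-intersections} yields $\dens_{L_0} \leq \dens_{SQ}$, and the $\sqrt{x}$ bound on $\dens_{SQ}$ is furnished by \eqref{eqn:sq-upper-bound}.

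The substantive part is (b). I would argue by contraposition: assume $L_0 \in \dtime(t)$ and build a $t$-time decider for $L_D$, contradicting Theorem \ref{thm:time-hierarchy}. Given $w \in \Sigma^\ell$, replace it by its smallest upper square approximation $w'$ defined by $(w')_2 := \ceil{\sqrt{(w)_2}}^2$. Since consecutive squares are separated by at most $2\ceil{\sqrt{(w)_2}}+1$, the gap $\Delta := (w')_2 - (w)_2$ occupies at most $\frac{\ell}{2} + O(1)$ bits. The TM-encoding convention from Section \ref{sec:tm-encoding} extracts the program only from the top $\ceil{\log \ell}$ bits of $w$; once $\ell$ is large enough that $\frac{\ell}{2} + O(1) < \ell - \ceil{\log \ell}$, the perturbation $\Delta$ leaves this header untouched, so $M_w = M_{w'}$, and therefore $w \in L_D \iff w' \in L_D \iff w' \in L_0$. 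Since $w'$ can be computed from $w$ in polynomial time in $\ell$, feeding $w'$ to the hypothesized $t$-time decider for $L_0$ yields a $t$-time decider for $L_D$, the desired contradiction.

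The main obstacle I anticipate is arranging the constants in the Hamming-distance bookkeeping so that $\Delta$ provably stays strictly below the header bits, and then cleanly handling the ``for sufficiently large $\ell$'' caveat --- harmless in complexity-theoretic terms, since $\dtime(\cdot)$ is insensitive to behavior on finitely many inputs, but it must be stated explicitly. A secondary detail is that constructing $w'$ from $w$ and then invoking the hypothetical $L_0$-decider must fit within the overall time budget; since $w'$ has at most $\ell+1$ bits and $t$ is monotone, this overhead is absorbed essentially for free, leaving the reduction strictly in $\dtime(t)$ up to constants that can be folded into the hierarchy gap in Assumption \ref{asm:time-hierarchy}.
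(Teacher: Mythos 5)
Your argument reproduces the paper's proof essentially verbatim: same choice $L_0 = L_D \cap SQ$, same use of \eqref{eqn:density-of-intersections} and \eqref{eqn:sq-upper-bound} for the density bound, same reduction via the least upper square approximation $w' = gn^{-1}$-free replacement with $(w')_2=\lceil\sqrt{(w)_2}\rceil^2$, same observation that $\Delta$ fits in roughly $\ell/2$ bits and therefore leaves the $\lceil\log\ell\rceil$-bit header untouched so $M_w = M_{w'}$, and the same hypothesis $T(n)\geq n^3$ to absorb the $SQ$-membership check. The "obstacle" you flagged (the Hamming-distance bookkeeping and the carry) is exactly the spot the paper spends its effort on, so the match is close.
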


Towards a lower bound for the density, the following observation will turn
out as a key tool:
\begin{lem}\label{lem:dtime-t-hardness}
The language $L_0$ described in Lemma \ref{lem:density-upper-bound} is
$\dtime(t)$-hard (via polynomial reduction).
\end{lem}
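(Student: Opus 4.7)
My plan is to show that every $L\in\dtime(t)$ admits a polynomial-time many-one reduction to $L_0$, by chaining two ideas: a diagonalization-style reduction from $L$ into $L_D$, followed by the ``square approximation'' trick from the proof of Lemma \ref{lem:density-upper-bound} to land inside $SQ$ as well. Fix $L\in\dtime(t)$ and a machine $M_L$ deciding $L$ within time $t(\ell)$. I first design a fixed \emph{harness} \ac{TM} $M^*$, depending only on $L$, whose only job is: on any input $y$, extract from a prescribed middle region of $y$ a payload $x\in\Sigma^*$, simulate $M_L(x)$, and output the opposite bit. The code $\rho(M^*)$ has some constant length $k=k(L)$ independent of $x$.

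Given $x\in\Sigma^n$, I would next assemble $w_0\in\Sigma^\ell$ with $\ell=\Theta(n)$ chosen large enough (say $\ell\geq 3n$) so that $\ceil{\log\ell}\geq k+O(1)$. The construction places (i) $\rho(M^*)$, prefixed by the $1^+0$ padding of Section \ref{sec:tm-encoding}, into the most significant $\ceil{\log\ell}$ bits of $w_0$; (ii) the payload $x$ (together with a self-delimiting length marker) into a fixed middle block just below the prefix; (iii) arbitrary bits in the lowest-order positions. Then set $f(x):=w$, where $w$ is the word representing the smallest square $\geq (w_0)_2$. As in the proof of Lemma \ref{lem:density-upper-bound}, the gap $(w)_2-(w_0)_2$ is at most $2\sqrt{(w_0)_2}+1$, so only the lowest $\tfrac{\ell}{2}+O(1)$ bits of $w_0$ are altered; both the prefix carrying $\rho(M^*)$ and the middle block carrying $x$ are preserved intact in $w$. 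Computing $f(x)$ is clearly $\poly(\ell)=\poly(n)$.

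For correctness, $w\in SQ$ by construction, and by Section \ref{sec:tm-encoding} the encoded machine is $M_w=M^*$. Running $M^*$ on input $w$ recovers $x$, simulates $M_L(x)$, and flips the answer, so $M_w$ halts and rejects $w$ iff $x\in L$. The simulation cost is of order $t(n)\log t(n)+\poly(n)$, which is bounded by $T(\ell)$ for large $n$ thanks to Assumption \ref{asm:time-hierarchy} and the extra stipulation $T(n)\geq n^3$ carried over from Lemma \ref{lem:density-upper-bound}. Therefore $x\in L \iff w\in L_D$, and since $w\in SQ$ this rephrases as $x\in L \iff w\in L_0$, establishing the reduction. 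The step I expect to be the main obstacle is reconciling the three simultaneous demands on $w$: its high-order bits must spell $\rho(M^*)$, a fixed middle block must hold $x$, and $(w)_2$ must be a perfect square. The deliberately wasteful \ac{TM} encoding of Section \ref{sec:tm-encoding} is what makes this feasible, because the TM description only consumes $O(\log\ell)$ high bits while the square approximation only perturbs the lowest $\ell/2+O(1)$ bits, leaving a comfortable middle zone for the payload as long as $\ell$ is chosen $\gtrsim 2n+O(\log n)$.
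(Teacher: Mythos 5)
Your plan matches the paper's proof of Lemma~\ref{lem:dtime-t-hardness} in all essential respects: a fixed harness machine deciding $\overline{L}$ (the paper's $M'_{\overline{L}}$) goes into the $\ceil{\log\ell}$-bit prefix, the payload goes into a self-delimited middle block (the paper writes $\$w\$$), and the low-order bits absorb the additive rounding $\Delta$ toward the next square. The timing and length analyses likewise agree with the paper's use of Assumption~\ref{asm:time-hierarchy} and the polynomial bound on $\nu$.

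One small step needs tightening. You put ``arbitrary bits in the lowest-order positions'' of $w_0$ and then infer from $\Delta=(w)_2-(w_0)_2\leq 2\sqrt{(w_0)_2}+1$ that only the lowest $\ell/2+O(1)$ bits change. That inference fails if the low region of $w_0$ happens to contain a long run of $1$'s, because then adding $\Delta$ propagates a carry past position $\ell/2+O(1)$ and could corrupt the payload block or even the prefix. The paper's reduction $\varphi(w)=\rho(M'_{\overline{L}})\$w\$0^\nu$ specifically zeroes out this region before adding $\Delta$; you should replace ``arbitrary'' by ``zero'' so that the addition stays confined to the low block. With that one fix, your argument is correct and is essentially the same construction as the paper's.
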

\begin{proof}
We need to show that for every $L\in\dtime(t)$, there exists a poly-time
reduction $\varphi$ to the language $L_0$. Remember that by definition
\eqref{eqn:diagonal-language}, $L_D$ is the set of all words $w$ that when
being interpreted as an encoding of a Turing machine $M_w$, this machine
would reject ``itself'' as input within time $T(\len w)$.

Take any $L\in\dtime(t)$, then there is a \ac{TM} $M_L$ that decides
$w\stackrel ? \in L$ in time $t(\len w)$. Let $M_{\overline{L}}$ be the
\ac{TM} that decides $\overline{L}$ (i.e., by simply inverting the answer of
$M_L$). To construct a proper member of $L_D\cap SQ$ that equivalently
delivers this answer, we define the reduction $\varphi(w) = w' =
\rho(M_{\overline{L}})\$w\$0^{\nu}$ for an integer $\nu$ that is specified
later. That is, the word $w'$ contains a description of $M_{\overline{L}}$,
followed by the original input $w$ and a number $\nu$ of trailing zeroes that
will later be used to cast this word into a square. The three blocks in
$\varphi(w)$ are separated by \$-symbols, assuming that \$ is not used in any
of the relevant tape alphabets, and we use a prefix-free encoding.

Let us collect a few useful observations about the mapping $\varphi$:
\begin{itemize}
  \item $\varphi(w)$ is poly-time computable when $\nu=\poly(\len w)$,
      since $\rho(M_{\overline{L}})$ is merely a constant prefix being
      attached. It is especially crucial to remark here that the
      exponential expansion of a \ac{TM} of length $\ell$ into an encoding
      of size $O(2^\ell)$ (cf. Remark \ref{rem:exponential-growth}) does
      not make the complexity to evaluate $\varphi$ exponential, since the
      universal \ac{TM} $M_U$ merely drops padding from the code, but not
      from the entire input word. Indeed, the (padded) code
      $1^+0\rho(M_{\overline{L}})1^*(0\cup 1)^*$ appearing on the \ac{TM}'s
      tape (see \eqref{eqn:tape-config}) is exponentially longer than the
      ``pure'' code for $M_{\overline{L}}$, but it is nevertheless a
      \emph{constant} prefix used by the reduction $\varphi$, since it is
      constructed explicitly for the fixed language $L$. As such, the
      reduction is doable in $O(1)$ time.

      A slight difficulty arises from the need to make $\varphi(w)=w'$
      sufficiently long to give the simulation of $M_{w'}$ enough time to
      finish. This is resolved by increasing $\nu$ (thus making the
      zero-trailer $0^\nu$ longer), so as to enlarge $w'$ until condition
      \eqref{eqn:time-bound} is satisfied. Note that the increase of $\nu$
      depends on $t$ and $T$ only and is as such a fixed number (constant),
      adding to the remainder length of $\nu$ that polynomially depends on
      the length of $w\in L$ only.
  \item The output length $\len{\varphi(w)}$ is again polynomial in $\len
      w$ under the condition that $\nu=\poly(\len w)$.
  \item $\varphi$ is injective, since $w_1\neq w_2$ implies
      $\varphi(w_1)\neq \varphi(w_2)$ by definition.
\end{itemize}

To see why $w\in L\iff \varphi(w)=w'\in L_D$, let us agree on the convention
that the \ac{TM} $M_{\varphi(w)}$ executes $M_{\overline{L}}$ only on that
part of $w'$ that is enclosed within \$-symbols. Leaving our universal
\ac{TM} unmodified, this restriction can be implemented by a proper
modification of $M_{\overline{L}}$ to ignore everything before and after the
\$-symbols during its execution (thus slightly changing the definition of our
reduction to respect this). Let us call the so-modified \ac{TM}
$M'_{\overline{L}}$, and alter the reduction into $\varphi(w) :=
\rho(M'_{\overline{L}})\$w\$0^{\nu}$.

Under these modifications, it is immediate that:
\begin{enumerate}
  \item the simulation of $M_{w'}$ on input $w'$ is actually a simulation of
      $M_{\overline{L}}$ on input $w$, and has -- by construction (a
      suitably large padding of $\nu$ trailing bits) -- enough time to
      finish, and,
  \item the \ac{TM} deciding $L_D$ will accept $w'$ if and only if
      $M_{\overline{L}}$ rejects $w$. In that case, however, $M_L$ would
      have accepted $w$, thus $w\in L\iff \varphi(w)\in L_D$.
\end{enumerate}

It remains to modify our reduction a last time to assure that $\varphi(w)\in
SQ$ for \emph{every} possible $w$, so as to complete the reduction $L\leq_p
L_0$. For that matter, we will utilize the previously introduced trailer of
zeroes $0^{\nu}$ in $\varphi(w)$.

Define the number $k := \len{\rho(M'_{\overline L})\$w\$}=c+\len w$, where
$c$ is a constant that counts the length of $\rho(M'_{\overline L})$ and the
\$-symbols when everything is encoded in binary. We will enforce
$\varphi(w)\in SQ$ by interpreting $w'=\rho(M'_{\overline{L}})\$w\$0^{\nu}$
as a binary number with $\nu$ trailing zeroes, and add a proper value to it
so as to cast $w'$ into the form $(w')_2=x^2$ for some integer $x$. The
argument is exactly as in the proof of Lemma \ref{lem:density-upper-bound},
and thus not repeated but visualized in Figure \ref{fig:lagged-square}.

\begin{figure}
  \centering
    \includegraphics[scale=0.8]{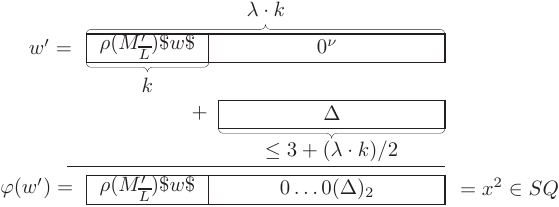}
  \caption{Mapping into the set of squares}\label{fig:lagged-square}
\end{figure}

Now, let $\nu$ be such that $\len {w'}=\lambda\cdot k$ for some (sufficiently
large) integer multiple $\lambda\geq 3$ (see Figure \ref{fig:lagged-square}
to see how $\nu, \len {w'}, k$ and $\lambda$ are related). To cast
$\varphi(w)$ into the sought form $(\varphi(w))_2=x^2$ for an integer
$x\in\N$ (and hence $\varphi(w)\in SQ$), we need to add some $\Delta$ towards
the closest larger integer square. If we choose $\lambda$ so large that
$3+\frac\lambda 2\cdot k<(\lambda-1)k$, then $\nu \geq (\lambda-1)k$
zero-bits (plus the additional lot to satisfy condition
\eqref{eqn:time-bound} if necessary, but for sufficiently long words, this
requirement vanishes) at the end of $w'$ suffice to take up all bits of
$\Delta$, and $(\rho(M'_{\overline L})\$w\$0^\nu)_2+ \Delta
=(\rho(M'_{\overline L})\$w\$0^*z)_2$ (with $z$ being the binary
representation of $\Delta$) is a square. Since $\lambda$ can be chosen as a
fixed integer multiplier for $k=c+\len{w}$, the above requirement
$\nu=\poly(\len w)$ is satisfied, and $\varphi(w)\in SQ$ holds for
\emph{every} input word $w$.

Therefore, $\forall w: \varphi(w)\in SQ$ implies $w\in L\iff \varphi(w)\in L_D\cap
SQ\iff \varphi(w)\in L_D$, and the result follows since $L\in\dtime(t)$ was arbitrary.
\end{proof}

\begin{figure}
  \centering
    \includegraphics[scale=0.8]{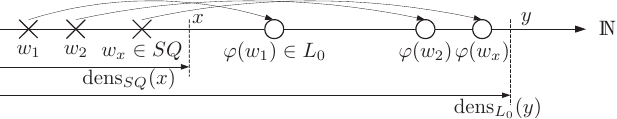}
  \caption{Illustration of inequality \eqref{eqn:lower-bound-1}}\label{fig:lower-bound-illustration}
\end{figure}

Lemma \ref{lem:dtime-t-hardness} lets us lower bound the number of words in
$L_0$ by using any known lower bound for any language in $\dtime(t)$, and
knowing that all these words map into $L_0$ (see Figure
\ref{fig:lower-bound-illustration}). Our language of choice will be $SQ$ once
again, with Lemma \ref{lem:density-of-squares} providing the necessary
bounds. This is admissible if we add the hypothesis $t(n)\geq n^3$ so that
$SQ\in\dtime(t)$. Furthermore, Assumption \ref{asm:time-hierarchy} then
implies that $T(n)\in\Omega(n^3)$ as well, so that this requirement in Lemma
\ref{lem:density-upper-bound} becomes redundant under our so-extended
hypothesis.

We consider the length of a word $w$ being mapped to a word $\varphi(w)\in
L_0=L_D\cap SQ$. For $x\in\N$, let $w_x$ be the last word to appear before
$x$ in an ascending $\leq$-ordering of $SQ$ (see Figure
\ref{fig:lower-bound-illustration}). The mapping $\varphi$ is strictly
increasing in the following sense: the images of two words $w_1\leq w_2$
under $\varphi$ would contain $w_1, w_2$ as ``middle'' blocks in the
bitstrings $\varphi(w_1), \varphi(w_2)$, where they determine the order
$gn(\varphi(w_1))\leq gn(\varphi(w_2))$: if $\len{w_1}<\len{w_2}$, then
$\len{\varphi(w_1)}<\len{\varphi(w_2)}$ and the order is the same as that of
$w_1$ and $w_2$. Otherwise, if $w_1$ and $w_2$ have the same length, then the
prefixes of $\varphi(w_1)$ and $\varphi(w_2)$ also match, and the lower-order
bits contributed by the individual $\Delta$ cannot change the numeric
ordering, thus leaving the order of the images to be determined by the order
of $w_1,w_2$. This means that $w\leq w_x$ implies
$\varphi(w)\leq\varphi(w_x)$, so that we find
\begin{equation}\label{eqn:lower-bound-1}
    \dens_{SQ}(gn(w))\leq \dens_{L_0}(gn(\varphi(w))).
\end{equation}

We shall use \eqref{eqn:lower-bound-1} to lower-bound $\dens_{L_0}(y)$
asymptotically for sufficiently large $y\in\N$. To this end, let us change
variables in inequality \eqref{eqn:lower-bound-1}, using Figure
\ref{fig:lower-bound-illustration} with $L=SQ$ to guide our intuition:
Equivalently to letting $y$ be arbitrary, we can take an arbitrary word
$w'\in\Sigma^*$ to define $y$ as $y:=gn(w')$. Since $\varphi$ is not
surjective, we cannot hope to find a preimage for every $w'\in\Sigma^*$, so
we distinguish two cases:\\[\baselineskip]%
\noindent\underline{Case 1 ($y=gn(\varphi(w))$ for some $w\in\Sigma^*$)}: The
preimage $w$ of $w'$ under $\varphi$ is unique since the reduction is
injective. By substitution, we get
\begin{equation}\label{eqn:pre-lower-bound}
    \dens_{SQ}(gn(\varphi^{-1}(w')))\leq\dens_{L_0}(gn(w')).
\end{equation}
For such a $w'=\rho(M'_{\overline{SQ}})\$w\$0^*z$, where $w\in SQ$, it is a
simple matter to extract the preimage $w$, located ``somewhere in the
middle'' of $w'$. Precisely, $w$ is located in the left-most $k$-bit block
(among the total of $\lambda$ such blocks), and has a length equal to $\frac
1{\lambda}\len{w'} - c'$, where the constant $c'\geq c$ accounts for the
length of $\rho(M'_{\overline{SQ}})$, the additional 1 bit from the G\"odel numbering, the separator symbols $\$$, and a
possible remainder of zeroes from the $0^\nu$-trailer (containing the
$\Delta$ towards the next square). 

The preimage $w=\varphi^{-1}(w')$ satisfies
\begin{equation}\label{eqn:preimage}
gn(w)\geq
2^{\len{w}}=2^{\frac 1\lambda\ceil{\log(gn(w'))}-c'} \geq
2^{-c''}\sqrt[\lambda]{gn(w')},
\end{equation}
where $c''$ is a constant again.

Substituting this into \eqref{eqn:pre-lower-bound}, we get
$\dens_{SQ}(2^{-c''}\sqrt[\lambda]{gn(w')})\leq\dens_{L_0}(gn(w'))$. Using
that $\dens_{SQ}(x)\in\Theta(\sqrt x)$ (Lemma \ref{lem:density-of-squares}),
we end up finding that for a constant $D'$ (implied by the $\Theta$), another
constant $\beta$ (dependent on $c''$) and sufficiently long $w'$,
\begin{equation}\label{eqn:density-lower-bound}
    D'\sqrt{2^{-c''}\sqrt[\lambda]{gn(w')}}=D\cdot\sqrt[\beta]{(w')_2}\leq\dens_{L_0}(gn(w')),
\end{equation}
where $D>0$ is yet another constant. Observe that the construction requires
$\lambda\geq 3$ and therefore makes $\beta\geq 6$ (we will use this
observation later).\\[\baselineskip]%
\noindent\underline{Case 2 ($y\neq gn(\varphi(w))$ for all $w\in\Sigma^*$)}:
The key insight here is that for lower-bounding the count (the density
function), it suffices to identify \emph{some} $w$ for which
$gn(\varphi(w))<y=gn(w')$, in which case we get a coarser bound $$\dens_{SQ}(gn(w))\leq
\dens_{L_0}(gn(\varphi(w)))<\dens_{L_0}(y)=\dens_{L_0}(gn(w')),$$ where the
second inequality holds since the density function is monotonously
increasing. A simple and reliable way to find $w$ is the following: for the
moment, let us forget about $w'$ not being in the image set of $\varphi$, and
extract a substring from it exactly like in case 1 before (disregarding that
the prefix and suffix may not have the proper form as under $\varphi$). Then,
we shorten the result by deleting one bit from it (say, the least
significant) and call the so-obtained word $w$. Observe that $\varphi(w)$ is
shorter than $w'$, so that $gn(\varphi(w))<gn(w')$ necessarily. But
$\varphi(w)$ has the preimage $w$, so the same arguments as in the previous
case can be used again, starting from \eqref{eqn:preimage} onwards. The only
difference is the constant $c'+1$ instead of $c'$ (as we deleted one more
bit), and the subsequently new constant $D''$ when we re-arrive at
\eqref{eqn:density-lower-bound} (notice that $w'$ re-occurs in that
expression, since we obtained $w$ from $w'$, and only the length of $w'$ but
not its structure played a role in \eqref{eqn:preimage}).

Since the bound \eqref{eqn:density-lower-bound} takes the same form in both
cases, except for the different constants, we can choose the coarser of the
two as a lower limit in all cases.

\begin{rem}\label{rem:hard-language-remark}
Note that (some of) the constants involved here actually and ultimately
depend (through a chain of implications) on the choices of the two functions
$t$ and $T$. These give rise to the language $L_D$ and determine the
``stopwatch'' that we must attach to the simulation of $M_{\overline{SQ}}$
when reducing the language ${SQ}$ to our hard-to-decide language
$L_D\in\dtime(T)\setminus\dtime(t)$. This in turn controls the overhead for
the reduction function $\varphi$ in Lemma \ref{lem:dtime-t-hardness} and the
magnitude of the constants $\lambda,\beta$, etc.
\end{rem}
Together with Lemma \ref{lem:density-upper-bound}, and after substituting
$y=gn(w')=x$, we can strengthen our previous results into stating:
\begin{lem}\label{lem:hard-language}
Let $t,T$ be as in Assumption \ref{asm:time-hierarchy} and assume $t(n)\geq
n^3$. Then, there exists a language $L_0\in\dtime(T)\setminus\dtime(t)$
together with an integer constant $\beta\geq 6$ and a real constant $d>0$,
and some $x_0\in\N$ for which
\[
    d\cdot\sqrt[\beta]{x}\leq\dens_{L_0}(x)\leq \sqrt x\qquad\text{for all~}x\geq x_0.
\]
\end{lem}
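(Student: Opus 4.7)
The plan is to take $L_0 := L_D \cap SQ$ as already constructed in Section~\ref{sec:hard-languages}, combine the upper bound on $\dens_{L_0}$ already isolated in Lemma~\ref{lem:density-upper-bound} with a matching lower bound obtained from the reduction $\varphi$ of Lemma~\ref{lem:dtime-t-hardness}. The hypothesis $t(n)\geq n^3$ is what guarantees $SQ\in\dtime(t)$, so that Lemma~\ref{lem:dtime-t-hardness} applies specifically to $L=SQ$. Note also that Assumption~\ref{asm:time-hierarchy} together with $t(n)\geq n^3$ forces $T(n)\in\Omega(n^3)$, which swallows the auxiliary requirement $T(n)\geq n^3$ from Lemma~\ref{lem:density-upper-bound}. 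So the job reduces to: (i) restate the upper bound from Lemma~\ref{lem:density-upper-bound}; (ii) derive a matching $x^{1/\beta}$ lower bound; (iii) glue them into a single inequality valid for all $x\geq x_0$.

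For the lower bound, I would first record that $\varphi: SQ\to L_0$ from Lemma~\ref{lem:dtime-t-hardness} is injective and monotone w.r.t.\ the ordering $u\leq v\iff (u)_2\leq (v)_2$. Injectivity is by construction of $\varphi(w)=\rho(M'_{\overline{SQ}})\$w\$0^{\nu}$; monotonicity follows since two inputs $w_1\leq w_2$ under $\varphi$ either give images of different lengths (ordered the same way), or images of equal lengths with a common prefix $\rho(M'_{\overline{SQ}})\$$, whose tail order is inherited from $w_1,w_2$ (the low-order $\Delta$-corrections cannot overturn the order because they live in $\leq\tfrac{\len{w}}2+O(1)$ low bits). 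This immediately yields $\dens_{SQ}((w)_2)\leq \dens_{L_0}((\varphi(w))_2)$ for every $w$, which is the counting inequality~\eqref{eqn:lower-bound-1}.

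Next I would change variables to replace $(w)_2$ by an arbitrary $x=(w')_2$. If $w'=\varphi(w)$ for some (unique) $w\in\Sigma^*$, then extracting $w$ from $w'$ recovers the middle $k$-bit block of length $\tfrac1\lambda\len{w'}-c'$; since $w$ may be taken in the form $1(0\cup1)^{\len{w}-1}$ (any leading zeroes are discarded by $\varphi$ before handing the input to $M'_{\overline{SQ}}$), this gives $(w)_2\geq 2^{-c''}\sqrt[\lambda]{(w')_2}$. Plugging into $\dens_{SQ}(\cdot)\in\Theta(\sqrt{\cdot})$ from Lemma~\ref{lem:density-of-squares} produces a bound of the form $D\cdot\sqrt[\beta]{(w')_2}\leq \dens_{L_0}((w')_2)$, with $\beta=2\lambda\geq 6$ (since the construction needed $\lambda\geq 3$). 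The case $w'\notin\varphi(\Sigma^*)$ is handled by producing \emph{some} $w$ with $(\varphi(w))_2<(w')_2$: shave the middle block as before and drop its least significant bit; monotonicity of $\dens_{L_0}$ then gives the same asymptotic bound with only a slightly worse constant. Taking the coarser of the two constants as $d$ yields the claim for all $x\geq x_0$ with a suitable threshold $x_0$.

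The only genuine obstacle, and the one that forces the detailed bookkeeping above, is the non-surjectivity of $\varphi$: one cannot simply invert $\varphi$ on every $y\in\N$, and naively shrinking $y$ to the nearest image might lose too much mass. The fix is the ``case~2'' argument, which exploits monotonicity of $\dens_{L_0}$ rather than any surjective inverse. Everything else is routine algebra on exponents, and the final constants $d,\beta,x_0$ are absorbed per Remark~\ref{rem:hard-language-remark}.
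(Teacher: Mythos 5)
Your proposal is correct and tracks the paper's own argument essentially step by step: same choice $L_0=L_D\cap SQ$, same use of Lemma~\ref{lem:density-upper-bound} for the upper bound, same observation that $t(n)\geq n^3$ puts $SQ$ inside $\dtime(t)$ and makes the $T(n)\geq n^3$ requirement redundant, same derivation of \eqref{eqn:lower-bound-1} from monotonicity of $\varphi$, the same two-case split on whether $x$ lies in the image of $\varphi$ with the same shave-a-bit trick and appeal to monotonicity of $\dens_{L_0}$ in the second case, and the same identification $\beta=2\lambda\geq 6$. There is no substantive difference from the paper's proof.
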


\subsection{Threshold Sampling}\label{sec:threshold-sampling}

As the time to evaluate our sought \ac{OWF} is limited to be polynomial, we
cannot construct yes- and no-instances of $w\stackrel ?{\in} L_0$ by directly
testing a randomly chosen word $w$. Instead, we will sample a set of $m$ such
words in a way that probabilistically assures at least one of them to be in
$L_0$ without having to check membership explicitly. That is, we will
randomly draw elements from the family
\begin{equation}\label{eqn:hard-language}
    \LN =\set{W\subset\Sigma^*: gn(W) \subseteq \set{1,\ldots,N} \land W\cap L_0\neq\emptyset},
\end{equation}
where the role and definition of the size $N$ will be discussed in detail
below.

The hardness of this new language is inherited from $L_0$ as the following
simple consideration shows:
\begin{lem}\label{lem:l-star-hardness}
Let $L_0$ be as in Lemma \ref{lem:hard-language}, let $N>0$ and let $\LN$ be
defined by \eqref{eqn:hard-language}. Then $\LN\in\dtime(N\cdot T)\setminus
\dtime(t)$. Here, the input arguments of $t$ and $T$ are the maximal
bitlengths of the words in $W\in\LN$.
\end{lem}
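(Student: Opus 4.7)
The proof has two halves: the inclusion $\LN\in\dtime(N\cdot T)$ is algorithmic, while $\LN\notin\dtime(t)$ is an argument by contradiction via reduction from $L_0$. I would treat them in this order.

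For the upper bound, I plan to exhibit an explicit decider for $\LN$ that, on input $W$, enumerates the elements $w\in W$ on its tape and invokes the $T$-time decider for $L_0$ supplied by Lemma~\ref{lem:hard-language} on each one. The number of such invocations is at most $\abs{W}\leq N$, since $gn(W)\subseteq\set{1,\ldots,N}$ and $gn$ is injective. The algorithm accepts as soon as any of the words passes the test, and rejects after all $\abs{W}$ tests have failed. Letting $\ell$ denote the maximal bitlength of a word in $W$, each individual membership test in $L_0$ consumes at most $T(\ell)$ steps, so the total runtime is bounded by $\abs{W}\cdot T(\ell)\leq N\cdot T(\ell)$ (the overhead of traversing $W$ and bookkeeping is absorbed into the polynomial lower-bound assumption $T(n)\geq n^3$).

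For the lower bound I would argue by contradiction. Assume, towards contradiction, that $\LN\in\dtime(t)$, witnessed by some deterministic \ac{TM} $M^\star$. I would then reduce $L_0$ to $\LN$ via the trivial singleton map $w\mapsto\set{w}$, which can be prepared on the tape in time $O(\len{w})$. By the defining biconditional in \eqref{eqn:hard-language}, $\set{w}\cap L_0\neq\emptyset\iff w\in L_0$, and the maximal bitlength within $\set{w}$ is exactly $\ell=\len{w}$. Therefore running $M^\star$ on $\set{w}$ yields a decision procedure for $L_0$ running in time $t(\ell)$, contradicting Lemma~\ref{lem:hard-language}, which places $L_0$ strictly outside $\dtime(t)$.

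The subtlest point is that the singleton $\set{w}$ lies in the admissible input domain of $\LN$ only when $gn(w)\leq N$, which, for a fixed $N$, accounts for only finitely many $w$. I would address this exactly as the outline in Section~\ref{sec:threshold-sampling} implicitly does: treat $N$ as a parameter growing with the instance size (ultimately polynomially in the input length of the \ac{OWF}), so that the singleton reduction captures an infinite hard subset of $L_0$. Since Lemma~\ref{lem:hard-language} is asymptotic (it gives the hardness guarantee for all $x\geq x_0$), letting $N\to\infty$ suffices to propagate a contradiction to the assumption $\LN\in\dtime(t)$. This is the one place where one has to be careful with quantifier order; the rest of the argument is routine.
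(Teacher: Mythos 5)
Your argument matches the paper's proof in both directions: the upper bound comes from running the $T$-time decider for $L_0$ on each of the $\leq N$ elements of $W$, and the lower bound comes from a trivial reduction of $L_0$ to $\LN$ together with the contradiction against Lemma~\ref{lem:hard-language} (the paper casts $w$ into $W=\set{w,w^*,\ldots,w^*}$ for a fixed $w^*\notin L_0$, whereas you use the singleton $\set{w}$, but in set notation these are the same reduction). If anything you are slightly more careful than the paper, which does not remark on the constraint $gn(w)\leq N$ you correctly flag and resolve by letting $N$ scale with the instance size.
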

\begin{proof}
Take $w\in\set{0,1}^*$. If $\LN\in\dtime(t)$, then we could take any fixed
$w^*\notin L_0$, and (in polynomial time) cast $w$ into
$W=\set{w,w^*,w^*,\ldots,w^*}$. Obviously, $w\in L_0\iff W\in \LN$, so
$L_0\in\dtime(t)$, which is a contradiction.

Conversely, $W=\set{w_1,\ldots,w_m}\in\LN$ can be decided by checking
$w_i\stackrel ? \in L_0$ for all $i=1,2,\ldots,m\leq N$, which takes a total
of $\leq N\cdot T$ time. So, $\LN\in\dtime(N\cdot T)$.
\end{proof}

Let us keep $N$ fixed for the moment and take $U\subset\Sigma^*$ as a finite
set (urn) with $N$ elements. Then, sampling from $\LN$ amounts to drawing a
subset $W\subseteq U\subset\Sigma^*$, hoping that the resulting set
intersects $L_0$, i.e., $W\cap L_0\neq\emptyset$. To avoid deciding if
$[\exists w\in W: w\in L_0]$, which would take $O(\abs{W}\cdot T(\max_{w\in
W}\len{w}))$ time, we use a probabilistic method from random graph theory.

The predicate $Q_k(W)$ for a $k$-element subset of words $W\subseteq U$ is
defined as ``true'' if $W\cap L_0\neq\emptyset$ (that is, $Q_N$ is yet
another way of defining $\LN$). In the following, let us slightly abuse our
notation and write $Q_k$ to also mean the \emph{event} that $W\cap
L_0\neq\emptyset$ for a randomly chosen $W\subseteq U$ of cardinality $k$.
The likelihood for $Q_k$ to occur under a uniform distribution is, with
$\abs{U}=N$,
\[
\Pr(Q_k)=\abs{\set{W\subseteq U:
\abs{W}=k, W\cap L_0\neq\emptyset}}/\binom N k.
\]
Hereafter, we omit the
subscript and write only $Q$ whenever we refer to the general property (not
specifically for sets of given size).

Lemma \ref{lem:hard-language} tells us that the element count of $L_0$ up to
a number $0<x<N$ is at least $d\cdot\sqrt[\beta]{x}>0$ and $\leq \sqrt x<N$,
when $x$ and $N$ are sufficiently large. This implies that $Q_k$ is actually
a \emph{non-trivial} property of subsets of $U$ (in the sense of describing
neither the empty nor the full set). Moreover, it is a \emph{monotone
increasing} property, since once $Q_k(W)$ holds, then $Q_k(W')$ trivially
holds for every superset $W'\supseteq W$. As it is known that all monotone
properties have a threshold \cite{Bollobas1986}, we now go on looking for one
explicitly by virtue of the following result:

\begin{thm}[{\cite[Thm.4]{Bollobas1986}}]\label{thm:threshold-function}
Let $Q$ be a nontrivial and monotonously increasing property of subsets of a
set $U$, where $\abs{U}=N$.

Let $m^*(N)=\max\set{k: \Pr(Q_k)\leq 1/2}$, and $\vartheta(N)\geq 1$.
\begin{enumerate}
  \item If $m\leq m^*/\vartheta(N)$, then
    \begin{equation}\label{eqn:underneath-threshold}
    \Pr(Q_m)\leq1-2^{-1/\vartheta},
    \end{equation}
  \item and if $m\geq \vartheta(N)\cdot (m^*+1)$, then
      \begin{equation}\label{eqn:excess-threshold}
        \Pr(Q_m)\geq 1-2^{-\vartheta}
      \end{equation}
\end{enumerate}
\end{thm}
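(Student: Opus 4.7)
The plan is to establish the classical ``union of i.i.d.\ subsamples'' inequality that lies at the heart of threshold phenomena in random graph theory, namely
\[
\Pr(Q_{tk}) \;\geq\; 1 - (1 - \Pr(Q_k))^{t} \qquad \text{for every integer } t\geq 1,
\]
and then read both halves of the theorem off by algebraic rearrangement together with the fact that $m\mapsto\Pr(Q_m)$ is non-decreasing (which itself follows by a one-line monotonicity coupling: delete $m_2-m_1$ uniformly random elements from a random $m_2$-subset to obtain a random $m_1$-subset contained in it).

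To prove the displayed inequality, I would sample $t$ independent uniform $k$-subsets $W_1,\dots,W_t$ of $U$. Because $Q$ is monotone increasing and each $W_i$ is contained in $W:=W_1\cup\cdots\cup W_t$, the union satisfies $Q$ whenever at least one of the $W_i$ does, giving $\Pr(Q(W))\geq 1-(1-\Pr(Q_k))^t$. Next I would extend $W$ to a set $V$ of size exactly $tk$ by adjoining the missing $tk-\abs{W}$ elements drawn uniformly at random from $U\setminus W$. Because the whole procedure is invariant under permutations of $U$, the marginal law of $V$ is symmetric on $tk$-subsets, hence uniform; the inclusion $W\subseteq V$ together with monotonicity then gives $\Pr(Q_{tk})=\Pr(Q(V))\geq\Pr(Q(W))$, closing the argument.

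From the key inequality, part (2) is immediate: take $k=m^*+1$ and $t=\vartheta$; by maximality of $m^*$ we have $\Pr(Q_{m^*+1})>1/2$, so $\Pr(Q_{\vartheta(m^*+1)})\geq 1-2^{-\vartheta}$, and monotonicity of $\Pr(Q_{(\cdot)})$ pushes this up to every $m\geq\vartheta(m^*+1)$. For part (1), rearrange the key inequality into $\Pr(Q_k)\leq 1-(1-\Pr(Q_{tk}))^{1/t}$; setting $tk=m^*$ and $t=\vartheta$ gives $\Pr(Q_{m^*/\vartheta})\leq 1-(1-\Pr(Q_{m^*}))^{1/\vartheta}\leq 1-2^{-1/\vartheta}$, and monotonicity extends this upper bound to all $m\leq m^*/\vartheta$.

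The main obstacle I anticipate is that the core inequality is natural only for integer $t$, whereas the statement allows any real-valued $\vartheta(N)\geq 1$. Two remedies come to mind: either rephrase the sampling in the $p$-Bernoulli model, where each element of $U$ is kept independently with probability $p$ and composing $t$ independent such samples gives retention probability $1-(1-p)^t$, a form that extends naturally to a continuous family of rates, and then transfer back to uniform $k$-subset sampling through standard concentration; or simply round $\vartheta$ to the nearest integer at the cost of slightly weaker constants in the exponents, which is still enough for the asymptotic threshold behaviour relied upon in the rest of the paper.
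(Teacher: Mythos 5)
The paper does not prove this theorem; it cites it as an external result from Bollob\'as and Thomason, so there is no in-paper proof to compare against. Your reconstruction is precisely the Bollob\'as--Thomason argument: the ``boosting'' inequality $\Pr(Q_{tk}) \geq 1 - (1 - \Pr(Q_k))^t$ is their central lemma, and your derivation of it (take $t$ independent uniform $k$-subsets, use monotone absorption into their union $W$, then extend $W$ to a uniformly distributed $tk$-subset $V$ by the permutation-symmetry argument and apply monotonicity once more) is sound, provided one notes that it requires $tk \leq N$, which holds in both applications. The algebraic readout of both directions from the key inequality plus the monotonicity of $k \mapsto \Pr(Q_k)$ is also correct.

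The one spot to tighten is the integer-$t$ wrinkle you already flag, and in part (1) it bites twice: besides $t$ needing to be an integer, $k = m^*/\vartheta$ need not be an integer either, so you cannot literally ``set $tk = m^*$.'' The clean fix is to apply the key lemma with the given $m \leq m^*/\vartheta$ as the base size and $t = \lfloor\vartheta\rfloor \geq 1$: then $tm \leq m^*$, so $1 - \Pr(Q_{tm}) \geq 1 - \Pr(Q_{m^*}) \geq 1/2$, whence $(1-\Pr(Q_m))^t \geq 1/2$ and $\Pr(Q_m) \leq 1 - 2^{-1/\lfloor\vartheta\rfloor}$; symmetrically for part (2) one obtains $\Pr(Q_m) \geq 1 - 2^{-\lfloor\vartheta\rfloor}$. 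These are marginally weaker than the constants in the statement as printed (which are exact only for integer $\vartheta$), but this is harmless for the paper's use of the theorem, since both invocations take $\vartheta(N) \to \infty$ and the floor changes the exponents by an asymptotically vanishing relative amount.
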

The next steps are thus working out $m^*$ explicitly, with the aid of Lemma
\ref{lem:hard-language}. Our first task on this agenda is therefore
estimating $\Pr(Q_k)$, so as to determine the function $m^*$.

Define $p=\dens_{L_0}(N)/N$ as the fraction of elements of $L_0$ among the
entirety of $N$ elements\footnote{The variable $N$ will later be made
dependent on the input length $\ell$, so that $p$ as defined here is actually
$p(\ell)$ as announced in the abstract.} (cf. Lemma
\ref{eqn:density-bounded-by-argument}) in $\set{1,2,\ldots,N}\subset\N$,
whose corresponding words in $U$ are recovered by virtue of $gn^{-1}$. The
total number of $k$-subsets from $N$ elements is $\binom N k$, among which
there are $\binom{(1-p)N}k$ elements that are \emph{not} in $L_0$ (note that
$(1-p)N$ is an integer). Thus, the likelihood to draw a $k$-element subset
that contains at least one element from $L_0$ is given by
\[
    \frac{\binom N k-\binom{(1-p)N}k}{\binom N k}=1-\frac{\binom{(1-p)N}k}{\binom N k}=\Pr(Q_k).
\]
The threshold obviously depends on $p$ (through the predicate/event $Q_k$
that is determined by it), and is by Theorem \ref{thm:threshold-function}
\begin{equation}\label{eqn:threshold-function}
    m^*(N,p) = \max\set{k: \Pr(Q_k)\leq \frac 1 2}=\max\set{k:\frac{\binom{(1-p)N}k}{\binom N k}\geq \frac 1 2}.
\end{equation}
To simplify matters in the following, let us think of the factorial being
evaluated as a $\Gamma$-function (omitted in the following to keep the
formulas slightly simpler), so that all expressions \emph{continuously}
depend on the involved variables (whenever they are well-defined). This
relaxation lets us work with the real value $\kappa\in\R$ (replacing the
integer $k$ for the moment) that satisfies the identity
\begin{equation}\label{eqn:threshold-equation}
\frac{\binom{(1-p)N}\kappa}{\binom N \kappa}=\frac 1 2 = \frac{(N-\kappa)!((1-p)N)!}{N!((1-p)N-\kappa)!}
\end{equation}
instead of having to look for the (discrete) maximal $k\in\N$ so that
$\Pr(Q_k)\leq 1/2$. The sought integer solution to
\eqref{eqn:threshold-function} is then (relying on the continuity) obtained
by rounding $\kappa$ towards an integer.

Since the expressions $((1-p)N)!$ and $N!$ in the nominator and denominator,
respectively, do not depend on $\kappa$, let us expand the remaining quotient
\begin{align}
    &\frac{(N-\kappa)!}{((1-p)N-\kappa)!}=\frac{(N-\kappa)!}{(N-\kappa-pN)!}\nonumber\\
    &\qquad = (N-\kappa-pN+1)(N-\kappa-pN+2)\cdots(N-\kappa-1)(N-\kappa),\label{eqn:pochhammer-product}
\end{align}
which has exactly $pN$ factors (notice that $pN$ is indeed an integer, since
this is just the element count on the condition $w\in L_0$ for $1\leq
gn(w)\leq N$).

Trivial upper and lower bounds on \eqref{eqn:pochhammer-product} are obtained
by using $pN$-th powers of the largest or smallest term in the product. That
is,
\[
    \frac{((1-p)N)!}{N!}((1-p)N-\kappa+1)^{pN}\leq \underbrace{\frac{(N-\kappa)!((1-p)N)!}{N!((1-p)N-\kappa)!}}_{=:r(\kappa)}\leq \frac{((1-p)N)!}{N!}(N-\kappa)^{pN}.
\]
Equation \eqref{eqn:threshold-equation} can be stated more generally as
solving the equation $r(\kappa)=y$ for $\kappa$, given a right-hand side
value $y$. The bounds on $r(\kappa)$ then imply bounds on the solutions of
equation \eqref{eqn:threshold-equation}, which are
\begin{align*}
1+N(1-p)-\left(\frac{y\cdot N!}{((1-p)N)!}\right)^{\frac 1{pN}}\leq r^{-1}(y)\leq N-\left(\frac{y\cdot N!}{((1-p)N)!}\right)^{\frac 1{pN}}.
\end{align*}
By substituting $y=1/2$ into the last expression, we obtain the sought bounds
\[
    \underbrace{\left\lfloor 1+N(1-p)-\left(\frac 1 2\cdot\frac{N!}{((1-p)N)!}\right)^{\frac 1{pN}}\right\rfloor}_{=:\mu_*(N,p)}\leq k\leq \underbrace{\left\lceil N-\left(\frac 1 2\cdot\frac{N!}{((1-p)N)!}\right)^{\frac 1{pN}}\right\rceil}_{=:\mu^*(N,p)}
\]
The threshold $m^*(N,p)$ is defined as the maximal such $k\in\N$, but must
respect the same upper and lower limits, where the rounding operations on the
bounds ($\lfloor\cdot\rfloor$ and $\lceil\cdot\rceil$) preserve the validity
of the limits when $\kappa$ is rounded towards an integer. 
Thus, the bound is now
\begin{equation}\label{eqn:mu-bounds}
  \mu_*(N,p)\leq m^*(N,p)\leq \mu^*(N,p),
\end{equation}
with functions $\mu_*,\mu^*$ induced by the language $L_0$ through the
parameter $p$.

Our next step is using the bounds obtained on the fraction $p$ of elements in
$L_0$ that fall into the discrete interval $[1,N]=U\subset\N$ to refine the
above bounds on the threshold $m^*$. First, we use Lemma
\ref{lem:hard-language} to bound $p$ as
\begin{equation}\label{eqn:p-bounds}
p_* := d\cdot\frac{\sqrt[\beta]{N}}{N}\leq p\leq \frac{\sqrt N}{N} =: p^*
\end{equation}
for sufficiently large $N$. Furthermore, observe that the threshold
$m^*(N,p)$ is monotonously decreasing in $p$, since the more ``good''
elements (those from $L_0$) we have in the set of $N$, the less elements do
we need to draw until we come across a ``good'' one. Thus, for $p_*\leq p\leq
p^*$, we have
\begin{equation}\label{eqn:m-bounds}
  m^*(N,p^*)\leq m^*(N,p)\leq m^*(N,p_*).
\end{equation}

With this, we define the number $m(N)$ of elements that we draw at random
from $\LN$ as
\begin{equation}\label{eqn:threshold-number}
    m = m(N) := \frac 1{\sqrt[\alpha]{N}}\mu_*(N,p^*),
\end{equation}
for a positive constant $\alpha$ that we will determine later.

Note that $\mu_*$ may in some cases take on negative values, but it is
nonetheless an asymptotic nontrivial (i.e., positive and increasing) lower
bound. A quick limit calculation in Mathematica \cite{wolframresearchincMathematicaVersion122023} confirms that $\lim_{N\to\infty}\mu_*(N,1/\sqrt{N})=\infty$, 
but independently, let us expand the product $N!/((1-p)N)!$ occurring in the
definition of $\mu_*(N,p)$. Take $p=p^*=1/\sqrt{N}$ in
\[
\frac{N!}{((1-p^*)N)!}=\prod_{j=0}^{p^*\cdot N-1}(N-j),
\]
and raise both sides to the $\frac 1{p^*N}$-th power, to reveal that each
factor satisfies $1\leq (N-j)^{1/(p^*N)}\leq N^{1/\sqrt{N}}\to 1$. Likewise,
$\sqrt[p^*N]{1/2}\to 1$ for $N\to\infty$, so that
$\mu_*(N,p^*)\in\Theta(\sqrt{N})$, and we get
\begin{equation}\label{eqn:growth-of-m}
  m(N)\in\Theta(N^{1/2-1/\alpha}),
\end{equation}
where $\alpha>2$ induces a growth towards $+\infty$.

Regardless of whether we wish to draw some $W\in\LN$ or $W\notin\LN$, our
sampling algorithm will in any case output a set $W$ of cardinality $m$. The
difference between an output $W\in\LN$ or $W\notin \LN$ is being made on the
number $N$ of elements from which we draw $W$.

The key step towards sampling $W\notin \LN$ is therefore to thin out $U$ by
dropping elements at random, until the cardinality $N=\abs{U}$ is so small
that $m(\abs{U'},p)$ \emph{exceeds} the threshold $m^*$ (that applies to the
now smaller urn $U$). Otherwise, we choose $U$ so large that $m$
\emph{undercuts} the threshold $m^*$ that applies to the full set $U$.


Specifically, we need to suitably thin out $U$ to $U'$, but retaining the elements over the same range in $\N$ 
so that pulling out
the \emph{same} number of $m$ elements either makes
\eqref{eqn:underneath-threshold} or \eqref{eqn:excess-threshold} from Theorem
\ref{thm:threshold-function} apply. In the following, let the smaller set
$U'$ have $n$ entries, and let the larger set $U$ have $N=n^{2\beta}$
entries\footnote{The choice of $2\beta$ is arbitrary and for convenience, to
ease the algebra and to let the expressions nicely simplify.}, where $\beta$
is the constant from Lemma \ref{lem:hard-language}.
\begin{rem}\label{rem:choice}
Observe that the threshold function $m^*(N,p)$ that applies to sampling from
a set $U$ with $N$ elements must always satisfy $m^*(N,p)\leq N=\abs{U}$. By
choosing $\abs{U}\leq n^{2\beta}$, we assure that the threshold $m^*$ (and
hence also the selection count $m$) is polynomial in $n$.
\end{rem}

To sample\ldots
\begin{itemize}
  \item \ldots a no-instance $W\notin\LN$, we use a set
      $\abs{U}=N=n^{2\beta}$ elements. Let us write $p=\abs{U\cap
      L_0}/\abs{U}$ for the likelihood to hit an element from $L_0$ within
      $U$, then we actually undercut the threshold by drawing
      \[
        m = \frac 1{\sqrt[\alpha]{N}}\mu_*(N,p^*)\leq\mu_*(N,p^*)\stackrel{\eqref{eqn:mu-bounds}}{\leq} m^*(N,p^*),
      \]
      elements (note that $N^{-1/\alpha}<1$ for $N>1$). This gives
      \begin{align*}
        \liminf_{N\to\infty}\frac{m^*(N,p)}m\stackrel{\eqref{eqn:m-bounds}}{\geq}\liminf_{N\to\infty}\frac{m^*(N,p^*)}m&\stackrel{\eqref{eqn:mu-bounds}}{\geq}\liminf_{N\to\infty}\frac{\mu_*(N,p^*)}m\\
        &=\liminf_{N\to\infty}\sqrt[\alpha]N=\infty,
      \end{align*}
      so $m(N)$ asymptotically stays under the threshold $m^*$.
  \item \ldots a yes-instance $W\in\LN$, we cut down the cardinality by a
      factor of $s=n^{2\beta-1}$, i.e., we drop elements from $U$ until
      only $\abs{U'}=\abs{U}/s=n^{2\beta}/s = n$ entries remain. Like
      before, let us write $p'=\abs{U'\cap L_0}/\abs{U'}$ for the
      likelihood to draw a member of $L_0$ from $U'\subset U$, and keep in
      mind that the threshold $m^*$ is designed for the smaller urn with
      only $N/s$ entries, from which we nonetheless draw $m$ elements.

Intuitively, observe that the \emph{relative} amount $p'$ of elements from
$L_0$ within $U'$ remains unchanged (in the limit) upon the drop-out
process, provided that the deletion disregards the specific structure of a
word $w$ (which is trivial to implement).

Formally, we have $p=\Pr(w\in L_0|w\in U)$, and $p'=\Pr(w\in L_0|w\in U')$, where the second probability is taken over both random choices, $w$ \emph{and} the subset $U'$.
The latter is
\begin{align*}
  p'&=\frac{\Pr(w\in L_0\land w\in U')}{\Pr(w\in U')}=\frac{\Pr(w\in L_0\land (w\text{ is selected})\land w\in U)}{\Pr((w\text{ is
selected})\land w\in U)}\\
&=\frac{\Pr(w\in L_0\land w\in U)\Pr(w\text{ is selected})}{\Pr(w\in U)\Pr(w\text{ is selected})}=\Pr(w\in L_0|w\in U)=p,
\end{align*}
where the third equality follows from the selection of $w$ into $U'$ being
stochastically independent of the other events. Later, this is achieved by
specifying  Algorithm \ref{alg:selection} (function \textsc{Select}) to
\emph{not} care about how $w$ looks like or relates to the language $L_0$.

So, there is no need to distinguish the parameter $p$ for $U$ and $U'$ and
we can consider
      \begin{align*}
        0&\stackrel{\eqref{eqn:growth-of-m}}{\leq}\limsup_{N\to\infty}\frac{m^*(N/s,p)}m\stackrel{\eqref{eqn:m-bounds}}{\leq}\limsup_{N\to\infty}\frac{m^*(N/s,p_*)}m\\
        &\stackrel{\eqref{eqn:mu-bounds}}{\leq}\limsup_{N\to\infty}\frac{\sqrt[\alpha]N\cdot\mu^*(N/s,p_*)}{\mu_*(N,p^*)}.
      \end{align*}
      
Observe that $m^*$ here depends only on the numbers $N/s$ and $p$, but not explicitly on the (smaller) urn $U'$. The reason is that the choice of $U'$ is part of the predicate $Q_k$ that the threshold $m^*$ uses. In other words, we are using two different predicates for the large urn $U$ of size $N$ and the small urn $U'$ of size $N/s$: for $U$, the predicate $Q_k(W)$ refers to a subset $W$ containing an element of $L_0$. For the small urn (whose threshold is $m^*$), the predicate is about whether the set $W\cap U'$ contains an element of $L_0$, i.e., in case of the small urn, the predicate itself draws a random subset to evaluate. Hence, even though the absolute value $p'=\abs{L_0\cap U'}/\abs{U'}$ would be different from $p$, we still have $\Pr(Q_k)$ for the small urn taken over the random choices of $U'$ (that $Q_k$ internally makes), yielding the same value $p$ for the small urn as derived above.
      
      We substitute $N=n^{2\beta}, s=n^{2\beta-1}$ and the bounds
      \eqref{eqn:p-bounds}, rearrange terms, and cast the factorials into
      $\Gamma$-functions, which turns the last quotient into (dropping the
      $\floor{\cdot}$ and $\ceil{\cdot}$ to ease matters w.l.o.g.),
      \begin{equation}\label{eqn:quotient}
        \frac{\left(n^{2 \beta}\right)^{1/\alpha} \Biggl(1+n-\overbrace{2^{-\frac{n^{2 \beta-3}}{d}} \left(\frac{\Gamma (n+1)}{\Gamma \left(n-d n^{3-2 \beta}+1\right)}\right)^{\frac{n^{2 \beta-3}}{d}}}^{=:A}\Biggl)}{1+\underbrace{n^{\beta} \left(n^{\beta}-1\right)}_{=:B}-\underbrace{2^{-n^{-\beta}} \left(\frac{\Gamma \left(n^{2 \beta}+1\right)}{\Gamma \left(n^{2 \beta}-n^{\beta}+1\right)}\right)^{n^{-\beta}}}_{=:C}}
      \end{equation}
      where $d>0$ and $\beta>0$ are the constant appearing in Lemma
      \ref{lem:hard-language}. Towards showing that
      $\eqref{eqn:quotient}\in O(n^{-\gamma})$ for some constant
      $\gamma>0$, it is useful to consider the nominator and denominator of
      \eqref{eqn:quotient} separately, as well as the terms $A,B$ and $C$
      therein.

      \underline{Nominator of \eqref{eqn:quotient}}: Towards showing that
		term $A$ is bounded, let us first look at the inner quotient 
		\[
		\frac{\Gamma(n+1)}{\Gamma(n-d\cdot n^{3-2\beta}+1)}=\frac{\Gamma(n+1)}{\Gamma(n+\sigma)}
		\]
		with the value $0<\sigma=1-d\cdot n^{3-2\beta}<1$, and apply Gautschi's inequality \cite[Sec. 5.6.4.]{nationalinstituteofstandartsandtechnologynistNISTDigitalLibrary2023}, to obtain the bounds
		\[
		n^{d\cdot n^{3-2\beta}}\leq\frac{\Gamma(n+1)}{\Gamma(n-d\cdot n^{3-2\beta}+1)} \leq (n+1)^{d\cdot n^{3-2\beta}}.
		\]
		Substituting $b:=2\beta-3>0$ to simplify the terms, we can rewrite the bounds as $n^{d\cdot n^{-b}}=\left(n^{\frac 1{n^b}}\right)^d=\left(\sqrt[n^b]{n}\right)^d$, and respectively, $\left(\sqrt[n^b]{n+1}\right)^d$, both of which converge to 1 as $n\to\infty$, and we get
		\begin{equation}\label{eqn:gamma-quotient}
			\lim_{n\to\infty}\frac{\Gamma(n+1)}{\Gamma(n-d\cdot n^{3-2\beta}+1)}=1.
		\end{equation}
		Now, let us return to the denominator of \eqref{eqn:quotient}, and include the term $2^{-1}$ inside the brackets, i.e.,
		\[
		2^{-\frac{n^{2 \beta-3}}{d}} \left(\frac{\Gamma (n+1)}{\Gamma \left(n-d n^{3-2 \beta}+1\right)}\right)^{\frac{n^{2 \beta-3}}{d}} =
		\left(\frac 1 2\cdot \frac{\Gamma (n+1)}{\Gamma \left(n-d n^{3-2 \beta}+1\right)}\right)^{\frac{n^{2 \beta-3}}{d}},
		\]
		and note that by \eqref{eqn:gamma-quotient}, the inner term inside the bracket will converge to $\frac 1 2$, so that ultimately, for sufficiently large $n$, we have the constant upper bound
		\[
		\frac 1 2\cdot \frac{\Gamma (n+1)}{\Gamma \left(n-d n^{3-2 \beta}+1\right)}\leq \frac 2 3
		\]
		Raising both sides of the inequality to the power of $d\cdot n^{2\beta-3}>0$ we get
		\[
		\left(\frac 1 2\cdot \frac{\Gamma (n+1)}{\Gamma \left(n-d n^{3-2 \beta}+1\right)}\right)^{d\cdot n^{2\beta-3}}\leq \left(\frac 2 3\right)^{d\cdot n^{2\beta-3}}\to 0,
		\]
		as $n\to\infty$, since this also pushes $d\cdot n^{2\beta-3}\to\infty$.
		
		Thus, for some constant $E$, we have the nominator of
		\eqref{eqn:quotient} asymptotically bounded as $\leq
		n^{2\beta/\alpha}(n+E)\in O(n^{2\beta/\alpha+1})$.

    \underline{Denominator of \eqref{eqn:quotient}}: It is a quick matter of calculation in \textsc{Mathematica} \cite{wolframresearchincMathematicaVersion122023} to verify that the terms $B$ and $C$ that both depend on $n$, satisfy $\lim_{n\to\infty}\frac{B - C}{n^{\beta/2}}=\infty$, conditional on $\beta>0$ (which holds in our setting). Hence, the denominator of \eqref{eqn:quotient} grows as $\Omega(n^{\beta/2})$.

    Combining the asymptotic bounds on the nominator and denominator, we
    end up asserting
    \[
        \eqref{eqn:quotient}\in O(n^{2\beta/\alpha+1}\cdot n^{-\beta/2})
    \]
    It is easily discovered that $1+2\beta/\alpha - \beta/2<0$, if
    $\beta>2$ (previously, we noted that $\beta\geq 6$) and
    $\alpha>4\beta/(\beta-2)$. Thus, we are free to put $\alpha:=
    4\beta/(\beta-2) + 2\beta$ in \eqref{eqn:threshold-number} (note that
    $\alpha\geq 18$ since $\beta\geq 6$), to achieve
    \begin{equation}\label{eqn:threshold-case-2}
        \limsup_{N\to\infty}\frac{m^*(N/s,p)}{m(N)}\in O(n^{-\gamma}),
    \end{equation}
    where $\gamma = (\beta-2)^2 / (2\beta)\geq \frac 4 3$. Thus, $m$ grows faster than
    the threshold $m^*$ in this case.
\end{itemize}

Now, let us use \eqref{eqn:underneath-threshold} and
\eqref{eqn:excess-threshold} to work out the likelihoods of sampling an
element from $\LN$ or $\overline{\LN}$, which is the set of sample sets that
do (not) contain a word from $L_0$. In the following, let us write $m(N)$ in
omission of the unknown parameter $p$, since this one is replaced by its
upper approximation $p^*$ that depends on $N$ (through \eqref{eqn:p-bounds}).

Let $\vartheta(N)\geq 1$ (according to Theorem \ref{thm:threshold-function}).
\begin{enumerate}
  \item Case 1 of Theorem \ref{thm:threshold-function} applies if $m(N)\leq
      m^*(N,p)/\vartheta(N)$. This is equivalent to $\vartheta(N)\leq
      m^*(N,p)/m(N)$, so that
      \[
        \frac{m^*(N,p)}{m(N)}\stackrel{\eqref{eqn:m-bounds}}{\geq}
        \frac{m^*(N,p^*)}{m(N)}\stackrel{\eqref{eqn:mu-bounds}}{\geq}
        \frac{\mu_*(N,p^*)}{m(N)}=\frac{\mu_*(N,p^*)}{\frac 1{\sqrt[\alpha]{N}}\mu_*(N,p^*)}=\sqrt[\alpha]N,
      \]
      so that we can take $\vartheta(N)=\sqrt[\alpha]N\geq 1$ (as
      required).

      The likelihood to sample an element from $\LN$ thus asymptotically
      satisfies
      \begin{equation}\label{eqn:sampling-from-L-complement}
        \Pr(Q_m)=\Pr(w\in\LN)\leq 1-2^{-1/\vartheta}=1-2^{-1/\sqrt[\alpha]N}\to 0.
      \end{equation}
  \item Case 2 of Theorem \ref{thm:threshold-function} applies if
      $m(N)\geq\vartheta(N)\cdot(m^*(N)+1)$. From
      \eqref{eqn:threshold-case-2}, we have $m^*(n,p)/m(N)\in
      O(n^{-\gamma})$, where $N=n^{2\beta}$. This, and the previously
      established growth of $m(N)\to\infty$ by \eqref{eqn:growth-of-m},
      reveals that when $N$ (and hence also $n$) becomes large enough,
      \[
        \frac{m^*(n,p)+1}{m(N)}=\frac{m^*(n,p)}{m(N)}+\frac 1{m(N)}\leq F\cdot n^{-\gamma}+\frac 1{m(N)},
      \]
      for a constant $F>0$ implied by the $O(n^{-\gamma})$. Thus,
      \[
        (m^*(n,p)+1)\frac 1{F\cdot n^{-\gamma}+\frac 1{m(N)}}\leq m(N),
      \]
      and so we can take
      \[ \vartheta(n)=\frac 1{F\cdot n^{-\gamma}+\frac
        1{m(N)}}=\frac{n^\gamma}{F+\frac{n^\gamma}{m(N)}},
      \]
      after rearranging terms. To analyze the growth of $\vartheta$, we
      substitute the values for $\gamma=(\beta-2)^2/(2\beta)$ and
      $\alpha=4\beta/(\beta-2)+2\beta$ and use \eqref{eqn:growth-of-m} for
      the asymptotic bound $m(N)\geq G\cdot N^{1/2-1/\alpha}=G\cdot
      n^{\beta-2\beta/\alpha}$ for some constant $G>0$. After some
      algebra, we discover 
      \[
        \vartheta(n)\geq \frac{G n^{\beta+\frac{2}{\beta}-1}}{F G n^{\frac{\beta}{2}+1}+1},
      \]
      and the lower bound is quickly verified (in \textsc{Mathematica}) to grow as $\Theta(n^\gamma)$.

      Therefore, by Theorem \ref{thm:threshold-function}, the likelihood to
      sample from $\LN$ asymptotically satisfies
      \begin{equation}\label{eqn:sampling-from-L}
        \Pr(Q_m)=\Pr(w\in\LN)\geq 1-2^{-\vartheta(n)}\geq 1-2^{-\Theta(n^\gamma)}\to 1.
      \end{equation}
\end{enumerate}

At this point, let us briefly resume our sampling method as Algorithm
\ref{alg:threshold-sampling}. The constants $\alpha$ and $\beta$ will
appearing therein depend on the language $L_0$. Its correctness is
established by Lemma \ref{lem:threshold-sampling} as our next intermediate
cleanup.

\begin{algorithm}[h!]
\caption{Threshold Sampling}\label{alg:threshold-sampling}
\begin{algorithmic}[1]
\Require an input bit $b\in\set{0,1}$ and an integer $n\in\N$.%
\Ensure Output of a random finite set $W\subset\Sigma^*$ whose cardinality is
polynomial in $n$, and which either satisfies $W\cap L_0\neq \emptyset$ or
$W\cap L_0=\emptyset$, with high probability, depending on whether $b=1$ or
$b=0$ was supplied.
\Function{Threshold-Sampling}{$b,n$}%
\State $m\gets n^{-2\beta/\alpha}\cdot\mu_*(n^{2\beta},n^{-\beta})$\label{lbl:threshold-calculation}%
\Comment{subst. $N=n^{2\beta}$ in eqs. \eqref{eqn:p-bounds}, \eqref{eqn:threshold-number}}%
\If{$b=1$}\Comment{for $b=1$, exceed the threshold $m^*$}%
    \State choose $U\subset\set{1,2,\ldots,n^{2\beta}}$ with $\abs{U}=n$\label{lbl:threshold-sampling-select1}\Comment{uniform without replacement}%
\Else\Comment{for $b=0$, undercut the threshold $m^*$}%
    \State $U\gets\set{1,2,\ldots,n^{2\beta}}$%
\EndIf%
\State select $W\subseteq U$ with $\abs{W}=m$\label{lbl:threshold-sampling-select2}\Comment{uniform without replacement}%
\State \Return{$W$}
\EndFunction%
\end{algorithmic}
\end{algorithm}

\begin{lem}\label{lem:threshold-sampling}
Algorithm \ref{alg:threshold-sampling} runs in time in $O(n^{2\beta}\log
n\cdot R(n))$, where $R(n)$ is the time required for the random selection in
lines \ref{lbl:threshold-sampling-select1} and
\ref{lbl:threshold-sampling-select2}. It outputs a set $W$ of cardinality
that is polynomial in $n$ (since it is upper bounded by the algorithm's
running time), which satisfies:
\begin{itemize}
  \item $\Pr(W\cap L_0\neq\emptyset|b=1)\geq 1-2^{-\Omega(n^\gamma)}$, and
  \item $\Pr(W\cap L_0=\emptyset|b=0)\geq 2^{-n^{-2\beta/\alpha}}$,
\end{itemize}
where the (positive) constants $\alpha,\beta$ and $\gamma$ depend only on the
language $L_0$.
\end{lem}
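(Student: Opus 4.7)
The plan is to simply package the analysis already performed in the preceding subsection into the three claims of the lemma.

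First, I would address the running time. Line~\ref{lbl:threshold-calculation} computes $m$ via $\mu_*$, which requires evaluating a ratio of factorials (or $\Gamma$-functions) with numerator $N!$ up to $N = n^{2\beta}$; this is doable in $O(n^{2\beta}\log n)$ arithmetic operations, where the $\log n$ factor accounts for the bitlength of the operands. The selection of $U$ and $W$ in lines~\ref{lbl:threshold-sampling-select1} and~\ref{lbl:threshold-sampling-select2} is a uniform random draw without replacement from a set of at most $n^{2\beta}$ indices, whose cost is captured by $R(n)$ per element drawn. Multiplying through yields the asserted $O(n^{2\beta}\log n\cdot R(n))$ bound.

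Second, I would verify that $\abs{W}=m$ is polynomial in $n$. Directly from \eqref{eqn:growth-of-m} with $N=n^{2\beta}$, we get $m \in \Theta\!\left(n^{2\beta(1-1/\alpha)}\right)$, which is polynomial in $n$ since $\alpha$ is a fixed constant depending only on $L_0$ (through $\beta$). This in particular makes $\abs{W}$ bounded by the running time, as claimed.

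Third, I would establish the two probability estimates by invoking the two cases analyzed just before the algorithm statement. For $b=1$, the urn has been thinned to $\abs{U}=n$ while $m$ is still computed relative to $N=n^{2\beta}$, which puts us in Case~2 (exceeding the threshold): the chain of inequalities culminating in \eqref{eqn:threshold-case-2} together with the growth bound on $\vartheta$ derived from it yields \eqref{eqn:sampling-from-L}, i.e.\ $\Pr(W\cap L_0\neq\emptyset\mid b=1)\geq 1-2^{-\Omega(n^\gamma)}$ with $\gamma=(\beta-2)^2/(2\beta)$. For $b=0$, we are in Case~1 (undercutting the threshold) with $\vartheta(N)=\sqrt[\alpha]{N}$, so \eqref{eqn:sampling-from-L-complement} gives $\Pr(W\in\LN\mid b=0)\leq 1-2^{-1/\sqrt[\alpha]{N}}$, whence $\Pr(W\cap L_0=\emptyset\mid b=0)\geq 2^{-1/\sqrt[\alpha]{N}}=2^{-n^{-2\beta/\alpha}}$ after substituting $N=n^{2\beta}$.

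The only delicate point I expect is justifying that the invariance $p'=p$ (derived in the bullet for $b=1$ above Algorithm~\ref{alg:threshold-sampling}) actually applies to line~\ref{lbl:threshold-sampling-select1}: the uniform subsampling must be \emph{structure-agnostic}, i.e.\ stochastically independent of the event $w\in L_0$. This is assured because line~\ref{lbl:threshold-sampling-select1} draws indices uniformly from $\set{1,\ldots,n^{2\beta}}$ without inspecting the words they encode, so the independence used in the bullet argument is restored verbatim, and the bounds \eqref{eqn:p-bounds} on $p$ carry over to $p'$ unchanged. With that observation in place, the three bullets of the lemma follow directly.
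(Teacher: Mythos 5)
Your proposal is correct and follows essentially the same route as the paper: both derive the probability bounds directly from \eqref{eqn:sampling-from-L} and \eqref{eqn:sampling-from-L-complement} (which encode Cases~2 and~1 of Theorem~\ref{thm:threshold-function} respectively), identify the events $W\cap L_0\neq\emptyset$ and $W\cap L_0=\emptyset$ with $Q_m$ and its negation, and bound the running time by counting the at most $n^{2\beta}$ elements drawn. Your treatment is somewhat more explicit than the paper's (which compresses the running-time argument to a single line and does not restate the $p'=p$ invariance inside the proof), but there is no substantive difference in approach.
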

\begin{proof}
The events $W\cap L_0=\emptyset$ or $W\cap L_0\neq \emptyset$ correspond to
the previously predicate/event $Q_m$ and its negation. Thus, the asserted
likelihoods follow from \eqref{eqn:sampling-from-L} and
\eqref{eqn:sampling-from-L-complement}, obviously conditional on the input
bit $b$.

The time-complexity of Algorithm \ref{alg:threshold-sampling} is polynomial
in $n$, since we draw no more than $n^{2\beta}$ elements, each of which has
$\leq \lceil\log(n^{2\beta})\rceil$ bits, where $\beta$ is a constant
determined by $L_0$. Moreover, the calculations in line
\ref{lbl:threshold-calculation} are doable in polynomial time less than
$O(n^{2\beta})\supseteq O(n^6)$, since only basic arithmetic over $\R$ is
required (multiplications, divisions and roots).
\end{proof}

\begin{rem} It may be tempting to think of threshold sampling to be
conceptually flawed here, if the experiment is misleadingly interpreted in
the following sense: assume that we would draw a constant number of balls
from two urns, one with few balls in them, the other containing many balls,
but with the fraction of ``good ones'' being the same in both urns. Then, the
likelihood to draw at least one ``good ball'' should intuitively be the same
upon an equal number of trials. However, it must be stressed that the number
of balls in the larger urn grows asymptotically different (and faster) than
the ball count in the smaller urn. Thus, sticking with a fixed number of
trials in both urns, the \emph{absolute} number of balls that we draw from
either urn is indeed identical, but the \emph{fraction} (relative number) of
balls is eventually different in the long run.
\end{rem}

\subsection{Counting the Random Coins in Algorithm \ref{alg:threshold-sampling}}\label{sec:derandomization}
Since Algorithm \ref{alg:threshold-sampling} relies on picking a set of $m$
elements uniformly without replacement from the set
$\set{1,2,\ldots,n^{2\beta}}$ or a subset thereof, we need to know how well a
bunch of $k$ random bits can approximate such a choice, given that $m$ is not
necessarily a power of two. For the time being, let us call
$\omega\in\set{0,1}^*$ an auxiliary lot of random coins that is (implicitly)
available to Algorithm \ref{alg:threshold-sampling}. Our goal is proving
$\len{\omega}\in\poly(n)$ to verify that the selection is doable by a
probabilistic polynomial-time algorithm, to which we can add $\omega$ as
another input.

Specifically, the problem is to choose a random subset (of size $n$ in line
\ref{lbl:threshold-sampling-select1} or size $m$ in line
\ref{lbl:threshold-sampling-select2} of Algorithm
\ref{alg:threshold-sampling}) from a given total of $N$ elements in $U$. In
the following, let us write $m$ for the size of the selected subset.
Furthermore, assume $U$ to be canonically ordered (as a subset of $\N$).

We do the selection by randomly permuting a vector of indicator variables,
defined with $m$ 1's followed by $N-m$ zeroes (i.e., permute the bits of the
word $1^m0^{N-m}$). The selected subset $W\subset U=\set{u_1,\ldots,u_N}$ is
retrieved from the permuted output $(b_1,\ldots,b_N)\in\set{0,1}^N$ by
including $u_i\in W\iff b_i=1$. This procedure is indeed correct for our
purposes, since every $m$-element subset $W\subseteq U$ corresponds to a word
$w'=\set{0,1}^N$, where $w'$ contains exactly $m$ 1-bits at the positions of
elements that were selected into $W$. The representative word $w'$ can thus
be obtained by permuting the word $w=1^m0^{N-m}$, and we count the number of
permutations $\pi$ that yield $w'=\pi(w)$. There are $N!$ permutations in
total. For any fixed permutation $\pi$, swapping the 1's within their fixed
positions leaves $\pi$ unchanged, so the number $N!$ reduces by a factor of
$m!$ for $m$ 1-bits. Likewise, permuting the $(N-m)$ zero-bits only has no
effect, so another $(N-m)!$ cases are divided out. If our choice of $\pi$ is
uniform, the chance to draw any $m$-element subset by this permutation
approach is therefore given by $(m!(N-m)!)/N!=1\slash\binom N m$, which
matches our assumption for the threshold functions in Section
\ref{sec:threshold-sampling}.

Thus, the random selection of an $m$-element subset boils down to a matter of
producing a random permutation of $N=\abs{U}$ elements. We use a Fisher-Yates
shuffle to do this, which requires a method to select an integer $i$
uniformly at random within a prescribed range $i_{\min}\leq i\leq i_{\max}$.

The necessary random integers are obtained by virtue of the auxiliary string
$\omega$. For a single integer, let us take $k$ bits
$b_1,b_2,\ldots,b_k\in\set{0,1}$ from $\omega$, where the exact count will be
specified later. These $k$ bits define a real-valued random quantity $r$ by
setting $r:=(0.b_1b_2b_3\ldots b_k)_2 = \sum_{i=1}^k b_i\cdot 2^{-i}\in
[0,1)$. Note that $r$ actually ranges within the discrete set $R=\set{j\cdot
2^{-k}: j=0,1,2,\ldots,2^k-1}$. To convert $r$ into a random integer in the
desired range $\set{i_{\min},i_{\min}+1,\ldots,i_{\max}}$, we divide the
interval $[0,1)$ into $i_{\max}-i_{\min}+1$ equally spaced intervals of width
$h=1/(i_{\max}-i_{\min}+1)$, and output the index of the sub-interval that
covers $r$ (the process is very similar to the well-known inversion method to
sample from a given discrete probability distribution). This method only
works correctly if $r$ is a continuously distributed random quantity within
$[0,1)$, and is biased when $r$ has a finite mantissa (i.e., is a rational
value). So, our first step will be comparing the ``ideal'' to the ``real''
setting.

If the sampling were ``ideal'', then $r$ would be continuously and uniformly
($c.u.$) distributed over $[0,1)$. With $h$ being the spacing of $[0,1)$, the
method outputs the index $i_0$ with likelihood
\[
\Pr_{c.u.}(i_0)=\int_{i_0\cdot h}^{(i_0+1)\cdot h}1dt=h.
\]

Next, we consider the event of outputting $i_0$ considering that $r$ is
discrete and uniformly ($d.u.$) distributed over $R$, with the probabilities
$\Pr_{d.u.}(r=j\cdot 2^{-k})=2^{-k}$. The output index is $i_0$ if $r\in
[i_0\cdot h,(i_0+1)\cdot h)$. This interval covers all indices $j$ satisfying
$j\cdot 2^{-k}\geq i_0\cdot h$ and $j\cdot 2^{-k}<(i_0+1)\cdot h$, i.e., all
of which lead to the same output $i_0$. Since each possible $r$ occurs with
the same likelihood $2^{-k}$, we get
\[
\Pr_{d.u.}(i_0) = q = \hspace{-4mm}\sum_{j=\ceil{2^k i_0 h}}^{\ceil{2^k(i_0+1)
h}-1}\hspace{-4mm}2^{-k}=2^{-k}\underbrace{\left(\ceil{2^k\cdot(i_0+1)\cdot
h}-\ceil{2^k i_0 h}\right)}_{=:D}
\]
Consider the approximation $\tilde q = 2^{-k}\tilde D$, where $\tilde
D:=2^k\cdot(i_0+1)\cdot h-2^k i_0 h$. Obviously, $|D-\tilde D|\leq 2$, so
that $\abs{q-\tilde q}\leq 2\cdot 2^{-k}=2^{-k+1}$, and therefore, since
$\tilde q = h=\Pr_{c.u.}(i_0)$,
\begin{equation}\label{eqn:single-integer-accuracy}
  \big|\!\Pr_{d.u.}(i_0)-\Pr_{c.u.}(i_0)\big|=\big|\!\Pr_{d.u.}(i_0)-h\big|\leq 2^{-k+1},
\end{equation}
where $i_0$ is an arbitrary integer in the prescribed range $\set{i_{\min},
i_{\min}+1, \ldots, i_{\max}}$, and $k$ is the number of bits in the value
$r=0.b_1b_2\ldots b_k$, which determines the output $i_0$ as $i_0\gets
\floor{r/h}$ for $h=1/(i_{\max}-i_{\min}+1)$.

For the complexity of this procedure, note that all these operations are
doable in polynomial time in $k, \log(i_{\min})$ and $\log(i_{\max})$. Let
us now turn back to the problem of producing a ``almost uniform'' random
permutation by the Fisher-Yates algorithm. In essence, the sought permutation
is created by choosing the first element $\pi(1)$ from the full set of $N$
elements, then retracting $\pi(1)$ from $U$, and choosing the second element
from the remaining $N-1$ elements, and so forth.

If we denote the so-obtained sequence of integers as $i_N, i_{N-1}, \ldots,
i_1$, a uniform choice of the permutation means to draw any possible such
sequence with likelihood
\begin{equation}\label{eqn:uniformly-random-permutation}
  \Pr_{\text{unif}}(i_N,i_{N-1},\ldots,i_1)=\prod_{j=0}^{N-1}\frac 1{N-j},
\end{equation}
since the bits taken from $\omega$ to define $r$ are stochastically
independent in each round.

Our current task is thus comparing this likelihood to the probability of
drawing the same sequence under random choices made upon repeatedly taking
chunks of $k$ bits from the auxiliary input $\omega$. As a reminder of this,
let us replace the measure $\Pr_{d.u.}$ by $\Pr_{\omega}$ in the following,
and keep in mind that the two are the same (based on the procedure described
before).

Note that the output in the $j$-th step is the integer $i_j$ that satisfies
$|\Pr_{\omega}(i_j)-h_j|<2^{-k+1}$, where $h_j$ is the spacing of the
interval (determined by the size of the urn from which we draw; in the $j$-th
step, we have $h_j=1/(N-j)$).

Since the construction of every $i_{j+1}$ is determined by a fresh and
stochastically independent lot of $k$ bits from $\omega$, we have

\begin{align}\label{eqn:joint-likelihood-fully-decomposed}
\Pr_{\omega}(i_N,i_{N-1},\ldots,i_1) = \prod_{j=0}^{N-1}\Pr_{\omega}(i_{j+1}).
\end{align}
Next, we shall pin down the number $k$, which determines how accurate
\eqref{eqn:joint-likelihood-fully-decomposed} approximates
\eqref{eqn:uniformly-random-permutation}. Fix $k=N^2+2$, so that
(asymptotically in $N$ and hence $k$) for $0\leq j<N$,
\[
2^{-k+1}<2^{-N^2}<2^{-N}\frac 1 N<2^{-N}\cdot\frac 1{N-j}.
\]
Combining this with \eqref{eqn:single-integer-accuracy} and recalling that
$h=1/(N-j)$, we can bound every term in
\eqref{eqn:joint-likelihood-fully-decomposed} as
\begin{align*}
    \Pr_{\omega}(i_j)&\in\left(\frac 1{N-j}-2^{-N}\cdot\frac 1{N-j},\frac 1{N-j}+2^{-N}\cdot\frac 1{N-j}\right)\\
&=\left(\frac 1{N-j}\cdot\left[1-2^{-N}\right],\frac 1{N-j}\cdot\left[1+2^{-N}\right]\right).
\end{align*}
In particular, this gives a nontrivial lower bound\footnote{indeed, also an
upper bound, but this is not needed here.} to
\eqref{eqn:joint-likelihood-fully-decomposed},
\begin{align}
\Pr_{\omega}(i_N,i_{N-1},\ldots,i_1)&\geq \big(1-2^{-N}\big)^N\cdot\prod_{j=0}^{N-1}\frac 1{N-j}\nonumber\\
&=\big(1-2^{-N}\big)^N\cdot\Pr_{\text{unif}}(i_N,i_{N-1},\ldots,i_1)\label{eqn:relative-error-lower-bound}.
\end{align}

The important part herein was the setting of $k=N^2+2$ to draw a single
integer. Our goal was the selection of a set of $m(N)\leq N$ out of $N$
elements, and we need $N$ integers to get the entire permutation of
$\set{1,2,\ldots N}$. So, the total lot of necessary i.i.d. random coins in
$\omega$ is $N\cdot k\leq N\cdot (N^2+2)\in O(N^3)$. Since, $N\leq
n^{2\beta}$ in every case (see Algorithm \ref{alg:threshold-sampling}), we
have $\len{\omega}\leq\poly(n)$ as claimed.

For another intermediate cleanup, let us compile our findings into the
probabilistic selection Algorithm \ref{alg:selection} (that is actually a
deterministic procedure with an auxiliary lot $\omega$ of random coins). Note
that our specification of the algorithm returns the (potentially empty)
remainder of unused bits in $\omega$. This will turn out necessary over
several invocations of the selection algorithm during the threshold sampling,
to avoid re-using randomness there.

\begin{algorithm}[h!]
\caption{Uniformly Random Selection}\label{alg:selection}
\begin{algorithmic}[1]
\Require a set $U=\set{u_1,u_2,\ldots,u_N}$ of cardinality $N$, and a string
$\omega$ consisting of $\geq N^3+2N$ i.i.d. uniform random bits.

\Ensure a uniformly random subset $W\subseteq U$ of cardinality $m$ for which
\eqref{eqn:relative-error-lower-bound} holds, and the rest of the random bits
in $\omega$ (that have not been used).

\Function{Select}{$m,U,\omega$}%
\State $W\gets \emptyset; N\gets \abs{U}$%
\State $k \gets N^2+2$%
\For{$j=0,1,\ldots, N-1$}\Comment{construct the permutation}
    \State $r\gets (0.b_1b_2\ldots b_k)_2$\label{lbl:setting-r}\Comment{$\omega=b_1b_2\ldots
    b_kb_{k+1}b_{k+2}\ldots$}
    \State $\omega\gets b_{k+1}b_{k+2}\ldots$\Comment{delete used bits from
    $\omega$}
    \State $h\gets 1/(N-j)$%
    \State define $\pi(j) := \floor{r/h}$%
\EndFor%
\State $w' \gets \pi(1^m0^{N-m})$\label{lbl:permute}\Comment{$w' = b_1b_2\ldots b_N$}%
\State \textbf{for} $i=1,2,\ldots,N$, put $u_i\in W\iff b_i=1$\label{lbl:selection-output}%
\State \Return{$(W,\omega)$}%
\EndFunction%
\end{algorithmic}
\end{algorithm}

To finally specify Algorithm \ref{alg:threshold-sampling} with the auxiliary
input $\omega$, we simply need to replace the truly random and uniform
selection of subsets in Algorithm \ref{alg:threshold-sampling} (lines
\ref{lbl:threshold-sampling-select1} and
\ref{lbl:threshold-sampling-select2}) by our described selection procedure
based on random coins from $\omega$, which is algorithm \textsc{Select}. For
convenience of the reader, the result is given as Algorithm
\ref{alg:derandomized-threshold-sampling}.

\begin{algorithm}[h!]
\caption{Probabilistic Threshold
Sampling}\label{alg:derandomized-threshold-sampling}
\begin{algorithmic}[1]
\Require a bit $b\in\set{0,1}$, an integer $n\in\N$, and a word
$\omega\in\set{0,1}^{\poly(n)}$.%
\Ensure a random finite set $W\subset\Sigma^*$ whose cardinality is
polynomial in $n$, and rest of the bits in $\omega$ that have not been used.%
\Function{PTSamp}{$b,n,\omega$}%
\State $m\gets n^{-2\beta/\alpha}\cdot\mu_*(n^{2\beta},n^{-\beta})$%
\Comment{subst. $N=n^{2\beta}$ in eqs. \eqref{eqn:p-bounds}, \eqref{eqn:threshold-number}}%
\If{$b=1$}\Comment{exceed the threshold $m^*$}%
    \State
    $(U,\omega)\gets\textsc{Select}(n,\set{1,2,\ldots,n^{2\beta}},\omega)$\Comment{restrict $U$}%
\Else\Comment{for $b=0$, undercut the threshold $m^*$}%
    \State $U\gets\set{1,2,\ldots,n^{2\beta}}$\Comment{use all of $U$}%
\EndIf%
\State $(W,\omega) \gets \textsc{Select}(m,U,\omega)$\Comment{choose $m$ elements}%
\State \Return{$(W,\omega)$}
\EndFunction%
\end{algorithmic}
\end{algorithm}

To lift Lemma \ref{lem:threshold-sampling} to the new setting of Algorithm
\ref{alg:derandomized-threshold-sampling}, let us apply
\eqref{eqn:relative-error-lower-bound} to the likelihood of the events $Q_m$
and $\neg Q_m$, which mean ``hitting an element from $\LN$ within a selection
of $m$ elements'', or not, respectively.

Specifically, we are interested in the likelihoods $\Pr_{\omega}(Q_m)$ and
$\Pr_{\omega}(\neg Q_m)$, which under ``idealized'' sampling are bounded from
below by \eqref{eqn:sampling-from-L-complement} and
\eqref{eqn:sampling-from-L}, but are now to be computed under the sampling
using the auxiliary string $\omega$.

For the general event $Q\in\set{Q_m,\neg Q_m}$, let us write the likelihood
$\Pr_{\omega}(Q)$ as a sum over all its (disjoint) atoms, we get
\begin{align*}
\Pr_{\omega}(Q)&=\sum_{A\in Q}\Pr_{\omega}(A)\stackrel{\eqref{eqn:relative-error-lower-bound}}{\geq} \sum_{A\in Q}\big(1-2^{-N}\big)^N\Pr_{\text{unif}}(A)\\
&=\big(1-2^{-N}\big)^N\sum_{A\in Q}\Pr_{\text{unif}}(A)=\big(1-2^{-N}\big)^N\cdot\Pr_{\text{unif}}(Q).
\end{align*}
So, we can re-state Lemma \ref{lem:threshold-sampling} in its new version,
using \eqref{eqn:relative-error-lower-bound}. The (yet unknown) term $R(n)$
measuring the running time for a selection is obtained by inspecting
Algorithm \ref{alg:selection}: with $N=\abs{U}$, we need $O(N)$ iterations to
construct the permutation, needing $O(N^2)$ bits per iteration of the loop
(line \ref{lbl:setting-r}), and another $O(N)$ iterations to permute and
deliver the output (lines \ref{lbl:permute} and \ref{lbl:selection-output}).
The overall running time thus comes to $O(N^3)$. Since the selection in
Algorithm \ref{alg:threshold-sampling} is done on sets of size
$N=n^{2\beta}$, the effort for a selection is $R(n)\in O(n^{6\beta})$ in
Lemma \ref{lem:threshold-sampling}.
\begin{lem}\label{lem:Probabilistic-Threshold-sampling}
Algorithm \ref{alg:derandomized-threshold-sampling} runs in polynomial time
$O(n^{8\beta}\log n)$ and outputs a set $W$ of cardinality polynomial in $n$
(since it is upper bounded by the algorithm's running time), which satisfies:
\begin{align}
  \Pr(W\cap L_0\neq\emptyset|b=1)&\geq
      (1-2^{-{n^{2\beta}}})^{n^{2\beta}}\cdot(1-2^{-\Omega(n^\gamma)})\label{eqn:det-sampling-error-1}\\
  \Pr(W\cap L_0=\emptyset|b=0)&\geq
      (1-2^{-{n^{2\beta}}})^{n^{2\beta}}\cdot 2^{-n^{-2\beta/\alpha}}\label{eqn:det-sampling-error-2},
\end{align}
where the (positive) constants $\alpha,\beta$ and $\gamma$ depend only on the
language $L_0$.
\end{lem}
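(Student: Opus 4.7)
The plan is to promote Lemma~\ref{lem:threshold-sampling} to the explicit-coin setting of Algorithm~\ref{alg:derandomized-threshold-sampling} by reusing the per-atom accuracy bound \eqref{eqn:relative-error-lower-bound} already derived in this section, and to read the running time directly off the pseudocode.

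First I would fix $b\in\set{0,1}$ and let $Q$ denote the event $W\cap L_0\neq\emptyset$ when $b=1$, or its complement when $b=0$. Every possible output of \textsc{PTSamp} is uniquely determined by the Fisher--Yates sequences generated inside its one or two calls to \textsc{Select}, so $Q$ decomposes as a disjoint union of atomic outcomes $A$. For each such $A$ the bound \eqref{eqn:relative-error-lower-bound} supplies $\Pr_\omega(A)\geq (1-2^{-N})^N\cdot\Pr_{\text{unif}}(A)$, with $N\leq n^{2\beta}$ the size of the permuted urn; moreover the two \textsc{Select} invocations (when present) consume disjoint chunks of $\omega$, so the joint atomic probability factorizes as a product of the per-selection probabilities. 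Summing over all $A\in Q$ then pulls out the common accuracy factor and yields $\Pr_\omega(Q)\geq (1-2^{-n^{2\beta}})^{n^{2\beta}}\cdot\Pr_{\text{unif}}(Q)$; substituting the idealized lower bounds \eqref{eqn:sampling-from-L} and \eqref{eqn:sampling-from-L-complement} from Lemma~\ref{lem:threshold-sampling} then recovers exactly \eqref{eqn:det-sampling-error-1} and \eqref{eqn:det-sampling-error-2}.

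For the running time I would read off the pseudocode: a single call to \textsc{Select} on an urn of size $\widetilde N\leq n^{2\beta}$ runs $\widetilde N$ Fisher--Yates iterations, each consuming $k=\widetilde N^2+2$ bits of $\omega$ and performing arithmetic on integers of bitlength $O(\log\widetilde N)$, for a total of $O(n^{6\beta})$ per invocation. Algorithm~\ref{alg:derandomized-threshold-sampling} issues at most two such invocations and additionally evaluates $\mu_*(n^{2\beta},n^{-\beta})$ via elementary real arithmetic on $O(n^{2\beta})$-bit operands; absorbing an $O(\log n)$ bit-overhead per operation then fits within the claimed $O(n^{8\beta}\log n)$, matching the multiplicative form $O(n^{2\beta}\log n\cdot R(n))$ of Lemma~\ref{lem:threshold-sampling} with $R(n)\in O(n^{6\beta})$. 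The cardinality bound $|W|\leq N=n^{2\beta}$ is immediate from the pseudocode.

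The step I expect to need the most care is the handling of the $b=1$ branch, where two \textsc{Select} invocations must be merged into a single atomic decomposition: I need to verify that the disjointness of their input chunks of $\omega$ really yields the joint factorization that lets \eqref{eqn:relative-error-lower-bound} apply to both simultaneously, producing a product of per-atom accuracy factors $(1-2^{-n^{2\beta}})^{n^{2\beta}}$ and $(1-2^{-n})^{n}$ that both tend to $1$ rapidly. The statement absorbs the subleading inner-selection factor into the dominant one; once that absorption (or a slight sharpening of the stated bound) is justified, the remainder is a mechanical substitution into Lemma~\ref{lem:threshold-sampling}.
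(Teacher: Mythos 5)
Your derivation follows the paper's own proof almost line for line: decompose $\Pr_\omega(Q)$ into disjoint atoms, pull out the accuracy factor from \eqref{eqn:relative-error-lower-bound}, substitute the idealized bounds \eqref{eqn:sampling-from-L} and \eqref{eqn:sampling-from-L-complement}, and read $R(n)\in O(n^{6\beta})$ off the pseudocode to combine with Lemma~\ref{lem:threshold-sampling}. You have, however, correctly flagged a detail that the paper's proof glosses over: in the $b=1$ branch, \textsc{PTSamp} invokes \textsc{Select} \emph{twice} (once on the urn of size $n^{2\beta}$ to produce $U$, once on $U$ itself to produce $W$), so the atomic probability factors as a product of two per-permutation terms, and applying \eqref{eqn:relative-error-lower-bound} to each yields the accuracy prefactor $(1-2^{-n^{2\beta}})^{n^{2\beta}}(1-2^{-n})^n$ rather than the single factor appearing in \eqref{eqn:det-sampling-error-1}. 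Your tentative idea that the inner factor could be ``absorbed'' into the outer one does not work for a lower bound: both factors lie strictly below $1$, and $(1-2^{-n})^n$ is in fact the \emph{smaller} of the two (its deficit from $1$ is $\Theta(n2^{-n})$ versus $\Theta(n^{2\beta}2^{-n^{2\beta}})$), so the correct product is strictly smaller than what the lemma states. The gap is harmless downstream --- Lemma~\ref{lem:conditioning} and the arguments in Section~\ref{sec:weak-owf-exist} only use that these probabilities tend to $1$, and the product does so as well --- but to be literally correct, \eqref{eqn:det-sampling-error-1} should carry the full product (or, equivalently, the dominant deficit $(1-2^{-n})^n$). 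The $b=0$ branch calls \textsc{Select} once, so \eqref{eqn:det-sampling-error-2} is fine as stated.
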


\begin{rem}
It is of central importance to note that our proof is based on random draws
of sets that provably contain the sought element with a probabilistic
assurance but without an explicit certificate. In other words, although the
sampling guarantees high chances of the right elements being selected and
despite that we know what we are looking for, we cannot efficiently single
out any particular output elements, which was the hit.
\end{rem}

\subsection{Partial Bijectivity}\label{sec:bijectivity}
By Definition \ref{def:owf}, we can consider the input string $w=b_1b_2\ldots
b_\ell \in\set{0,1}^\ell$ to our (to be defined) \ac{OWF} as a bunch of
i.i.d. uniformly random bits, which we can split into a prefix word
$v=b_1\ldots b_n$ of length $\len{v}=n$ and a postfix $\omega=b_{n+1}\ldots
b_\ell$ so that $\len{\omega}\geq N^3+2N$ (as Algorithm \ref{alg:selection}
requires), with $N=n^{2\beta}$. For sufficiently large $\ell$, this division
yields nonempty strings $v$ and $\omega$, when $n$ is set to $n(\ell):=
\max\set{i\in\N: i^{6\beta} + 2i^{2\beta} + i \leq \ell}$, i.e., the largest
length $n$ for which the remainder $\omega$ is sufficient to do the
probabilistic sampling under Algorithm
\ref{alg:derandomized-threshold-sampling}. It is easy to see that
$n(\ell)\to\infty$ as $\ell\to\infty$, and the time-complexity to compute
$n(\ell)$ is $\poly(\ell)$.

Based on Figure \ref{fig:owf-construction}, our \ac{OWF} $f_\ell$ will then
be defined on $w$ as a bitwise mapping of the prefix $v$ under the
probabilistic threshold sampling Algorithm
\ref{alg:derandomized-threshold-sampling}, which ``encodes'' the $1/0$-bits
of $v$ as yes/no-instances of the decision problem $\LN$. Formally, this is:

\begin{equation}\label{eqn:owf-preprocessing}
\left.
\begin{array}{l}
 \text{for } i=1,2,\ldots,n,\\
 \qquad (W_i,\omega) \gets \textsc{PTSamp}(b_i, n, \omega)\\
f_\ell(w) = f_\ell(b_1\ldots b_nb_{n+1}\ldots b_\ell) := (W_1,\ldots,W_n).
\end{array}\right\}
\end{equation}

Our objective in the following is \emph{partial bijectivity} of that mapping,
in the sense of assuring that the first bit of the unknown input prefix $w$
to $f_\ell$ can uniquely be computed from the image $f_\ell(w)$, even though
$f_\ell$ may not be bijective. This invertibility will of course depend on
the parameter $\ell$, which determines the value $n$ and through it controls
the likelihood for a sampling error (as quantified by Lemma
\ref{lem:Probabilistic-Threshold-sampling}). If this likelihood is
``sufficiently small'' in the sense that the next Lemma
\ref{lem:bijectivity-criterion} makes rigorous, then $f_\ell$ is indeed
invertible on its first input bit.

\begin{lem}\label{lem:bijectivity-criterion}
Let $X,Y$ be finite sets of equal cardinality and let $f:X\to Y$ be a
deterministic function, where $\Pr(f(x)=f(x'))\leq p$ for any distinct
$x,x'\in X$ drawn uniformly at random. If $p<\frac 2{\abs{X}^2-\abs{X}}$,
then $f$ is bijective.
\end{lem}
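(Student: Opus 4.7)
The plan is to argue by contrapositive via a direct counting of collisions. First, observe that since $|X|=|Y|$ and both are finite, a function $f:X\to Y$ is bijective if and only if it is injective; so assuming $f$ is not bijective, there must exist a pair of distinct elements $x_1,x_2\in X$ with $f(x_1)=f(x_2)$. This is the only structural fact I need.

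Next, I make the probability in the hypothesis fully explicit. The sample space of ``two distinct elements drawn uniformly at random from $X$'' is naturally identified with the set of ordered pairs $\{(x,x')\in X\times X:x\neq x'\}$, which has cardinality $|X|^2-|X|$. Every such pair is equally likely, so
\[
\Pr(f(x)=f(x'))\;=\;\frac{|\{(x,x')\in X\times X: x\neq x',\;f(x)=f(x')\}|}{|X|^2-|X|}.
\]
The collision witness $(x_1,x_2)$ from the previous step contributes at least the two ordered pairs $(x_1,x_2)$ and $(x_2,x_1)$ to the numerator, forcing the numerator to be $\geq 2$.

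Combining, if $f$ is not bijective then $\Pr(f(x)=f(x'))\geq \frac{2}{|X|^2-|X|}$, contradicting the assumption $p<\frac{2}{|X|^2-|X|}$. Hence $f$ must be bijective.

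There is really no hard step here: the whole content is ``one collision forces at least two ordered colliding pairs, and $|X|^2-|X|$ is exactly the number of ordered pairs of distinct elements.'' The only thing to be slightly careful about is to use \emph{ordered} rather than unordered pairs consistently, because the factor of two between the two conventions is exactly where the constant $2$ in the hypothesis $p<2/(|X|^2-|X|)$ comes from.
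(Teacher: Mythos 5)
Your proposal is correct and follows essentially the same route as the paper: reduce bijectivity to injectivity (using $\abs{X}=\abs{Y}$ and finiteness), assume a collision, and count that one colliding unordered pair forces $\Pr(f(x)=f(x'))\geq 1/\binom{\abs{X}}{2}=2/(\abs{X}^2-\abs{X})$, contradicting the hypothesis. You are slightly more explicit than the paper in pinning down the sample space as ordered pairs of distinct elements, which is a welcome clarification since the factor of $2$ is exactly where that bookkeeping matters.
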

\begin{proof}
It suffices to show injectivity of $f$, since the finiteness of $X$ and $Y$
together with $\abs{X}=\abs{Y}$ and injectivity of $f$ implies surjectivity
and hence invertibility of $f$. Towards the contradiction, assume that two
values $x\neq x'$ exist that map onto $z=f(x)=f(y)$, i.e., $f$ is not
injective. Call $p$ the probability for this to happen, taken over all pairs
$(x,x')\in X\times X$ (the probability can be taken as relative frequency;
the counting works since $f$ is deterministic). This means that
$p=\Pr(f(x)=f(x'))\geq 1/ \binom{\abs{X}}2$, which contradicts our
hypothesis.
\end{proof}

Towards applying Lemma \ref{lem:bijectivity-criterion}, we will focus on the
first coordinate function
\[
f_{\ell,1}(b_1) =
\textsc{PTSamp}(b_1,n,\omega),
\]
with inputs as specified above (see \eqref{eqn:owf-preprocessing}).

Since the input to $f_{\ell,1}$ is a pair
$(b_1,\omega)\in\set{0,1}\times\set{0,1}^*=X$, we can partition the pre-image
space $X$, based on the first input bit, into the two-element family $\X_\ell
= \set{[0],[1]}$ with $[b]:=\set{b\omega: \omega\in\set{0,1}^{\ell-1}}$ for
$b=0,1$. In this view, we can think of $f_{\ell,1}$ acting
\emph{deterministically} on $\X$, since the randomness $\omega$ used in
Algorithm \ref{alg:derandomized-threshold-sampling} is supplied with the
input, but the equivalence class is the same for all possible $\omega$. For
the sake of having $f_{\ell,1}$ map into a two-element image set, we will
partition the output set $\mathcal{Z}_m=\set{W=\set{w_1,\ldots,w_{m}}:
w_i\in\Sigma^*\,\forall i}=f_{\ell,1}(\Sigma^\ell)$ with $m=m({N})$ in a
similar manner as $\Y_{\ell}=\set{\Y_{\ell}^{(0)}, \Y_{\ell}^{(1)}}$ with
$\Y_{\ell}^{(0)}:=f_{\ell,1}([0])$ and $\Y_{\ell}^{(1)}=f_{\ell,1}([1])$ for
$[0],[1]\in\X_\ell$. Then, $f_{\ell,1}:\X_\ell\to\Y_\ell$, with
$\abs{\X_\ell}=\abs{\Y_\ell}=2$ for every $\ell$.

Take $x=0\omega\neq x'=1\omega'$ as random representatives of $[0]$ and
$[1]$. The likelihood of the coincidence $f_{\ell,1}(x)=f_{\ell,1}(x')$ is
then determined by the random coins $\omega,\omega'$ in $x$ and $x'$, which
directly go into Algorithm \ref{alg:derandomized-threshold-sampling}. The
partition induces an equivalence relation $\sim$ on the image set of
$f_{\ell,1}$, by an appeal to which we can formulate the criterion of Lemma
\ref{lem:bijectivity-criterion},
\begin{align}
\Pr_{\omega,\omega'}&(f_{\ell,1}([0])\sim f_{\ell,1}([1]))=\Pr_{\omega,\omega'}\left(\substack{\phantom{\lor}\left[f_{\ell,1}([0])\in\Y_\ell^{(0)}\land f_{\ell,1}([1])\in\Y_\ell^{(0)}\right]\\\lor\left[f_{\ell,1}([0])\in\Y_\ell^{(1)}\land f_{\ell,1}([1])\in\Y_\ell^{(1)}\right]}\right)\nonumber\\
&\leq\Pr_{\omega,\omega'}\left(f_{\ell,1}([0])\in\Y_\ell^{(0)}\land f_{\ell,1}([1])\in\Y_\ell^{(0)}\right)\nonumber\\
&\qquad+ \Pr_{\omega,\omega'}\left(f_{\ell,1}([0])\in\Y_\ell^{(1)}\land f_{\ell,1}([1])\in\Y_\ell^{(1)}\right)\nonumber\\
&\leq\Pr_{\omega}\left(f_{\ell,1}([0])\in\Y_\ell^{(1)}\right)+\Pr_{\omega'}\left(f_{\ell,1}([1])\in\Y_\ell^{(0)}\right)\label{eqn:invertibility-criterion},
\end{align}
where the first inequality is the union bound, and the second inequality
follows from the general fact that for any two events $A,B$, we have
$\Pr(A\land B)\leq \min\set{\Pr(A),\Pr(B)}$.

The last two probabilities have been obtained along the proof of Lemma
\ref{lem:Probabilistic-Threshold-sampling}, since:
\begin{enumerate}
  \item $f_{\ell,1}([0])$ means sampling towards avoidance of drawing an
      element from $L_0$, the likelihood of which is bounded by
      \eqref{eqn:det-sampling-error-2}. Therefore,
      $\Pr(f_{\ell,1}([0])\in\Y_\ell^{(1)})=\Pr(W\cap
      L_0\neq\emptyset|b=0)\leq 1-(1-2^{-{n^{2\beta}}})^{n^{2\beta}}\cdot
      2^{-n^{-2\beta/\alpha}}$
  \item $f_{\ell,1}([1])$ means sampling towards drawing at least one
      element from $\LN$, which by \eqref{eqn:det-sampling-error-1},
      implies $\Pr(f_{\ell,1}([1])\in\Y_\ell^{(0)})=\Pr(W\cap
      L_0=\emptyset|b=1)\leq
      1-(1-2^{-{n^{2\beta}}})^{n^{2\beta}}\cdot(1-2^{-\Omega(n^\gamma)})$.
\end{enumerate}
Substituting these bounds into \eqref{eqn:invertibility-criterion}, the
hypothesis of Lemma \ref{lem:bijectivity-criterion} is verified if we let
$\ell$ grow so large that the implied value of $n$ satisfies
\[
2-\underbrace{(1-2^{-{{n}^{2\beta}}})^{{n}^{2\beta}}}_{\to 1}
\cdot\Big[\underbrace{(1-2^{-\Omega(n^\gamma)})}_{\to 1} +
\underbrace{2^{-{n}^{-2\beta/\alpha}}}_{\to 1}\Big]<\frac 2{\abs{\X}^2-\abs{\X}}=1,
\]
to certify the invertibility of $f_{\ell,1}$.

Note that Lemma \ref{lem:bijectivity-criterion} asserts only that the first
bit of the preimage $w$ is determined by the image under $f_{\ell,1}$, but
does so nonconstructively. That is, we only know the the action of
$f_{\ell,1}$ to be either
\begin{align}
  \phantom{or\quad}[0]\mapsto \Y_\ell^{(0)}, &\quad [1]\mapsto \Y_\ell^{(1)}\label{eqn:mapping1}\\
\text{or}\quad[0]\mapsto \Y_\ell^{(1)}, &\quad [1]\mapsto \Y_\ell^{(0)}\label{eqn:mapping2},
\end{align}
where even the possibility of $f_{\ell,1}^{-1}$ being defined alternatingly
by both, \eqref{eqn:mapping1} and \eqref{eqn:mapping2}, is not precluded.

Conditional on \eqref{eqn:mapping1}, the inverse $f_{\ell,1}^{-1}$ is
actually the characteristic function $\chi_{\LN}$ of the language $\LN$ (as
defined in Lemma \ref{lem:l-star-hardness}). However, claiming that
$f_{\ell,1}^{-1}=\chi_{\LN}$ uniformly holds is only admissible if
\eqref{eqn:mapping1} holds for the inputs to $f_{\ell,1}$. We define this to
be an event on its own in the following, denoted as
\begin{equation}\label{eqn:correct-mapping}
  E_\ell := \set{w=(b_1,\ldots,b_\ell)\in\set{0,1}^\ell:
f_{\ell,1}(w)\in\Y_\ell^{(b_1)}}.
\end{equation}
By construction, the conditioning on $E_\ell$ is not too restrictive and even
fading away asymptotically, as told by the next result:
\begin{lem}\label{lem:conditioning}
Let the event $E_\ell$ be defined by \eqref{eqn:correct-mapping}, and let $A$
be any event in the same probability space as $E_\ell$. Then,
$\lim_{\ell\to\infty}\Pr(A|E_\ell)=\Pr(A)$.
\end{lem}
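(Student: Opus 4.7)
The plan is to reduce the statement to the single asymptotic fact that $\Pr(E_\ell)\to 1$ as $\ell\to\infty$, after which standard conditional-probability estimates finish the job.

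First I would establish $\Pr(E_\ell)\to 1$. The event $E_\ell$ says that the first coordinate $f_{\ell,1}$ behaves ``correctly'' on its input bit $b_1$, i.e.\ the sampled set $W_1$ lies in the expected class $\Y_\ell^{(b_1)}$. Splitting on the value of $b_1$ and applying the two error bounds from Lemma \ref{lem:Probabilistic-Threshold-sampling},
\begin{align*}
\Pr(E_\ell) &= \tfrac12\Pr(W\cap L_0\neq\emptyset\mid b=1)+\tfrac12\Pr(W\cap L_0=\emptyset\mid b=0)\\
&\geq \tfrac12(1-2^{-n^{2\beta}})^{n^{2\beta}}\bigl[(1-2^{-\Omega(n^\gamma)})+2^{-n^{-2\beta/\alpha}}\bigr],
\end{align*}
with $n=n(\ell)\to\infty$ as $\ell\to\infty$. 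Each factor tends to $1$ (the first because $(1-2^{-N})^N\to 1$, the second because both summands tend to $1$), so $\Pr(E_\ell)\to 1$ and consequently $\Pr(\overline{E_\ell})\to 0$.

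Next, for any event $A$ in the same probability space, write $\Pr(A\mid E_\ell)=\Pr(A\cap E_\ell)/\Pr(E_\ell)$. The numerator is squeezed between $\Pr(A)-\Pr(\overline{E_\ell})$ and $\Pr(A)$ because
\[
\Pr(A)=\Pr(A\cap E_\ell)+\Pr(A\cap\overline{E_\ell}),\qquad 0\leq\Pr(A\cap\overline{E_\ell})\leq\Pr(\overline{E_\ell}).
\]
Since $\Pr(\overline{E_\ell})\to 0$, we get $\Pr(A\cap E_\ell)\to\Pr(A)$, and dividing by $\Pr(E_\ell)\to 1$ yields $\Pr(A\mid E_\ell)\to\Pr(A)$.

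There is no real obstacle; the only point needing care is verifying that each of the factors bounding $\Pr(E_\ell)$ from below genuinely tends to $1$ as $n\to\infty$. The factor $(1-2^{-N})^N$ with $N=n^{2\beta}$ is handled via $\log(1-2^{-N})^N=N\log(1-2^{-N})\sim -N\cdot 2^{-N}\to 0$, and the remaining two factors are immediate from $n^\gamma\to\infty$ and $n^{-2\beta/\alpha}\to 0$ respectively. Once this asymptotic is in hand the squeeze argument is entirely routine.
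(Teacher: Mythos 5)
Your argument is correct and is essentially the paper's own proof: both reduce the claim to $\Pr(E_\ell)\to 1$ (equivalently $\Pr(\neg E_\ell)\to 0$, established from the two sampling-error bounds of Lemma \ref{lem:Probabilistic-Threshold-sampling}) and then apply a routine conditional-probability limit. The only cosmetic difference is that you squeeze $\Pr(A\cap E_\ell)/\Pr(E_\ell)$ directly, while the paper writes the total-probability decomposition $\Pr(A)=\Pr(A\mid E_\ell)\Pr(E_\ell)+\Pr(A\mid\neg E_\ell)\Pr(\neg E_\ell)$; these are interchangeable one-liners once $\Pr(E_\ell)\to 1$ is in hand.
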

\begin{proof}
Observe that $\Pr(\neg E_\ell)=\Pr(f_{\ell,1}([0])\in\Y_\ell^{(1)}\lor
f_{\ell,1}([1])\in\Y_\ell^{(0)})$, and that the last expression, as was shown
before, tends to zero as $\ell\to\infty$. Then, expanding $\Pr(A)$
conditional on $E_\ell$ and $\neg E_\ell$ into
$\Pr(A)=\Pr(A|E_\ell)\Pr(E_\ell)+\Pr(A|\neg E_\ell)\Pr(\neg E_\ell)$, the
claim follows from $\Pr(\neg E_\ell)\to 0$ and $\Pr(E_\ell)=1-\Pr(\neg
E_\ell)\to 1$ when $\ell\to\infty$.
\end{proof}

Conditional on $E_\ell$, we can state that a circuit computing
$f_{\ell,1}^{-1}$ equivalently decides $\LN$. But Lemma
\ref{lem:l-star-hardness} asserts this decision to be impossible with less
than a certain minimum of $t$ steps. This, together with Lemma
\ref{lem:conditioning}, will be the fundament for the concluding arguments in
the next section.

\subsection{Conclusion on the Existence of Weak \acp{OWF}}\label{sec:weak-owf-exist}
Closing in for the kill, let us now return to the original problem of proving
non-emptiness of Definition \ref{def:owf}.

In the following, we let $\ell\in\N$ be arbitrary. Our final \ac{OWF}
$f_{\ell}$ will be a slightly modified version of
\eqref{eqn:owf-preprocessing},
\begin{equation}\label{eqn:final-owf}
\left.
\begin{array}{rcl}
f_\ell: \set{0,1}^\ell&\to& \Y_\ell^n,\\
(b_1,\ldots,b_n,b_{n+1},\ldots,b_\ell)&\mapsto& ([W_1],[W_2],\ldots,[W_n]),\\
\text{where }n &:=& \max\set{i\in\N: i^{6\beta}+2i^{2\beta}+i\leq\ell},\\
\omega_0 &:=& b_{n+1}b_{n+2}\ldots b_\ell\in\set{0,1}^{\ell-n},~\text{and}\\
(W_i,\omega_i) &:=& \textsc{PTSamp}(b_i,\omega_{i-1})\text{ for }i=1,2,\ldots,n.
\end{array}
\right\}
\end{equation}

We proceed by checking the hypothesis of Definition \ref{def:weak-owf}
one-by-one to verify that \eqref{eqn:final-owf} really defines a weak
\ac{OWF}:

\begin{itemize}
  \item Polynomially related input and output lengths: let the length of
      the output be $n'$, and note that $n'\leq\len{w}\cdot n^{2\beta}$ in
      every case. Assume that all words in the set $U$, from which
      Algorithm \ref{alg:derandomized-threshold-sampling} samples, are
      padded up to the maximal bitlength needed for (the numeral)
      $n^{2\beta}$. Since $n\leq \ell$, we get $n'=n\cdot n^{2\beta}\leq
      \ell^{2\beta+1}$. Thus, $n'\leq\poly(\ell)$. Conversely, we can solve
      for $\ell$ to get $\ell\leq (n')^{1/(2\beta+1)}$, and
      $\ell\leq\poly(n')$. Thus, $f_\ell$ has polynomially related input
      and output length.
  \item Length regularity of $f_\ell$: Evaluating $f_\ell(w)$ means
      sampling from a domain $U$ whose maximal element has magnitude $\leq
      n^{2\beta}$, where $n$ satisfies the bound $\ell\geq
      n^{6\beta}+2n^{2\beta}+n$. Since the numeric range of $U$ is
      determined by the length of the input, equally long inputs result in
      equally long outputs of $f_\ell$. Thus, $f_\ell$ is length regular.
  \item $f_\ell$ can be computed by a deterministic algorithm in polynomial
      time: note that $f_\ell$ is defined by algorithm
      \ref{alg:derandomized-threshold-sampling}, which is actually a
      deterministic procedure that takes its random coins from its input
      only. Furthermore, it runs in polynomial time in $n$ (by lemma
      \ref{lem:Probabilistic-Threshold-sampling} and the fact that $n$ in
      \eqref{eqn:final-owf} can be computed in time $\poly(\ell)$). Since
      $n\leq \ell$, the overall time-complexity is also polynomial in
      $\ell$, so Definition \ref{def:owf} is satisfied up to including
      condition 1, since the (component-wise) equality of $f_\ell(w)$ and
      the output of Algorithm \ref{alg:derandomized-threshold-sampling}
      demanded by Definition \ref{def:owf} is here in terms of equivalence
      classes and not their (random) representatives.
\end{itemize}

It remains to verify condition 2 of Definition \ref{def:owf}, and Definition
\ref{def:weak-owf}, respectively. This amounts to exhibiting a polynomial $q$
so that for any polynomial\footnote{To avoid confusion with the relative
density $p$ that was used in Section \ref{sec:threshold-sampling}, we refrain
from denoting the polynomial $p$ appearing in Definition \ref{def:weak-owf}
explicitly, and write $\poly(\ell)$ here instead (also to remind that the
choice of $p$ would be arbitrary anyway).} $\poly(\ell)$ (determining the
size of the inversion circuit $C$), our constructed function is
$(1-1/q(\ell),\poly(\ell))$-one-way for sufficiently large $\ell$. Observe
the order of quantifiers in Definition \ref{def:weak-owf}, which allows the
minimal magnitude of $\ell$ to depend on all the parameters $(\eps,S)$ of the
definition, especially the polynomials $q$ and $\poly(\ell)$ that define
$\eps=1-1/q$ and $S=\poly(\ell)$. We will keep this in mind in the following.
Throughout the rest of this work, let $\C$ denote the class of all circuits
of size polynomial in $\ell$.

Note that even though $f_\ell$ is not (required to be) bijective, the first
bit $b_1$ in the unknown preimage $w=b_1b_2\ldots
b_{\ell}\in\set{0,1}^{\ell}$ is nevertheless uniquely pinned down upon
knowledge of the first set-valued entry in our \ac{OWF}'s output
$\set{\set{w_1,\ldots,w_{m}},\ldots}$ (where $m$ is computed internally by
Algorithm \ref{alg:derandomized-threshold-sampling}). So, to clear up things
and prove $f_\ell$ to be one-way, let us become specific on the language
$L_0$ that we will use. To define this hard-to-decide language, we
instantiate $t,T$ as follows, where our choice is easily verified to satisfy
Assumption \ref{asm:time-hierarchy}:
\begin{itemize}
  \item Let $L_x[a,b]$ be the well-known subexponential yet superpolynomial
      functional $L_x[a,b]:=2^{a\log(x)^b(\log\log x)^{1-b}}$, and put
  \begin{equation}\label{eqn:superpolynomial-t}
   t(x):=L_x[1, 1/2].
  \end{equation}
  \item $T(x) := 2^x$, which is time-constructible.
\end{itemize}
Furthermore, let $C\in\C$ be an arbitrary circuit of polynomial size
$S(\ell)$, which ought to compute any preimage in $f_\ell^{-1}(f_\ell(w))$,
given $f_\ell(w)$ for $w\in\set{0,1}^\ell$ chosen uniformly at random.

\begin{rem}
Note that constructing the diagonal language $L_D$ with our chosen
superpolynomial function $t$ already prevents any polynomial time machine $M$
from correctly computing a preimage bit. However, we need to be more specific
on the \emph{probability} for such a failure (the construction in the time
hierarchy theorem shows only the necessity of such errors, but not its
frequency).
\end{rem}

The event $[C(f_\ell(w))\in f_\ell^{-1}(f_\ell(w))]$ implies that $C$ must in
particular compute $b_1$ correctly, since $f_\ell$ is bijective on its first
input bit. Conversely, this means that an incorrect such computation implies
the event $[C(f(w))\notin f_\ell^{-1}(f_\ell(w))]$, and in turn
\begin{align}
    \Pr_{w\in\Sigma^\ell}&[C(f_\ell(w))\notin f_\ell^{-1}(f_\ell(w))]\nonumber\\
        &\geq \Pr_{w\in\Sigma^\ell}[C\text{ incorrectly computes }b_1\text{ from }f_\ell(w)]\label{eqn:c-error-bound},
\end{align}
where $b_1$ denotes the first bit in $w$. So, we may focus our attention on
the right hand side probability in the following.

Remember that we constructed our sampling algorithm to output a set
$W_1\in\LN\iff b_1=1$ and $W_1\notin\LN\iff b_1=0$. Despite this, note that a
correct computation of $b_1$ is indeed \emph{not equivalent} to the
computation of the characteristic function $\chi_{\LN}$ of $\LN$, since an
incorrect mapping of $b_1$ on the output equivalence class $f_{\ell,1}=[W_1]$
is nevertheless possible (the sampling made by Algorithm
\ref{alg:derandomized-threshold-sampling} is still probabilistic).

So, to properly formalize the event ``$C$ correctly computes $b_1$'', we must
make our following arguments conditional on the event $E_\ell$ of a correct
mapping, so that
\[
\text{``$C$ correctly computes $b_1$''}\iff C(f_{\ell,1}(w))=\chi_{\LN}(f_{\ell,1}(w))
\]
and in turn
\[
\text{``$C$ incorrectly computes $b_1$''}\iff C(f_{\ell,1}(w))\neq\chi_{\LN}(f_{\ell,1}(w))
\]
are both valid assertions in light of $E_\ell$. Let us consider the second
last likelihood
\[
 \Pr_{w\in\Sigma^\ell}(C=\chi_{\LN}|E_\ell)=\Pr_{w\in E_\ell}(C=\chi_{\LN})
\]
more closely (where the equality is due to the inclusion
$E_\ell\subset\Sigma^\ell$).

If there were a circuit $C\in\C$ that decides $\LN$, then Lemma
\ref{lem:l-star-hardness} (more specifically its proof) gives us an injective
reduction $\psi:L_0\to\LN, w\mapsto (w,w^*,w^*,\ldots)$, where $w^*$ is a
fixed word. Note that $\psi$ can be computed by a polynomial size circuit
(simply by adding hardwired multiple outputs of $w^*$). By this reduction, we
have $w\in L_0\iff \psi(w)\in\LN$, or equivalently,
$\chi_{\LN}(\psi(w))=\chi_{L_0}(w)$. Let $\psi(w)$ be a ``positive case''
(i.e., a word for which $C(\psi(w))=\chi_{\LN}(\psi(w))$ holds), then this
decision is also correctly made for $L_0$, using another polynomial size
circuit $C'=C\circ \psi$. This means that $\Pr_{w\in
E_\ell}(C(\psi(w))=\chi_{\LN}(\psi(w)))\leq\Pr_{w\in
E_\ell}(C'(w)=\chi_{L_0}(w))$, because $\psi$ is injective (otherwise, it
could happen that some instances of $w\stackrel ?\in L_0$ are mapped onto the
same image $\psi(w)$, which could reduce the total count). This leads to the
implication
\begin{align}
  [\exists C\in\C: &\Pr_{w\in E_\ell}(C\text{ decides }\LN)>\eps]\nonumber\\
  &\quad \to[\exists C'\in\C: \Pr_{w\in E_{\ell}}(C'\text{ decides }L_0)>\eps]\label{eqn:l_n-implies-l_0},
\end{align}
where the abbreviation ``$C$ decides $L$'' is a shorthand for $C$ computing
the characteristic function of $L$ (the free variable $\eps>0$ is
$\forall$-quantified, but omitted here to ease our notation).

Similarly, assuming the existence of a circuit $C'\in\C$ that decides $L_0$,
Lemma \ref{lem:dtime-t-hardness} gives us another mapping $\varphi:
\Sigma^*\to \Sigma^*$ for which $\varphi$ modifies the right half of its
input string accordingly so that $\varphi(w)$ becomes a square, while
retaining the left part of $w$ that determines the membership of $w$ in
$L_D$. Thus, $w\in L_D\iff \varphi(w)\in L_0$, or equivalently,
$\chi_{L_0}(\varphi(w))=\chi_{L_D}(w)$. This mapping is also injective, so we
reach a similar implication as \eqref{eqn:l_n-implies-l_0} by the same token,
which is
\begin{align}
  [\exists C'\in\C: &\Pr_{w\in E_\ell}(C'\text{ decides }L_0)>\eps]\nonumber\\
  &\quad \to[\exists C''\in\C: \Pr_{w\in E_{\ell}}(C''\text{ decides }L_D)>\eps]\label{eqn:l_0-implies-l_D},
\end{align}
in which $C''=C'\circ\varphi$ is of polynomial size, since $\varphi$ can be
computed in polynomial time (and therefore is also computable by a polynomial
size circuit).

Upon chaining \eqref{eqn:l_n-implies-l_0} and \eqref{eqn:l_0-implies-l_D},
followed by a contraposition, we get
\[
[\forall C\in\C: \Pr_{w\in E_\ell}(C=\chi_{L_D})\leq\eps]\to[\forall C\in\C: \Pr_{w\in E_\ell}(C=\chi_{\LN})\leq\eps],
\]
and by taking the likelihoods for the converse events with $\delta=1-\eps$,
\begin{align}
[\forall C\in\C: &\Pr_{w\in E_\ell}(C\neq \chi_{L_D})\geq\delta]\nonumber\\
&\quad \to[\forall C\in\C: \Pr_{w\in E_\ell}(C\neq \chi_{\LN})\geq\delta]\label{eqn:forall-implication},
\end{align}
using the notation $C=\chi$ and $C\neq\chi$ to mean that $C$ correctly or
incorrectly decides the respective language.

Thus, to prove that every circuit of polynomial size will incorrectly decide
$\LN$, and therefore incorrectly recover the first input bit $b_1$,
conditional on $E_\ell$, we need to lower-bound the likelihood for a
polynomial-size circuit to err on deciding $L_D$, and get rid of the
conditioning on $E_\ell$. Lemma \ref{lem:conditioning} helps with the latter,
as we get an $\ell_0>0$ so that for all $\ell>\ell_0$,

\begin{equation}\label{eqn:conditional-likelihood-lower-bound}
    \Pr_{w\in E_\ell}(C\neq\chi_{L_D})=\Pr_{w\in\Sigma^*}(C\neq\chi_{L_D}|E_\ell)\geq \frac 1 2\cdot \Pr_{w\in\Sigma^*}(C\neq\chi_{L_D})
\end{equation}

\begin{rem}
Two further intuitive reasons for the convergence of
$\Pr_{w\in\Sigma^*}(C\neq\chi_{L_D})\to
\Pr_{w\in\Sigma^*}(C\neq\chi_{L_D}|E_\ell)$ can be given: first, note that
our consideration of the decision on $L_D$ is focused on the first bit $b_1$,
while the event $E_\ell$ is determined by the other bits $b_n,
b_{n+1},\ldots$ of the input, where $n>1$. Since these are stochastically
independent of $b_1$, the related events are also independent. Second, the
selection algorithm is constructed to take elements disregarding their
particular inner structure, and hence independent of the condition $w\in
L_D$. Thus, the event of a correct selection ($E_\ell$) is independent of the
event $w\in L_D$.
\end{rem}

Because $C$ is by definition an acyclic graph, the computation of $C(w)$ can
be done by a \ac{TM} via evaluating all gates in the topological sort order
of (the graph-representation of) $C$. Moreover, it is easy to design a
universal such circuit interpreter \ac{TM} $M_{UC}$ taking a description of a
circuit $C$ and a word $w$ as input to compute $C(w)$ in time
$\poly(\size(C))$. In our case, since $C$ has $\size(C)=S(\ell)$, where $S$
is a polynomial, the simulation of $C$ by $M_{UC}$ takes polynomial time
$\geq S(\ell)$ again.

Remembering our notation from Section \ref{sec:tm-encoding}, we write $M_w$
for the \ac{TM} being represented by a word $w\in\Sigma^*$. Likewise, let us
write $M_C$ for the \ac{TM} that merely runs the universal circuit
interpreter machine $M_{UC}$ on the description of the circuit $C$. If, for
some word $w$ and circuit $C$, $M_w$ and $M_C$ compute the same function on
all $\Sigma^*$, we write $M_w\equiv_f M_C$ (to mean ``functional
equivalence'' of $M_w$ and $M_C$). With this notation, let the event
``$M_C\neq \chi$'' be defined identically to ``$C\neq\chi$''.

To quantify the right-hand side probability in
\eqref{eqn:conditional-likelihood-lower-bound}, let us return to the proof of
Theorem \ref{thm:time-hierarchy} again: the key insight is that the language
$L_D$ is defined to include all words $w$ for which the \ac{TM} $M_w$ would
reject ``itself'', i.e., $w$, as input, and has enough time to carry to
completion. Since the \ac{TM} $M_C$ that equivalently represents the circuit
$C$ above would accept its own string representation $w$ but $L_D$ is defined
to \emph{exclude} exactly this word, $M_C$ (and therefore also $C$) would
incorrectly compute the output for at least all words that represent
sufficiently large encodings of $M_C$. Formally,
\[
\Pr_{w\in\Sigma^*}(C\neq\chi_{L_D})\geq\frac{\abs{\set{w\in \Sigma^\ell: M_w\equiv_f M_C}}}{2^\ell}\stackrel{\eqref{eqn:equivalent-encodings}}{\geq}\frac 1 2\cdot\frac{2^{\ell-\log\ell}}{2^\ell}=\frac 1{2\ell},
\]
where we have used the (wasteful) encoding of \acp{TM} introduced in Section
\ref{sec:tm-encoding}. Plugging this into
\eqref{eqn:conditional-likelihood-lower-bound} tells us that
\begin{equation}\label{eqn:conditional-likelihood-lower-bound-2}
\Pr_{w\in\Sigma^*}(C\neq\chi_{L_D}|E_\ell)\geq\frac 1{4\ell},
\end{equation}
which is a universal bound that is independent of the particular circuit $C$.
So, let $C$ be arbitrary and of polynomial size $\leq S(\ell)$. We use
implication \eqref{eqn:forall-implication} with
\eqref{eqn:conditional-likelihood-lower-bound-2}, to conclude $\Pr_{w\in
E_\ell}(C\neq \chi_{\LN})\geq 1/(4\ell)$. The actual interest, however, is on
the \emph{unconditional} likelihood of $C$ outputting $b_1$ incorrectly. For
that matter, we invoke Lemma \ref{lem:conditioning} on
\eqref{eqn:c-error-bound}, to obtain a value $\ell_1>0$ so that for all
$\ell>\ell_1$,
\[
\Pr_{w\in\Sigma^\ell}[C\text{ incorrectly computes }b_1\text{ from }f_\ell(w)]\geq \frac 1 2\cdot\Pr_{w\in
E_\ell}(C\neq \chi_{\LN})\geq \frac 1{8\ell}.
\]
By taking the converse probabilities
again in \eqref{eqn:c-error-bound}, we end up with
\[
\Pr_{w\in\Sigma^\ell}[C(f_\ell(w))\in f_\ell^{-1}(f_\ell(w))]<1-\frac 1{8\ell},
\]
for all $\ell>\max\set{\ell_0,\ell_1}$ and every circuit $C$ of polynomial
size $S(\ell)$.

\section{Barriers towards an Answer about $\P$-vs-$\NP$}\label{sec:p-vs-np}

A purported implication (see, e.g., \cite{Zimand2004}) of Theorem
\ref{thm:weak-owf-exist} is the following separation:
\begin{cor}\label{cor:n-not-equal-to-np}
$\P\neq\NP$.
\end{cor}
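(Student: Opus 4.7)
The plan is to prove the contrapositive: assuming $\P=\NP$, I would contradict Theorem \ref{thm:weak-owf-exist}. Let $f:\Sigma^*\to\Sigma^*$ be the weak \ac{OWF} whose existence is asserted by that theorem, and fix the polynomial $q$ appearing in Definition \ref{def:weak-owf}.

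The first step would be to exhibit an $\NP$ predicate capturing ``prefix-extendability of preimages''. Concretely, define
\[
    L_f := \set{(y,z,1^\ell) : \exists x\in\Sigma^\ell\text{ with prefix }z\text{ such that }f(x)=y}.
\]
Membership is verifiable in polynomial time (guess $x$, check its length and prefix, then evaluate $f(x)=y$ using the polynomial-time algorithm from condition~1 of Definition \ref{def:owf}), so $L_f\in\NP$. Under the hypothesis $\P=\NP$, there is a deterministic polynomial-time algorithm $M$ for $L_f$.

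Next I would assemble a polynomial-time inverter $I$ for $f$ by the textbook search-to-decision reduction. Given $y\in\Sigma^{\ell'}$, length regularity and the polynomially related input/output lengths of $f$ (Definition \ref{def:owf}) constrain the candidate input length $\ell$ to a set of size $\poly(\ell')$. For each such $\ell$, $I$ first asks $M$ whether $(y,\varepsilon,1^\ell)\in L_f$; if the answer is ``yes'', $I$ reconstructs a preimage $x=b_1b_2\ldots b_\ell$ bit-by-bit by querying, for $i=1,\ldots,\ell$, whether $(y,b_1\ldots b_{i-1}0,1^\ell)\in L_f$ and setting $b_i$ accordingly (the complementary case necessarily succeeds, since the previous prefix was extendable). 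This amounts to $\poly(\ell')$ queries, each resolved in $\poly(\ell')$ time, so $I$ is polynomial-time and, importantly, succeeds \emph{deterministically} on every $y$ in the image of $f$. Converting $I$ into a family of polynomial-size circuits $C_{\ell'}$ by the standard simulation of polynomial-time \acp{TM} by polynomial-size circuits, one obtains, for every $w\in\Sigma^\ell$, that $f_\ell(w)$ lies trivially in the image of $f_\ell$ and hence $C_{\ell'}(f_\ell(w))\in f_\ell^{-1}(f_\ell(w))$. Therefore the probability in \eqref{eqn:owf-inversion-error} equals $1$, contradicting the requirement $<1-1/q(\ell)$ that Definition \ref{def:weak-owf} imposes at sufficiently large $\ell$. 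Consequently $\P\neq\NP$.

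The main obstacle in making this argument airtight is the bookkeeping around the polynomially related input/output lengths: the enumeration of candidate input lengths $\ell$ from a given output length $\ell'$, and the control of the circuit size as a polynomial in $\ell$ (not merely in $\ell'$), both hinge on the clause $(\len w)^{1/k}\leq\len{f(w)}\leq(\len w)^k$ that is built into Definition \ref{def:owf} precisely to preclude pathological shrinkage. Once this is accounted for, the remainder is the routine contrapositive of the folklore implication ``$\P=\NP$ breaks every one-way function,'' applied to the concrete weak \ac{OWF} furnished by Theorem \ref{thm:weak-owf-exist}.
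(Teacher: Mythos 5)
Your proposal is correct, and it takes a genuinely leaner route than the paper. The paper's proof first upgrades the weak \ac{OWF} to a \emph{strong} one via \cite[Thm.5.2.1]{Zimand2004}, then conjugates by the G\"odel numbering $gn$ to obtain $g=gn\circ f\circ gn^{-1}:\N\to\N$, defines $L_g=\set{(y,N):\exists x\in\set{1,\ldots,N}, g(x)=y}$, and under $\P=\NP$ recovers $g^{-1}(y)$ by binary search over the integer interval $[1,N]$, contradicting \emph{strong} one-wayness. You instead work directly with the weak \ac{OWF}, express self-reducibility through the prefix-extendability language $L_f$, reconstruct the preimage bit-by-bit, and observe that the resulting deterministic inverter succeeds with probability $1$, which already violates the bound $1-1/q(\ell)$ from Definition \ref{def:weak-owf}. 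Your version avoids both the weak-to-strong conversion and the G\"odel-numbering conjugation; the paper's detour through a strong \ac{OWF} is logically unnecessary, since a probability-$1$ inverter contradicts any nontrivial inversion-error bound, as you note. Both arguments are instantiations of the same folklore search-to-decision reduction, the integer binary search in the paper being cosmetically equivalent to your bit-prefix search once one unwinds $gn$. The only bookkeeping you flag --- enumerating candidate input lengths $\ell$ from $\ell'$ and controlling the circuit size in $\ell$ --- is handled exactly as you say by the polynomially-related-lengths clause $(\len w)^{1/k}\leq\len{f(w)}\leq(\len w)^k$, and indeed length regularity makes $\ell'$ a function of $\ell$, so for the circuit family indexed by $\ell$ the candidate-length enumeration could even be dropped.
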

Before attempting to prove Corollary \ref{cor:n-not-equal-to-np}, we first
ought to check if the results we have are admissible (able) to deliver the
ultimate conclusion claimed above. Our agenda in the following concerns three
``meta-conditions'' that can render certain arguments ineffective in proving
Corollary \ref{cor:n-not-equal-to-np}. The barriers are:
\begin{itemize}
  \item relativization \cite{Baker.1975},
  \item algebrization (a generalization of relativization)
      \cite{Aaronson2009}, and
  \item naturalization \cite{Razborov1997}.
\end{itemize}
There is also a positive (meta-)result pointing at a direction that any
successful proof of Corollary \ref{cor:n-not-equal-to-np} must come from,
which is \emph{local checkability} \cite{Arora.2007} (this describes an axiom
to which arguments for $\P\neq \NP$ must be consistent with). We need to
argue that the three barriers above are not in our way, but we also need to
show consistency with local checkability. It should be stressed that all of
these (four) conditions can only provide guidance towards taking the right
approach in proving Corollary \ref{cor:n-not-equal-to-np}. Our basic starting
point will be Theorem \ref{thm:weak-owf-exist}, but our objective is
\emph{not} on substantiating its truth (which should only be verified upon
correctness of all steps taken to concluding it), but to use the insights
cited above as a compass when arguing about $\P$-vs-$\NP$ based on Theorem
\ref{thm:weak-owf-exist}. Instead, we will exhibit the proof as a whole to
non-relativize, non-algebrize and non-naturalize by exhibiting one argument
in it that does not relativize, algebrize or naturalize\footnote{This is in
analogy to how non-naturalizing results were exposed as algebrizing, since
many of those had a sequence of all relativizing (and hence algebrizing)
arguments with only one non-relativizing argument that still algebrized (see
\cite{Aaronson2009}).}.


In general, the difficulty of proving $\P\neq\NP$ may root in one of three
possibilities, which are: (i) the claim is independent of \ac{ZFC}, in which
case, the separation would not be provable at all; (ii) it is wrong, which
would imply the yet unverified existence of polynomial-time algorithms for
every problem in $\NP$; or (iii) it is provable yet we have not found a
technique sufficiently powerful to accomplish the proof. The third
possibility has been studied most intensively, and also relates to proofs of
independence of $\P\neq\NP$ from \ac{ZFC}.

\subsection{Relativization}\label{sec:relativization}

In fact, under suitable models, i.e., assumptions made in the universe of
discourse, either outcome $\P=\NP$ and $\P\neq\NP$ is possible, so above all,
any argument that could settle the issue must not be robust against arbitrary
assumptions being made. This brings us to the concept of
\emph{relativization}. Formally, we call a complexity-theoretic statement
$\textsc{C}\subseteq\textsc{D}$ (resp. $\textsc{C}\not\subseteq\textsc{D}$)
\emph{relativizing}, if $\textsc{C}^A\subseteq\textsc{D}^A$ (resp.
$\textsc{C}^A\not\subseteq\textsc{D}^A$) holds for all oracles $A$. Here, the
oracle is the specific assumption being made, and it has been shown (using
diagonalization) that certain assumptions can make the claim $\P\neq\NP$
either true or false:
\begin{thm}[{Baker, Gill and Solovay \cite{Baker.1975}}]\label{thm:oracles-do-not-work-for-p-vs-np}
There are oracles $A$ and $B$, for which $\P^A=\NP^A$ and $\P^B\neq \NP^B$.
\end{thm}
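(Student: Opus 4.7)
The plan is to construct the two oracles by completely different techniques: $A$ by a collapsing construction (take $A$ powerful enough that both $\P^A$ and $\NP^A$ coincide with a larger deterministic class), and $B$ by diagonalization against polynomial-time oracle machines.

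For $A$, I would take any $\cc{Pspace}$-complete language (e.g.\ $TQBF$). On one hand $\NP^A \subseteq \cc{NPspace}= \cc{Pspace}$ by Savitch's theorem, since a nondeterministic polynomial-time machine with a $\cc{Pspace}$ oracle still runs in polynomial space overall. On the other hand, every $\cc{Pspace}$ language reduces in polynomial time to $A$, so $\cc{Pspace}\subseteq\P^A$. Together with the trivial $\P^A\subseteq\NP^A$, this yields $\P^A=\NP^A=\cc{Pspace}$.

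For $B$, define the ``test language'' $L_B := \set{1^n : B\cap\Sigma^n\neq\emptyset}$. Clearly $L_B\in\NP^B$ (a nondeterministic polynomial-time machine guesses an $x\in\Sigma^n$ and queries the oracle). It then suffices to build $B\subseteq\Sigma^*$ so that no deterministic polynomial-time oracle TM decides $L_B$ with oracle $B$. Enumerate all polynomial-time oracle TMs $M_1,M_2,\ldots$, where $M_i$ runs in time $p_i(n)\leq n^i+i$. Build $B$ in stages, each stage committing only finitely many strings to ``in $B$'' or ``out of $B$''. At stage $i$, pick $n_i$ larger than all lengths touched in earlier stages and large enough that $2^{n_i}>p_i(n_i)$. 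Simulate $M_i^{?}(1^{n_i})$, answering each oracle query $q$ consistently with prior commitments and defaulting to ``$q\notin B$'' on any undetermined query. If $M_i$ accepts, commit every string of length $n_i$ (that is not already committed to $B$) to $\overline B$, making $1^{n_i}\notin L_B$ while $M_i$ accepts. If $M_i$ rejects, the number of queries $p_i(n_i)<2^{n_i}$ leaves some $z\in\Sigma^{n_i}$ unqueried; commit $z\in B$, making $1^{n_i}\in L_B$ while $M_i$ rejects. Either way $M_i$ disagrees with $\chi_{L_B}$ on $1^{n_i}$, so $L_B\notin\P^B$.

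The main obstacle is bookkeeping: the simulation at stage $i$ must use only information about $B$ that has already been (irrevocably) decided, and the commitments made at stage $i$ must not later be overridden. This is handled by the gap condition on the $n_i$'s, which guarantees each stage only inspects oracle strings of bounded length, and by choosing the ``diagonal'' string $z$ from among those \emph{unqueried} at stage $i$; all later stages work with strings of strictly larger length, preserving consistency. The counting step $p_i(n_i)<2^{n_i}$ is the only quantitative ingredient and is trivially satisfied for $n_i$ sufficiently large.
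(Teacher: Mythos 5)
The paper does not prove this theorem; it is quoted as a citation to Baker, Gill, and Solovay, so there is no ``paper's proof'' to compare against. Your proposal correctly reconstructs the original Baker--Gill--Solovay argument: a $\cc{Pspace}$-complete oracle $A$ collapses $\P^A=\NP^A=\cc{Pspace}$ (with $\NP^A\subseteq\cc{Npspace}=\cc{Pspace}$ via Savitch), and a stage-by-stage diagonalization builds $B$ so that $L_B=\set{1^n:B\cap\Sigma^n\neq\emptyset}\in\NP^B\setminus\P^B$. One small imprecision in the bookkeeping: saying ``$n_i$ larger than all lengths touched in earlier stages'' should be stated quantitatively as $n_i>\max_{j<i}p_j(n_j)$, since a stage-$j$ simulation may query strings of length up to $p_j(n_j)$, not merely $n_j$; with that reading your gap condition does exactly what you claim, ensuring the diagonal string $z\in\Sigma^{n_i}$ is never an already-answered query from an earlier stage. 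Otherwise the argument is complete and sound.
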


If Theorem \ref{thm:weak-owf-exist} remains true in a universe that offers
oracle access to either $A$ or $B$, the conclusion thereof about
$\P$-vs-$\NP$ would -- in any outcome -- contradict Theorem
\ref{thm:oracles-do-not-work-for-p-vs-np}. More specifically, if Theorem
\ref{thm:weak-owf-exist} leads to $\P\neq\NP$ and the arguments used to this
end relativize, then the obvious inconsistency with Theorem
\ref{thm:oracles-do-not-work-for-p-vs-np} would imply that either Theorem
\ref{thm:weak-owf-exist} or its Corollary \ref{cor:n-not-equal-to-np} are
flawed.

Does the proof of Theorem \ref{thm:weak-owf-exist} relativize? The answer is
no, but not visibly so at first glance. Classifying an argument as
relativizing must consider the technical way of oracle access (e.g., whether
the space on the oracle tape counts towards the overall space complexity,
etc.). An excellent account for the issue is provided by L. Fortnow
\cite{Fortnow1994}, who discusses different forms of relativization. His work
eloquently exposes the issue as being strongly dependent on the mechanism
used to query the oracle. A usually non-relativizing technique is
\emph{arithmetization}, which transfers the operations of a circuit or a
\ac{TM} to a richer algebraic structure, typically a (finite) field $\F$,
where the armory to analyze and prove things is much stronger. A prominent
application and hence non-relativizing result is Shamir's theorem stating
that $\textsc{Ip}=\textsc{Pspace}$. However, by adapting the oracle query
mechanism suitably, even results obtained by arithmetization can relativize.
Specifically, Theorem 5.6 in \cite{Fortnow1994} is a version of Shamir's
theorem that \emph{does} relativize under the notion of an \emph{algebraic
oracle}. Subsequently, this concept was generalized and coined
\emph{algebrization} in \cite{Aaronson2009}, who exhibited a large number of
previously non-relativizing techniques to algebrize, so that proven
inclusions remain valid under this new kind of oracle power. Hereafter, we
will not confine ourselves to a particular method of oracle access, and
instead let the oracle only ``be available'' in either classical,
arithmetized or algebraic form. Since the classical oracle access by querying
some set $A$ is only generalized by subsequent findings, our argument will be
developed around the simplest form of oracles, stepwise showing how the
conclusions remain true in light of generalized forms of oracles.

Note that the proof of Theorem \ref{thm:weak-owf-exist} never speaks about
oracles or intractability, except during the diagonalization used to prove
the Time Hierarchy Theorem. A standard diagonalization argument does
relativize upon a syntactic change by letting all \acp{TM} be
oracle-\acp{TM}. However, the particular classes $\dtime(t(n))$ and
$\dtime(2^n)$ that we fixed in Section \ref{sec:weak-owf-exist} cannot be
separated (not even by diagonalization) in certain relativized worlds. In
fact, we can even derive an analogue result to Theorem
\ref{thm:oracles-do-not-work-for-p-vs-np} by showing different oracles under
which Theorem \ref{thm:weak-owf-exist} fails, resp. holds (although only its
failure is actually required here to dispel concerns about the relativization
barrier).

The diagonalization argument, made concrete by our choices of $t(n)$ and
$T(n)$ in Section \ref{sec:weak-owf-exist}, delivered the language
$L_D\in\dtime(2^n)\setminus\dtime(t(n))$, where $t(n)$ is defined by
\eqref{eqn:superpolynomial-t}. The two complexity classes are embedded inside
the chain
\begin{equation}\label{eqn:chain}
  \P\subsetneq\dtime(t(n))\subsetneq\dtime(2^n)\subsetneq\exptime
\end{equation}
Let $A$ be any \exptime-complete language, such as $A=\{(M,k):$~the \ac{TM}
$M$ halts within $k$ steps (where $k$ is given in binary\footnote{If $k$ were
in unary notation, $A$ would be(come) $\P$-complete.})$\}$ and use this
language $A$ as an oracle. Then $\exptime^A=\exptime\subseteq\P^A$, which
implies all equalities in \eqref{eqn:chain} and in particular
$\dtime(t(n))^A=\dtime(2^n)^A$. This, however, destroys the whole fundament
of the construction underlying Theorem \ref{thm:weak-owf-exist} (indeed, the
question defining $A$ is exactly what the diagonalization is about). In fact,
we can even extend the finding closer towards Theorem
\ref{thm:oracles-do-not-work-for-p-vs-np} by a simple modification: let us
use the hierarchy theorem to squeeze a complexity class $\textsc{C}$ in
between $\dtime(t(n))$ and $\dtime(2^n)$, so that $\dtime(t(n))\subsetneq
\textsc{C}\subsetneq \dtime(2^n)$ by virtue of a language $B\in
\textsc{C}\setminus \dtime(t(n))$. In using $B$ as an oracle, we see that
$\dtime(t(n))^{B}\subseteq
\textsc{C}^{B}=\textsc{C}\subsetneq\dtime(2^n)^{B}$. The two classes
underlying the \ac{OWF} construction thus remain separated under the oracle
$B$, but become equalized under the oracle $A$. Thus, our argument does not
relativize, or formally (for later reference):
\begin{lem}\label{lem:non-relativization}
There are decidable languages $A$, $B$ for which
$\dtime(t(n))^{A}=\dtime(2^n)^{A}$ and $\dtime(t(n))^{B}\neq
\dtime(2^n)^{B}$. Thus, Theorem \ref{thm:weak-owf-exist} fails in a world
relativized by $A$ and holds in the world relativized by $B$.
\end{lem}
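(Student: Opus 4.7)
The plan is to exhibit two concrete decidable oracles, one that collapses the two time classes and one that keeps them separated, and then to point out that Theorem \ref{thm:weak-owf-exist} hinges entirely on the strict separation produced by Theorem \ref{thm:time-hierarchy}.

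For the collapsing oracle $A$, I would take any $\exptime$-complete language under polynomial-time many-one reductions, for example the suggested timed halting set $A=\{(M,k):M\text{ halts on itself within }k\text{ steps}\}$ with $k$ encoded in binary (alternatively one can use the succinct circuit value problem). Since $A$ is in $\exptime$ and is $\exptime$-hard, every $L\in\exptime$ reduces to $A$ in polynomial time, giving $\exptime\subseteq\P^A$; conversely $\exptime^A\subseteq\exptime$ because each oracle query is resolvable in exponential time without the oracle. Chaining the trivial inclusions yields
\[
\P^A \subseteq \dtime(t(n))^A \subseteq \dtime(2^n)^A \subseteq \exptime^A = \exptime \subseteq \P^A,
\]
which forces equality throughout, in particular $\dtime(t(n))^A=\dtime(2^n)^A$. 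Since the proof of Theorem \ref{thm:weak-owf-exist} extracts the diagonal language $L_D$ from a \emph{strict} separation $\dtime(t)\subsetneq\dtime(T)$, the entire construction degenerates in the $A$-relativized world and the theorem fails there.

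For the separating oracle $B$, the cleanest choice is $B=\emptyset$: relativization by the empty oracle is inert, so $\dtime(t(n))^B=\dtime(t(n))$ and $\dtime(2^n)^B=\dtime(2^n)$. Our choice $t(n)=L_n[1,1/2]$ from \eqref{eqn:superpolynomial-t} satisfies $t(n)\log t(n)\in o(2^n)$, and $T(n)=2^n$ is time-constructible, so Assumption \ref{asm:time-hierarchy} is met and Theorem \ref{thm:time-hierarchy} delivers $\dtime(t(n))\subsetneq\dtime(2^n)$. Hence $\dtime(t(n))^B\neq\dtime(2^n)^B$, the diagonal argument producing $L_D$ goes through unchanged, and every subsequent step of Theorem \ref{thm:weak-owf-exist}'s proof is unaffected by the presence of the (inert) oracle $B$.

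The only step requiring care is verifying that the chosen $A$ is $\exptime$-complete under polynomial-time many-one reductions; this is standard by padding the universal Turing machine's language and is not specific to our construction. Everything else reduces to the two bread-and-butter facts that an $\exptime$-complete oracle collapses all classes between $\P$ and $\exptime$, and that the deterministic time hierarchy theorem with its stated gap separates $\dtime(t(n))$ from $\dtime(2^n)$ without any oracle, so I do not expect a substantive obstacle.
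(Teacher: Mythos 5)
Your proof matches the paper's for the collapsing oracle $A$: both take an $\exptime$-complete language under polynomial-time many-one reductions (the paper uses the very same timed-halting set with $k$ in binary), and both close the inclusion chain $\P^A\subseteq\dtime(t(n))^A\subseteq\dtime(2^n)^A\subseteq\exptime^A=\exptime\subseteq\P^A$ to force equalities. For the separating oracle $B$, however, you take a genuinely different and simpler route. The paper invokes the time hierarchy theorem once more to sandwich an intermediate class $\textsc{C}$ with $\dtime(t(n))\subsetneq\textsc{C}\subsetneq\dtime(2^n)$, selects a witness $B\in\textsc{C}\setminus\dtime(t(n))$, and then argues $\dtime(t(n))^B\subseteq\textsc{C}^B=\textsc{C}\subsetneq\dtime(2^n)^B$. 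You instead take $B=\emptyset$, for which relativization is inert and the separation is exactly Theorem \ref{thm:time-hierarchy} with no oracle machinery at all. Your choice is cleaner: it sidesteps the paper's unstated closure step $\textsc{C}^B=\textsc{C}$, which is not automatic (resolving oracle queries to a member of $\textsc{C}$ inside a $\textsc{C}$-time computation compounds the time bound, and $\dtime$ classes are generally not closed under that kind of self-oracling without extra hypotheses on the bounding function). The trade-off is that the paper's nontrivial $B$ is closer in spirit to the Baker--Gill--Solovay template of Theorem \ref{thm:oracles-do-not-work-for-p-vs-np}, whereas your $B=\emptyset$ is a degenerate oracle; but the lemma as stated only requires $B$ to be decidable, which the empty set satisfies, so your proof establishes the claim and is, if anything, on firmer ground on that side.
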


\subsection{Local Checkability}\label{sec:local-checkability}

A condition that partly explains why proofs do not relativize is \emph{local
checkability} \cite{Arora.2007}. To formally define the concept and exhibit
Theorem \ref{thm:weak-owf-exist} as consistent with this framework, let us
briefly review the notion of a \emph{proof checker}: This is a \ac{TM} $M$
that uses universal quantification and an auxiliary input proof string $\Pi$
to accept an input string $x$ as being in $L$, if and only if all branches
(induced by the $\forall$ branching) accept. Otherwise, for $x\notin L$, the
machine $M$ should reject its input pair $(x,\Pi)$ for all $\Pi$. Herein, $M$
is allowed random access to $x$ and the proof string $\Pi$. The set of all
languages $L$ for which $M$ runs in time $\tau(n)$ is called the class
$\textsc{Pf-Chk}(\tau(n))$. A variation thereof is obtained by restricting
access to the proof string to only a subset of at most $\tau(n)$ bits, which
induces some sort of ``locality'' in the way the proof string can be used
(more technically, arbitrarily (e.g., exponentially) long queries to the
oracle can be precluded by the locality requirement). The resulting class is
called $\textsc{WPf-Chk}(\tau(n))$, and we refer to \cite{Arora.2007} for a
formal definition. For our purposes, it suffices to discuss the most
important implications of this concept:
\begin{enumerate}
  \item The \textbf{\ac{LCT}} \cite[Prop.4 and 5]{Arora.2007}:

      $\textsc{WPf-Chk}(\log n)=\NP=\textsc{Pf-Chk}(\log n)$, where the
      latter equality follows by inspecting the proof of the Cook-Levin
      theorem.
  \item For a random oracle $A$, we have
      $\P^A\not\subseteq\textsc{(W)Pf-Chk}(\tau(n))^A$ with probability 1,
      although $\P\subseteq \textsc{(W)Pf-Chk}(\tau(n))$ by the local
      checkability theorem. It follows that unrestricted (random) oracles
      appear unrealistic \cite{Arora.2007}, and therefore, we can restrict
      attention to oracles $A$ that are \emph{consistent with the
      \ac{LCT}}, which are those for which $\textsc{WPf-Chk}(\log
      n)^A=\NP^A$.
  \item Oracles being in that sense consistent with the \ac{LCT}, however,
      are allowed in proofs about $\P$ vs. $\NP$, since
      \cite[Thm.8]{Arora.2007}: If $\P^A\neq\NP^A$ for an oracle $A$ that
      is consistent with \ac{LCT}, then $\P\neq\NP$ (note that no analogous
      result holds for arbitrary, i.e, unrestricted, oracles).
\end{enumerate}

An objection against this concept as an explanation of the so-far observed
failure to prove $\P\neq\NP$ is the different style of oracle access used in
$\textsc{WPf-Chk}$ and $\NP$, which brings us back to the previous remarks
quoting \cite{Fortnow1994}. The concept of locality has been introduced in
\cite{Arora.2007} to partly address this issue, and is in fact enforced by
the encoding (see Figure \ref{fig:encoding}) that was used to make the
worst-case occur with the desired frequency. Indeed, the universal \ac{TM}
that we used here processes only a logarithmically small fraction of its
input, which corresponds to the $\log$ bound appearing in $\textsc{WPf-Chk}$
above (as we are simulating a \ac{TM} encoded by a word $w$ on input $w$, the
input pair to the proof checker would be $(x,\Pi)=(w,w)$, but the universal
\ac{TM} is constructed to use only $O(\log(\len{w}))$ bits of $\Pi=w$). So,
the proof of Theorem \ref{thm:weak-owf-exist} complies with the \ac{LCT}.

\subsection{Algebrization}\label{sec:algebrization}
Here, we let the oracle be a Boolean function $A_m:\set{0,1}^m\to\set{0,1}$
(instead of some general set). An extension of $A_m$ over some (finite) field
$\F$ is a polynomial $\tilde{A}_{m,\F}:\F^m\to\F$ such that
$\tilde{A}_{m,F}(x)=A_m(x)$ whenever $x\in\set{0,1}^m$. The oracles
considered for algebrization are the collections $A=\set{A_m:m\in\N}$ and
$\tilde{A}=\{\tilde{A}_{m,\F}:m\in\N\}$, and the algorithms are given oracle
access to $A$ or $\tilde{A}$. The inclusions of interest are separations like
$\textsc{C}\neq \textsc{D}$. Those are said to \emph{not algebrize}, if there
exist oracles $A,\tilde{A}$ such that $\textsc{C}^{\tilde{A}}=\textsc{D}^{A}$
(in an attempt to resemble the usual relativization taking the same oracles
on both sides, L. Fortnow \cite{Fortnow1994} used a much more complicated
construction of what he calls an algebraic oracle. The definition here is
from \cite{Aaronson2009} and designed to be more flexible and easier to use).

Let us reconsider $\dtime(t(n))\subsetneq\dtime(2^n)$: we recognized that
relation as non-relativizing due to the oracle language $A$ that equalized
the two classes. The point for now is that $A$ is a \emph{decidable}
language, so that there is a \ac{TM} to compute $\chi_A$. This \ac{TM} can be
converted into a circuit family $\set{A_m:m\in\N}$ with help of the
Pippenger-Fisher theorem \cite{Pippenger&Fischer1979} (and arithmetized in
the usual way). Queries to the (set) $A$ can thus be emulated by calling the
function $A_m$ to compute the indicator function $\chi_A$ for inputs of size
$m$. The query size (left unrestricted in the plain definition of
algebrization) to the oracle is (due to our encoding) also bound to be
logarithmic (as noted before). For retaining the result of Lemma
\ref{lem:non-relativization}, we can put $A=\tilde{A}$ (as a trivial
extension), and the identity $\dtime(t(n))^A=\dtime(2^n)^A$ is implied in the
so-algebrized world. Thus, the separation in which Theorem
\ref{thm:weak-owf-exist} roots does not algebrize either.

\subsection{A Formal Logical View on (Algebraic) Relativization}
Though the argument underlying Theorem \ref{thm:weak-owf-exist} is by the
above token not relativizing \emph{in general}, the real point of
relativization and algebrization is deeper: Since the existence of \ac{OWF}
would point towards $\P\neq\NP$, the question is whether a proof of this
claim relativizes or algebrizes with oracles that equalize $\P$ and $\NP$.
Moreover, the most interesting oracles for that matter would be outside
$\P$~\footnote{Otherwise, if the oracle is in $\P$, then the oracle mechanism
of any \ac{TM} $M^A$ could be integrated into the logic of the machine to
deliver an equivalent \ac{TM} that behaves exactly as $M^A$, but uses no
oracle at all; hence is tantamount to unconditionally assuming $\P=\NP$ from
the beginning.}. To compactify the discussion hereafter, the term ``oracle''
will synonymously mean both, sets (as in Section \ref{sec:relativization})
and algebraic oracles being Boolean functions (as in Section
\ref{sec:algebrization}).

Let us take a look at relativization and algebrization from a perspective of
formal logic: let $A$ be the logical statement that an oracle is available
(in the form of an oracle \ac{TM} or oracle circuit), and let $PROOF$ be the
conjunction of arguments towards a claimed relation between $\P$ and $\NP$. A
relativizing or algebrizing proof is one for which $A\land PROOF$ is
consistent in the sense of being logically true under the chosen
interpretation and universe of discourse (where $PROOF$ is syntactically
modified to use assumption $A$ wherever this is appropriate). Suppose that
this implies a contradiction (say, an inconsistency with Theorem
\ref{thm:oracles-do-not-work-for-p-vs-np} or with the results in
\cite{Aaronson2009}), then, based on this contradiction, the common
conclusion is that $PROOF$ must be wrong, since ``the proof is
relativizing/algebrizing''.

The claim made here is that this final conclusion can be flawed, as it misses
the fact that the (proven) \emph{existence} of an oracle in general
\emph{does not} imply the existence of a mechanism to query it! For example,
if the oracle is an undecidable language, despite its verified existence, no
oracle-\ac{TM} $M^A$ can (practically) exist; simply because no $M$ could
ever query $A$. Likewise, as another example, if the oracle is some
\NP-complete language, we merely \emph{assume} -- without verification or
proof -- that the problem $A$ can be solved in some unspecified way, which
implies that $\P$ and $\NP$ would be equal (as an a-priori hypothesis, this
is obviously inconsistent with the separation of the two classes that is
supposed to follow from the existence of \ac{OWF}; Theorem
\ref{thm:weak-owf-exist}). Thus, let $A$ be an(y) oracle, against which
$PROOF$ shall be tested to (not) relativize. The full assumption made along
such arguments is actually \emph{twofold}, since it concerns (i) the
existence of the oracle, and (ii) also the ability to query it, i.e., the
existence of the respective oracle-\ac{TM}. The first partial assumption (i)
is typically verified, but despite the significance of the query mechanism
(as eloquently pointed out by \cite{Fortnow1994} and demonstrated by the
whole idea of algebrization), the second implicit assumption (ii) is often
left unverified. Thus, the rejection of $PROOF$ because $A\land PROOF$ is
contradictive (i.e., $PROOF$ relativizes/algebrizes), rests on the
\emph{unverified hypothesis} that the oracle algorithm using $A$ actually
exists; for this to hold, however, the existence of the oracle alone is
insufficient in general (as follows from the above examples).

Lacking a proof of existence for the oracle \emph{and} the respective oracle
query mechanism, we are left with at least two possible (not mutually
exclusive) answers to as why $A\land PROOF$ yields a contradiction: 1) $A$ is
wrong, i.e., the oracle cannot be reasonably assumed available for queries
(though it may provably exist), or 2) $PROOF$ is wrong, i.e., the arguments
in the proof are flawed at some point. Finding out which of the two possible
answers is correct requires either a proof that $A$ is true, meaning that
oracle queries can practically work as assumed (this is a usually undiscussed
matter in the literature), or inspecting $PROOF$ for logical consistency and
correctness (as is the standard procedure for all mathematical proofs
anyway)\footnote{The inherent symmetry can be taken further: If $PROOF$ is
verifiably true based on a pure judgement of arguments, and its relativized
version leads to a verified contradiction, then the oracle hypothesis $A$
must be wrong. If the oracle itself is existing (again, provably), then the
only possible remaining conclusion is that the query mechanism must be
impossible. So, relativization can even be a method to prove the
\emph{practical non-existence} of certain oracle-algorithms; a possibility
whose exploration may be of independent interest.}.

It follows that relativization and algebrization are effective barriers only
if the oracle under which the inconsistency with the argument in question
arises, exists and is provably useable in the sense as the oracle query
mechanism assumes it. Otherwise, the finding in the respective relativized
world remains in any case conditional on the oracle hypothesis\footnote{The
choice of the oracle as such is crucial already, as a random choice of the
oracle is known to be a dead end in this context \cite{Chang.1994}.}, and we
cannot reliably tell which is wrong: the hypothesis or the proof arguments?
The insight that not all oracles are equally useful to reason about how $\P$
relates to $\NP$ is actually not new, as local checkability
(\cite{Arora.2007}; Section \ref{sec:local-checkability}) is an independent
earlier discovery in recognition of similar issues.

Irrespectively of the above, it is possible to modify the proof of Theorem
\ref{thm:weak-owf-exist} so that it deteriorates in worlds where oracles come
into play. The idea is to explicitly account for any use of the oracle in the
definition of the diagonal language $L_D$. Recall the overall construction in
the proof of Theorem \ref{thm:time-hierarchy}, which Figure \ref{fig:dtht}
depicts. Call the output decision $d\in\set{0,1}$ and add the following
logical condition to the way how this construction defines $L_D$:

\begin{equation}\label{eqn:no-oracle-condition}
\begin{array}{l}
\text{\textbf{if} the oracle $A$ was called during the simulation of $M_w$}\\
\text{\textbf{then return} $(1-d)$ \textbf{else return} $d$.}
\end{array}
\end{equation}

\begin{figure}
  \centering
  \includegraphics[width=\textwidth]{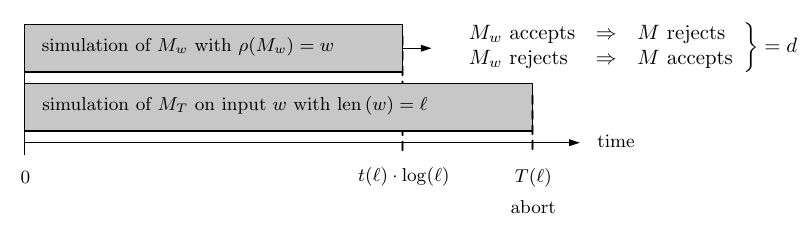}
  \caption{Simulation Setup for the Time Hierarchy Theorem}\label{fig:dtht}
\end{figure}

This changes \eqref{eqn:diagonal-language} by rephrasing $L_D$ into
containing all words for which either of the following two conditions hold:
\begin{enumerate}	
\item the simulation of $M_w$ halts and rejects $w$ within $\leq
    T(\len{w})$ steps, provided that $M_w$ makes no call to any oracle
    (i.e., acts as in a non-relativized world),
\item the simulation of $M_w^A$ halts and \emph{accepts} $w$ within $\leq
    T(\len{w})$ steps (now, there was a call to the oracle, so that the
    simulation was done for $M_w^A$ necessarily and the upper additional
    condition hence inverted the rejection into an acceptance behavior).
\end{enumerate}

Thus, upon relativization using the oracle $A$, the final argument towards
proving Theorem \ref{thm:time-hierarchy} deteriorate into a humble tautology:
$w\in L(M_w^A)$ implies $w\in L_D$ and vice versa, so the contradiction that
separates $\dtime(t)$ from $\dtime(T)$ can no longer be reached. Observe that
this \emph{does not} mean that the classes are not separated for other
reasons, but this particular argument no longer supports that claim. This
already suffices to escape relativization, since the so-modified reasoning
towards the statement of Theorem \ref{thm:weak-owf-exist} becomes void in any
relativized world where the oracle is actually used. In a non-relativized
world, however, there cannot be any call to any oracle, so that condition
\eqref{eqn:no-oracle-condition} has no effect whatsoever, and the first of
the two above cases will be the only one to apply. Thus, Theorem
\ref{thm:time-hierarchy} remains to hold and all arguments based upon go
unchanged.

What happens in worlds relativized by oracles that \emph{separate} $\P$ from
$\NP$? The argument breaks down in exactly the same way as before, and (also
as before) we can say nothing about the relation of $\P$ and $\NP$ then, so
no inconsistency arises here either.

These arguments remain intact also for algebrization, if condition
\eqref{eqn:no-oracle-condition} is rephrased into speaking about a perhaps
necessary ``evaluation'' of the oracle function. Circuits that lazy-evaluate
their logic thus may or may not need their oracle, so that the above
condition can be added to our proof with the same semantic and effect as
before. Thus, Theorem \ref{thm:weak-owf-exist}'s fundament will generally
collapse in algebrized worlds as well.


\subsection{Naturalization}
Regarding natural proofs, we may ask if a proof of $\P\neq\NP$ based on
Theorem \ref{thm:weak-owf-exist} is natural? The answer is (again) no! The
crucial finding of \cite{Razborov1997} is that any natural argument lends
itself to breaking pseudorandom generators. But in that case, we would also
get fast algorithms for some of the very same problems that we wanted to
prove hard by showing that $\P\neq\NP$ \cite{Aaronson2009}. This is the
barrier that natural proofs constitute, but starting from Theorem
\ref{thm:weak-owf-exist} lets us bypass this obstacle.

The reason why a natural property $C^*$ can be used to break a pseudorandom
generator is the disjointness of $C^*$ with the image set of some
pseudorandom function (constructed from the \ac{PRNG} in a similar style as
in \cite{Goldreich&Goldwasser&Micali1986}), provides a statistical test to
distinguish random from pseudorandom (output ensembles). That test employs
the poly-time decidability of $C^*$ (provided since $C^*$ is natural
\cite{Razborov1997}). Weak \ac{OWF} exist if and only if strong \ac{OWF}
exist \cite{Goldreich2003b}, so Theorem \ref{thm:weak-owf-exist} indirectly
gives a strong \ac{OWF} (see \cite{Zimand2004} or
\cite[Thm.2.3.2]{Goldreich2003b}), which in turn let us construct \ac{PRNG}
whose output cannot be distinguished from uniformly random in polynomial time
(see \cite[Def.3.3.1, Thm.3.5.12]{Goldreich2003b}). This contradiction rules
out any statistical test, including the aforementioned one based on deciding
$C^*$. Hence, in light of Theorem \ref{thm:weak-owf-exist}, a natural
property $C^*$ cannot exist at all (as is also explicitly said in
\cite[pg.3]{Razborov1997}). So, the proof of Corollary
\ref{cor:n-not-equal-to-np} based on Theorem \ref{thm:weak-owf-exist} is not
natural\footnote{Naturalization has (until today) nothing to say about proofs
regarding the existence of one-way functions (in the form used here; not
speaking about the entirety of all kinds of \acp{OWF}, since their existence
is currently \emph{not} known to follow from $\P\neq\NP$). An independent
concrete indication towards the proof of Theorem \ref{thm:weak-owf-exist} to
be non-natural is its use of diagonalization, which is typically considered
as a non-naturalizing argument \cite{Aaronson2009}.}.

\subsection{On the Separation of $\P$ from $\NP$}
It appears anyway questionable whether we are interested in answering $\P$
vs. $\NP$ in \emph{all possible} worlds, rather than under the more realistic
assumption of having no particular magic at hand (in the form of an oracle).
After all, the question is whether $\P$ is equal (or not) to $\NP$, given
those (and only those) operations that Turing machines can do. Note that our
use of the hierarchy theorem must not be mistakenly interpreted as the
high-level claim that $\P=\NP$ would contradict the hierarchy theorem. The
inclusion that we use relates to classes beyond $\P$ and hence also above
$\NP$ under the assumption $\P=\NP$. So, the hierarchy theorem remains an
unshaken base.

Taking Theorem \ref{thm:weak-owf-exist} as a fundament, we can now complete
our discussion by providing the proof of Corollary
\ref{cor:n-not-equal-to-np} in full detail.

\begin{proof}[Proof of Corollary \ref{cor:n-not-equal-to-np}]
Let $f:\Sigma^*\to\Sigma^*$ be a strong one-way function, whose existence is
implied by that of weak one-way functions by \cite[Thm.5.2.1]{Zimand2004}.
W.l.o.g., we may assume $\Sigma=\set{0,1}$ (otherwise, we just use a
prefix-free binary encoding to represent all symbols in the finite alphabet
$\Sigma$). Moreover, let $gn:\Sigma^*\to\N$ be a G\"odel numbering, for which
$gn(w)$ and $gn^{-1}(n)$ are both computable in polynomial time in $\len{w}$
and $\log(n)$, respectively. Our choice here is the function $gn$ from
Section \ref{sec:goedel-and-density}. We put $g:\N\to\N$ as $g:=gn\circ
f\circ gn^{-1}$, and observe that by \eqref{eqn:goedelnr}, $g$ inherits the
length regularity property from $f$ (where the integer $n$ has a length
$\len{n}\in \Theta(\log n)$ equal to the number of bits needed to represent
it). Furthermore, $g$ is as well strongly one-way: if it were not, i.e., if
$g^{-1}(n)$ would be computable in time $\poly(\log n)$, then
\begin{equation}\label{eqn:inverse-f}
f^{-1} = gn^{-1}\circ g^{-1}\circ gn
\end{equation}
would also be computable in time $\poly(\len{w})$ (since $gn$ and $gn^{-1}$
are both efficiently computable). Precisely, if some circuit $C$ of
$\size(C)\leq \poly(\log n)$ computes $C(n)\in g^{-1}(n)$ with a likelihood
of $\geq 1/\poly(\log n)$, then each of these cases is ``positive'' for the
computation of $f^{-1}$ on the entirety of the function's domain
$\set{0,1}^{\ell}$ with $\ell\in\Theta(\log n)$ (where the $\Theta$ is due to
the application of $gn$ and $gn^{-1}$). This means that the circuit $C$ could
be extended into a (polynomial size) circuit $C'$ that evaluates $f^{-1}$
according to \eqref{eqn:inverse-f} correctly with a likelihood $\geq
1/\poly(\Theta(\ell))$, contradicting the strong one-wayness of $f$.

Upon $g$, we define a language
\[
L_g:=\set{(y,N)\in\N^2: \exists
x\in\set{1,\ldots, N}\text{~with~}g(x)=y},
\]
in which every pair $(y,N)$ can be represented by a word
$w\in\set{0,1}^{\Theta(\log y+\log N)}\in\Sigma^*$ using a proper prefix-free
encoding (which includes the symbols to separate the binary strings for $y$
and $N$). That is, $L_g$ is the set of $y$ for which a preimage within a
specified (numeric) range $[1,N]$ exists. Our goal is showing that
$L_g\in\NP$ but $L_g\notin \P$.

The observation that $L_g\in\NP$ is immediate, since a preimage $x$ for
$y\in\N$ has length $O(\log x)$, so it can act as a polynomial witness,
guessed by a nondeterministic \ac{TM} to decide $1\leq x\leq N$ and $g(x)=y$,
both doable in time $O(\poly(\log x))$ (as $g$ is length-regular and strongly
one-way).

Conversely, if we assume $L_g\in\P$, then we could efficiently compute
$x=g^{-1}(y)$ for every given $y\in\N$ by the following method: since $g$ is
length regular, it satisfies $\len{y}=\len{g(x)}\geq \len{x}^{1/k}$, where
$\len{x}\in O(\log x)$ when $x$ is treated as a word in binary
representation. The value $k$ is a constant that only depends on $g$. Thus,
we have the upper bound $\log(x)\in O((\log y)^k)$, and therefore $x$ lies
inside the discrete interval $I=\{1,2,\ldots,N=c\cdot\lceil2^{(\log
y)^k}\rceil\}$ for some constant $c>0$ and sufficiently large $x$ (implied by
a sufficiently large $y$ via the length-regularity of $g$). With the
so-computed $N$, we run a binary search on $I$: per iteration, we can invoke
the polynomial-time decision algorithm $A$ available for $L_g\in\P$ to decide
whether to take the left half (if $A$ returns "yes") or the right half (if
$A$ returns "no") of the current search space. After $O(\log N)=O((\log
y)^k)$ iterations, the interval has been narrowed down to contain a single
number $x_0$, which is the sought preimage of $y$. The whole procedure takes
$O((\log y)^k)\cdot\poly(\log y)$ steps (one decision of $L_g$ per iteration
of the binary search), and thus is polynomial in $\log y$ since $k$ is a
constant. Therefore, $g^{-1}$ would be computable in $O(\poly(\log y))$ steps
in the worst case. Since our choice of $y$ was arbitrary, it follows that an
evaluation of $g^{-1}$ takes $O(\poly(\log y))$ steps in \emph{all} cases,
which clearly contradicts the average-case hardness of the strong one-way
function $g$. Hence, $L_g\notin\P$, and $\P\neq\NP$ consequently.
\end{proof}

\section*{Acknowledgment}
The author is indebted to Max-Julian Jakobitsch, Stefan Haan and Moritz Hiebler, for their hard work on formalization, resulting in the identification of subtle errors that were corrected thanks to their results. A likewise thank goes to Sandra K\"onig from the Austrian Institute of
Technology, for spotting some errors in earlier versions of the manuscript, as well as to Patrick Horster, for
valuable discussions about earlier versions of this manuscript.

\bibliographystyle{plain}

%
%
%
%
%
%
%
%

\begin{appendix}

\begin{acronym}
\acro{OWF}{one-way function}%
\acro{TM}{Turing machine}%
\acro{DTHT}{deterministic time hierarchy theorem}%
\acro{LCT}{local checkability theorem}%
\acro{PRNG}{pseudorandom number generator}%
\acro{ZFC}{Zermelo-Fraenkel set theory with the axiom of choice}
\end{acronym}

\end{appendix}

\end{document}